\renewcommand{\backref}[1]{}
\renewcommand{\backrefalt}[4]{%
\ifcase #1
\or $^{#2}$%
\else $^{#2}$%
\fi}
\newtheorem{theorem}{Theorem}[section]
\newtheorem*{remark}{Remark}
\newtheorem{proposition}[theorem]{Proposition}
\newtheorem{lemma}[theorem]{Lemma}
\newtheorem{claim}[theorem]{Claim}
\newtheorem{definition}[theorem]{Definition}
\newtheorem{question}[theorem]{Question}
  \newtheorem*{rep@theorem}{\rep@title}
  \newcommand{\newreptheorem}[2]{%
  \newenvironment{rep#1}[1]{%
  \def\rep@title{#2 \ref{##1}}%
  \begin{rep@theorem}}%
  {\end{rep@theorem}}}
\DeclareMathOperator{\db}{db}
\DeclareMathOperator*{\E}{\mathbb{E}}
\newcommand{\C}{\ensuremath{\mathbb{C}}}
\newcommand{\N}{\ensuremath{\mathbb{N}}}
\newcommand{\F}{\ensuremath{\mathbb{F}}}
\newcommand{\R}{\ensuremath{\mathbb{R}}}
\newcommand{\Z}{\ensuremath{\mathbb{Z}}}
\newcommand{\ket}[1]{|#1\rangle}
\newcommand{\bra}[1]{\langle#1|}
\newcommand{\braket}[2]{\langle#1\vert#2\rangle}
\newcommand{\eps}{\varepsilon} 
\renewcommand{\epsilon}{\varepsilon}
\DeclareMathOperator{\sign}{sgn}
\renewcommand{\epsilon}{\varepsilon} 
\newcommand{\T}{\mathbb T}
\newcommand{\norm}[1]{\lVert#1\rVert}
\newcommand{\bset}[1]{\{0,1\}^{#1}} 
\renewcommand{\Pr}{\mbox{\rm Pr}}
\newcommand{\Exp}{\mathbf{E}}
\newcommand{\beq}{\begin{equation}}
\newcommand{\eeq}{\end{equation}}
\newcommand{\beqn}{\begin{equation*}}
\newcommand{\eeqn}{\end{equation*}}
\newcommand{\beqr}{\begin{eqnarray}}
\newcommand{\eeqr}{\end{eqnarray}}
\newcommand{\beqrn}{\begin{eqnarray*}}
\newcommand{\eeqrn}{\end{eqnarray*}}
\newif\ifnotes\notesfalse
\definecolor{mygrey}{gray}{0.50}
\newcommand{\notename}[2]{{\textcolor{mygrey}{\footnotesize{\bf (#1:} {#2}{\bf ) }}}}
\newcommand{\noteswarning}{{\begin{center} {\Large WARNING: NOTES ON}\end{center}}}
\newcommand{\notename}[2]{{}}
\newcommand{\noteswarning}{{}}
\title{Bounding quantum-classical separations for classes of nonlocal games}
\author{
    \begin{tabular}{c}
        Tom Bannink\footnote{CWI, QuSoft, Science Park 123, 1098 XG Amsterdam, Netherlands. Supported by the Gravitation-grant NETWORKS-024.002.003 from the Netherlands Organisation for Scientific Research~(NWO).}\\
        \footnotesize{\texttt{bannink@cwi.nl}}
    \end{tabular}
    \and
    \begin{tabular}{c}
        Jop Bri\"et\textsuperscript{*}\footnote{Additionally supported by a VENI grant.}\\
        \footnotesize{\texttt{j.briet@cwi.nl}}
    \end{tabular}
    \and
    \begin{tabular}{c}
        Harry Buhrman\textsuperscript{*}\footnote{University of Amsterdam. Additionally supported by NWO Gravitation grant QSC, also supported by EU grant QuantAlgo.}\\
        \footnotesize{\texttt{buhrman@cwi.nl}}
    \end{tabular}
    \and
    \begin{tabular}{c}
        Farrokh Labib\textsuperscript{*}\\
        \footnotesize{\texttt{labib@cwi.nl}}
    \end{tabular}
    \and
    \begin{tabular}{c}
        Troy Lee\footnote{Centre for Quantum Software and Information,
School of Software, Faculty of Engineering and Information Technology,
University of Technology Sydney, Australia.
Part of this work was done while at the School for Physical and Mathematical Sciences, Nanyang Technological University and the Centre 
for Quantum Technologies, Singapore, supported by the Singapore National 
Research Foundation under NRF RF Award No. NRF-NRFF2013-13.}\\
        \footnotesize{\texttt{troyjlee@gmail.com}}
    \end{tabular}
}
\date{}
\begin{document}
\maketitle

\noteswarning

\begin{abstract}
    We bound separations between the entangled and classical values for several classes of nonlocal $t$-player games.
Our motivating question is whether there is a family of $t$-player XOR games for which the entangled bias is $1$ but for which the classical bias goes down to $0$, for fixed~$t$.
Answering this question would have important consequences in the study of multi-party communication complexity, as a positive answer would imply an unbounded separation between randomized communication complexity with and without entanglement.
Our contribution to answering the question is identifying several general classes of games for which the classical bias can not go to zero when the entangled bias stays above a constant threshold.
This rules out the possibility of using these games to answer our motivating question.
A previously studied set of XOR games, known not to give a positive answer to the question, are those for which there is a quantum strategy that attains value 1 using a so-called Schmidt state.
We generalize this class to mod-$m$ games and show that their classical value is always at least $\frac{1}{m} + \frac{m-1}{m} t^{1-t}$.
Secondly, for free XOR games, in which the input distribution is of product form, we show $\beta(G) \geq \beta^*(G)^{2^t}$ where~$\beta(G)$ and~$\beta^*(G)$ are the classical and entangled biases of the game respectively.
We also introduce so-called line games, an example of which is a slight modification of the Magic Square game, and show that they can not give a positive answer to the question either.
Finally we look at two-player unique games and show that if the entangled value is $1-\epsilon$ then the classical value is at least $1-\mathcal{O}(\sqrt{\epsilon \log k})$ where $k$ is the number of outputs in the game.
Our proofs use semidefinite-programming techniques, the Gowers inverse theorem and hypergraph norms.

\end{abstract}

\section{Introduction} \label{sec:intro}
The study of multiplayer games has been extremely fruitful in theoretical computer science across diverse areas including 
the study of complexity classes \cite{BGKW:1988}, hardness of approximation \cite{Khot:2002}, and communication 
complexity \cite{KLLRX:2015}.  They are also a great framework in which to study 
Bell inequalities \cite{Bell:1964} and analyze the nonlocal properties of 
entanglement.  A particularly simple 
kind of multiplayer game is an XOR game.  An XOR game 
$G = (f,\pi)$ between $t$-players is defined by a function 
$f: X_1 \times X_2 \times \cdots \times X_t \rightarrow \{0,1\}$ 
and a probability distribution $\pi$ over $X_1 \times \cdots \times X_t$.  An 
input 
$(x_1, \ldots, x_t) \in X_1 \times \cdots \times X_t$ is chosen by a referee
according to $\pi$, who then gives $x_i$ to player $i$.  Without communicating, player $i$ then 
outputs a bit $a_i \in \{0,1\}$ with the collective goal of the players being 
that $a_1 \oplus \cdots \oplus a_t = f(x_1, \ldots, x_t)$. In a classical XOR game, the players' 
strategies are deterministic.  In an XOR game with entanglement, players are allowed to share a quantum 
state and make measurements on this state to inform their outputs.  

As players can always win an XOR game with probability at least~$\frac{1}{2}$, it is common to 
study the \emph{bias} of an XOR game, the probability of winning minus the probability of losing.  We use 
$\beta(G)$ to denote the largest bias achievable by a classical protocol for the game $G$, and 
$\beta^*(G)$ to denote the best bias achievable by a protocol using 
shared entanglement for the game $G$.  

Our motivating question in this paper is: 
\begin{question}
\label{quest:motivation}
\emph{Is there a family of $t$-player XOR games $(G_n)_{n\in \N}$ such that $\beta^*(G_n) =1$ and 
$\beta(G_n) \rightarrow 0$ as $n \rightarrow \infty$?} 
\end{question}

This question has important implications for multi-party communication complexity.  For a function 
$f: X_1 \times \cdots X_t \rightarrow \{0,1\}$, let $R(f)$ denote the $t$-party randomized 
communication 
complexity of $f$ (in the number-in-the-hand model), and let $R^*(f)$ denote the 
$t$-party randomized communication complexity of $f$ 
where the parties are allowed to share entanglement.  A positive answer to 
Question~\ref{quest:motivation} gives a family of functions $(f_n)_{n\in \N}$ 
with $R^*(f_n) = O(1)$ and $R(f_n) = \omega(1)$, i.e.\ an unbounded separation 
between these two communication models.  

In the reverse direction, a family of functions $(f_n)_{n\in \N}$ 
with $R^*(f_n) = O(1)$ and $R(f_n) = \omega(1)$ gives a family of games 
$G_n = (f_n, \pi_n)$ with $\beta^*(G_n) \ge c$ for some constant $c$ and 
$\beta(G_n) \rightarrow 0$ as $n \rightarrow \infty$.  Thus there is a very 
close connection between Question~\ref{quest:motivation} and the existence of an unbounded 
separation between randomized communication complexity with and without 
entanglement.

For the two-player case, it is known that the answer to Question~\ref{quest:motivation} is 
negative.  It was observed by Tsirelson~\cite{Tsirelson:85b} that Grothendieck's 
inequality~\cite{Grothendieck:1953}, a fundamental result from Banach space theory, is equivalent to 
the assertion that $\beta^*(G) \le K_G \cdot \beta(G)$, where 
$K_G \le 1.78\ldots$ \cite{Krivine:1977, BMMN:2011} is Grothendieck's 
constant.

Linial and Shraibman \cite{LinalShraibman:2009} and Shi and Zhu \cite{ShiZhu:2008} realized that 
the XOR bias of a game $(f, \pi)$ can be used to lower bound the communication 
complexity of $f$, 
both in the randomized setting and the setting with entanglement.  Together with 
Grothendieck's 
inequality they used this to show that $R(f) =O(2^{2 R^*(f)})$ for any partial two-party 
function $f$.  Thus in the two-party case an unbounded communication separation 
is not possible between the randomized model with and without entanglement.  Raz 
has given 
an example of a partial function $f$ with $R(f) = 2^{\Omega(R^*(f))}$ 
\cite{Raz:1999}, thus the upper bound of Linial-Shraibman and Shi-Zhu is 
essentially optimal.

In the case of three or more parties, Question~\ref{quest:motivation} and the corresponding question of an  unbounded separation between the entangled and non-entangled communication complexity models remain open.
A striking result of Per\'ez-Garc\'ia et al.~\cite{Perez-Garcia:2008} shows that 
there is no analogue of Grothendieck's inequality in the three-player setting.
In particular, they showed that there exists an infinite family of three-player XOR games 
$(G_n)_{n\in \N}$ with the property that the ratio of the entangled and classical biases of~$G_n$ goes 
to infinity with~$n$.
This result was later quantitatively improved by Bri\"{e}t and 
Vidick~\cite{BrietVidick:2013}.
Both results rely crucially on non-constructive (probabilistic) methods, and in both 
separating examples the entangled bias $\beta^*(G_n)$ also goes to zero with increasing~$n$.  
These works leave open the question, posed explicitly 
in~\cite{BrietVidick:2013}, of whether there is such a family of games in which 
the entangled bias does not vanish with~$n$, but instead stays above a fixed 
positive threshold while the classical bias decays to zero.  Crucially, having a 
separation in XOR bias where $\beta^*(G_n)$ remains constant is what is needed 
to also obtain an unbounded separation between randomized communication 
complexity with and without entanglement.

\paragraph*{Our contribution to answering Question~\ref{quest:motivation}}
One approach to Question~\ref{quest:motivation} is to look at different classes of games 
and identify which ones could possibly lead to a positive answer.  

Per\'ez-Garc\'ia et al.~\cite{Perez-Garcia:2008} show that in any XOR game 
where the entangled strategy uses a GHZ state, there is a bounded gap 
between the classical and entangled bias: namely, the bias with a GHZ state in 
a $t$-player XOR game $G$ is at most $K_G (2\sqrt{2})^{t-1}\beta(G)$.  This 
bound is essentially tight as there are examples of $t$-player XOR games 
achieving a ratio between the GHZ state bias and classical bias of 
$\frac{\pi}{2}^t$ \cite{Zuk:1993}.  Bri\"{e}t et al.~\cite{Briet:2013} later 
extended 
the Grothendieck-
type inequality of Per\'ez-Garc\'ia et al.\ to a larger class of entangled 
states called Schmidt states (see Equation~\ref{eq:schmidt}).  Thus any 
game where there is a perfect strategy where the players share a Schmidt state 
cannot give a positive answer to Question~\ref{quest:motivation}.

Watts et al.\ \cite{Watts:2018} recently investigated 
Question~\ref{quest:motivation} and found that a $t$-player XOR game $G$ that 
is symmetric, i.e.\ invariant under the renaming of players, and where 
$\beta^*(G)=1$ always has a perfect entangled strategy where the players 
share a GHZ state.  Thus symmetric games also cannot give a positive 
answer to Question~\ref{quest:motivation}.

We further study games that have a perfect strategy where players share a 
GHZ or Schmidt state.  We do this for a generalization of XOR games called 
mod $m$ games.  In a mod $m$ game the players output an integer between $0$ and 
$m-1$ and the goal is for the sum of the outputs mod $m$ to equal a target 
value determined by their inputs.  We show that the classical advantage over 
random guessing is at least $\frac{m-1}{m}t^{1-t}$ in any $t$ player mod 
$m$ game that can be won perfectly by sharing a Schmidt state (see 
Theorem~\ref{thm:perfectghz}).

We show this by introducing \emph{angle games}, a class of games that can be 
won perfectly sharing a GHZ state and are the \emph{hardest} of all such 
games.  Thus a classical strategy in an angle game can be used to lower bound 
the winning probability of any mod $m$ game that has a perfect 
Schmidt state strategy.  

For small values of $t$ we can directly analyze angle games to give 
bounds that are sometimes tight.
One interesting consequence of our result is the following.  
The Mermin game $G$ is a 
three-party XOR game where by sharing a GHZ state players can play perfectly, 
$\beta^*(G)=1$, while classically $\beta(G)= \frac{1}{2}$.  We show that this 
is the maximal possible separation of any 3-party XOR game where $\beta^*(G)=1$ 
via a GHZ state.  In particular, this means that when one looks at the XOR 
repetition of the Mermin game the 
classical bias \emph{does not go down at all}.  

We rule out other types of games that could positively answer 
Question~\ref{quest:motivation} as well.  A $t$-player \emph{free} XOR game 
$G = (f,\pi)$ is a game where $\pi$ is a product distribution.  For such 
games we show that $\beta(G) \ge \beta^*(G)^{2^t}$, and thus they cannot be 
used for a positive answer to Question~\ref{quest:motivation}.

Another class of XOR games we consider are \emph{line games}, where the 
questions asked to the players are related by a geometric property.  An 
example of a line game is a slight modification of the Magic Square game \cite{Ito:2008}.  We show that 
line games cannot give a positive answer to Question~\ref{quest:motivation} 
either.

Finally, we look at extensions of Question~\ref{quest:motivation} beyond XOR 
games to more general classes of games like unique games \cite{Khot:2002}, which 
have been deeply studied because of their application in hardness of 
approximation.  For unique games we show that in fact if 
there is strategy with entanglement that can win a unique game perfectly, then 
there is a perfect classical strategy as well.  This can be compared with the 
result of Cleve et al.\ \cite{CHTW:2004} that if a two-player game with binary 
outputs has a perfect strategy with entanglement then it also has a perfect 
classical strategy.
More generally, we show that 
if the winning probability with entanglement is $1-\epsilon$ in a unique game 
with $k$ outputs, then there is 
a classical strategy that wins with probability $1- C\sqrt{\epsilon \log k}$.

In the next subsections, we discuss our results in more detail.
  
\subsection{Perfect Schmidt strategies for MOD games} \label{sec:introperfectghz}
A MOD-$m$ game is a generalization of XOR games to non-binary outputs.
A nonlocal game is a MOD-$m$ game if the players are required to answer with integers from 0 to $m-1$, and win if and only if the sum of their answers modulo~$m$ equals the target value determined 
by their inputs. We denote the optimal winning probability using classical strategies by $\omega(G)$, and we write $\omega^*(G)$ for the entangled winning probability. Random play in such a game ensures that the players can always win with probability at least~$\frac{1}{m}$.  As with XOR games, in a MOD-$m$ game one 
often considers the \emph{bias} given by the maximum amount by which the value can exceed~$\frac{1}{m}$, scaled to be in the $[0,1]$ range. The bias is $\beta(G) = \frac{m}{m-1}(\omega(G)-\frac{1}{m})$, and similar for the entangled version. This generalizes the definition given for XOR 
games above.

Define a $t$-partite \emph{Schmidt state} as a $t$-partite quantum state that can be written in the form
\begin{align}
\label{eq:schmidt}
    \ket{\psi} = \sum_{i=0}^{d-1} c_i \ket{e^{(1)}_i}\ket{e^{(2)}_i}\cdots\ket{e^{(t)}_i} ,
\end{align}
where $c_i > 0$ and where the $\ket{e^{(j)}_i}$ ($i = 0,1,...,d-1$) are orthogonal vectors in the $j$-th system. For $t=2$ any state can be written this way, something commonly known as the Schmidt decomposition. Note that the well-known GHZ state is a Schmidt state where all the $c_i$ are equal to $1/\sqrt{d}$.
In the context of nonlocal games, define a \emph{Schmidt strategy} as a quantum strategy that uses (only) a Schmidt state. We say a strategy is \emph{perfect} if it achieves winning probability 1.

We consider $t$-player MOD-$m$ games for which there is a perfect Schmidt strategy (``perfect Schmidt games'') and for such games we give lower bounds on the classical winning probabilities.
One particular set of games with this property is described by Boyer~\cite{Boyer:2004}. Their entangled value is 1 but their classical value goes to 0 as the number of players goes to infinity.
The authors of~\cite{Watts:2018} define a closely-related class of games called \emph{noPREF games}. This set of games is equal to the set of perfect Schmidt games when $m=2$ and the distribution on the inputs is uniform. In \cite{Watts:2018} it is shown that checking whether a game is in this class can be done in polynomial time. Furthermore, for \emph{symmetric} $t$-player XOR games they show that a game has entangled value 1 if and only if it falls in this class of perfect Schmidt games. They also provide an explicit non-symmetric XOR game with entangled value 1 that is not in this class.
We introduce a $t$-player MOD-$m$ game called the \emph{uniform angle game}, denoted $\mathrm{UAG}_{t,m}$ (defined in Section~\ref{sec:uag}, Definition~\ref{def:uag}) for which there is a perfect Schmidt strategy and show a lower bound on the classical winning probability.
\begin{theorem} \label{thm:perfectghz}
    Any $t$-player MOD-$m$ game $G$ with perfect Schmidt strategy satisfies $\omega(G) \geq \omega(\mathrm{UAG}_{t,m})$. Furthermore we have $\beta(\mathrm{UAG}_{t,m}) \geq t^{1-t}$.
\end{theorem}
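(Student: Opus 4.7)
The plan is to prove the two inequalities separately.

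For the first inequality $\omega(G) \geq \omega(\mathrm{UAG}_{t,m})$, the strategy is to show that the existence of a perfect Schmidt strategy for $G$ forces the game to have a particular combinatorial structure that makes it no harder than the universal uniform angle game. Starting from a perfect Schmidt strategy with state $\ket{\psi} = \sum_{i=0}^{d-1} c_i \ket{e^{(1)}_i}\cdots\ket{e^{(t)}_i}$ and POVMs $\{E^{(j)}_{a|x_j}\}$, perfection together with the rank-$d$ Schmidt structure forces the players' measurement outcomes to be perfectly correlated through the Schmidt index. I would first show that each player's POVM can be assumed, without loss of generality, to be projective and diagonal in the Schmidt basis, namely $E^{(j)}_{a|x_j} = \sum_{i : \phi_j(x_j)(i) = a} \ketbra{e^{(j)}_i}{e^{(j)}_i}$ for some ``angle function'' $\phi_j(x_j): \{0,\ldots,d-1\} \to \Z_m$. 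The perfection condition then translates into the combinatorial identity
\[
    \sum_{j=1}^t \phi_j(x_j)(i) \equiv f(x_1,\ldots,x_t) \pmod{m}
\]
for every $i \in \{0,\ldots,d-1\}$ and every $(x_1,\ldots,x_t)$ in the support of the input distribution. Given this structure, any classical strategy for $\mathrm{UAG}_{t,m}$ can be compiled into a classical strategy for $G$ of at least the same winning probability, by having each player compute the tuple $\phi_j(x_j)$ from $x_j$ and then play as in the UAG; hence the inequality follows.

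For the second inequality $\beta(\mathrm{UAG}_{t,m}) \geq t^{1-t}$, I would exhibit an explicit classical strategy. The factor $t^{1-t}$ has the shape of a ``one player compensates, the others follow'' probabilistic argument: concretely, each player can, using only local randomness on $\{0,\ldots,d-1\}$, guess a single index and output the corresponding angle (or a fixed default value if chosen as a ``follower''). A symmetric averaging over the $t$ possible role assignments, combined with the uniform structure of the input distribution of UAG, then delivers the claimed bound via a counting argument that controls the probability that all local guesses are mutually consistent.

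The main obstacle I foresee is in the first part, and specifically in the structural lemma that the POVMs can be diagonalized in the Schmidt basis. This requires carefully exploiting equality conditions in Cauchy--Schwarz-type estimates applied to the perfect winning probability, and handling possible degeneracies among the Schmidt coefficients $c_i$. Once this structural lemma is in place, the compilation of UAG strategies into $G$-strategies is essentially immediate. For the second part, the principal difficulty is identifying a strategy achieving exactly $t^{1-t}$ rather than something weaker; but the clean form of the bound suggests a direct combinatorial/probabilistic argument rather than a tight optimization.
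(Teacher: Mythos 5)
Your proposed structural lemma for the first inequality is false, and it would prove too much. You claim that a perfect Schmidt strategy forces each measurement to be projective \emph{and diagonal in the Schmidt basis}, i.e.\ $E^{(j)}_{a\mid x_j} = \sum_{i:\phi_j(x_j)(i)=a}\ketbra{e^{(j)}_i}{e^{(j)}_i}$ with $\phi_j(x_j):\{0,\dots,d-1\}\to\Z_m$, and that perfection then yields $\sum_j\phi_j(x_j)(i)\equiv f(x)\pmod m$ for every $i$ and every input. If that identity held, each player could simply fix $i=0$ and output $\phi_j(x_j)(0)$, which is a \emph{perfect classical} strategy, so you would have shown $\omega(G)=1$ for every perfect Schmidt game. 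That contradicts known examples: the Mermin game is a perfect-GHZ $3$-player XOR game with $\omega(G)=3/4$. The measurements in that game are Pauli $X$ and $Y$ measurements, which are not diagonal in the computational (Schmidt) basis, so the diagonalization step fails. The paper does not diagonalize the measurements. Instead it sets $U^{(j,x_j)}=\sum_a\omega_m^a P^{(j,x_j)}_a$, observes that perfection forces $\bra{\psi}U^{(1,x_1)}\otimes\cdots\otimes U^{(t,x_t)}\ket{\psi}=\omega_m^{M(x)}$, expands in the Schmidt basis, and then exploits the \emph{equality case} of Cauchy--Schwarz with weights $c_ic_j$. This shows the nonzero support pattern of all the $U^{(j,x_j)}_{ij}$ coincides, and that for any such nonzero entry $(i,j)$ the arguments satisfy $\sum_j\arg(U^{(j,x_j)}_{ij})\equiv\frac{2\pi}{m}M(x)$. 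Crucially, these arguments are \emph{arbitrary real angles}, not $\Z_m$-elements; this is exactly why the reduction lands on the continuous uniform angle game $\mathrm{UAG}_{t,m}$ (via a separate ``hardest distribution'' claim that uses shared-randomness translations) rather than on a trivially winnable game.

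For the second inequality your sketch is also not on the right track. You describe local randomness over the Schmidt dimension $d$ and an averaging over role assignments, but $\mathrm{UAG}_{t,m}$ has no Schmidt state or index $d$: it is a purely classical game over angles. The paper's strategy is: on input $\phi_i\in[0,1)$, player $i$ computes $x_i=\lfloor t\phi_i\rfloor\in\{0,\dots,t-1\}$; the players share $t-1$ uniform random elements of $\{0,\dots,t-1\}$ intended to guess $x_1,\dots,x_{t-1}$; with probability $t^{1-t}$ the guesses are all correct, in which case the $t$-th player can recover the target $l=\lceil\sum_i\phi_i\rceil\bmod m$ exactly while the others output $0$. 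This yields bias exactly $t^{1-t}$. The ``one player compensates'' intuition is in the right spirit, but the discretization $\lfloor t\phi_i\rfloor$ and the shared-randomness guessing are the concrete ingredients your sketch is missing.
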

For $t=3,m=2$ ($3$-player XOR games) we have $\omega(\mathrm{UAG}_{3,2}) = 3/4$. In Section~\ref{sec:perfectghz} we provide bounds on $\omega(\mathrm{UAG}_{t,m})$ for other values of $t,m$.

Let the inputs to a game come from a set $X=X_1 \times X_2 \times ... \times X_t$ where $X_i$ is the set of inputs for the $i$-th player. We say a game is \emph{total} when all elements of $X$ have a non-zero probability of being asked (sometimes also called having \emph{full support}), similar to total functions in the setting of communication complexity. On the other hand, we say that a game has a \emph{promise} on the inputs when it is not total. For the class of perfect Schmidt games we show that total games are trivial.

\begin{lemma} \label{lemma:perfectghztotal}
    When a $t$-player MOD-$m$ game $G$ with perfect Schmidt strategy is total then $\omega(G)=1$.
\end{lemma}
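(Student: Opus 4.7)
The plan is to show that the target function $f$ of a total MOD-$m$ game with a perfect Schmidt strategy must decompose as $f(x_1,\dots,x_t) \equiv a_1(x_1) + \cdots + a_t(x_t) \pmod{m}$ for some single-variable functions $a_j : X_j \to \Z/m\Z$; having player $j$ deterministically output $a_j(x_j)$ then wins the game with certainty, so $\omega(G)=1$. First, since $\bigotimes_j M^{(j)}_{x_j, a_j}$ is positive semidefinite and has zero expectation on $\ket{\psi}$ whenever $\sum_j a_j \not\equiv f(x) \pmod{m}$ (by perfect winning, which by totality applies to every input $x$), I obtain the stronger vector identity $\bigotimes_j M^{(j)}_{x_j, a_j}\ket{\psi} = 0$. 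Applying Naimark dilation independently to each player, with ancillas in fixed product states $\ket{0}_{E_j}$, I may assume without loss of generality that each POVM $\{M^{(j)}_{x_j,a}\}_a$ is projective; the joint state remains Schmidt, with new orthonormal basis vectors $\ket{e^{(j)}_i} \otimes \ket{0}_{E_j}$ and the same strictly positive coefficients $c_i$. For each $m$-th root of unity $\omega$, the generalised observable $A^{(x_j)}_{j,\omega} := \sum_a \omega^a M^{(j)}_{x_j,a}$ is then unitary, and Fourier-summing the vector identities (using $\sum_a M^{(j)}_{x_j,a} = I$) yields the key eigenvalue relation
\begin{equation*}
\bigotimes_{j=1}^{t} A^{(x_j)}_{j,\omega}\,\ket{\psi} \;=\; \omega^{f(x)}\, \ket{\psi}, \qquad \text{for every } x \in X_1 \times \cdots \times X_t.
\end{equation*}

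Next, fix a primitive $m$-th root $\omega$ and compare two inputs $x = (x_1, x_2, \dots, x_t)$ and $x' = (x_1', x_2, \dots, x_t)$ differing only in player $1$'s coordinate. Writing $A = A^{(x_1)}_{1,\omega}$, $A' = A^{(x_1')}_{1,\omega}$, and $B = \bigotimes_{j \geq 2} A^{(x_j)}_{j,\omega}$, an appropriate scalar combination of the two eigenvalue equations gives $\bigl(\omega^{f(x')}A - \omega^{f(x)}A'\bigr) \otimes B\, \ket{\psi} = 0$. Since $B$ is unitary, applying $I \otimes B^\dagger$ on the left eliminates the other players and yields $\bigl(\omega^{f(x')}A - \omega^{f(x)}A'\bigr) \otimes I^{\otimes (t-1)}\, \ket{\psi} = 0$. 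Viewing $\ket{\psi}$ in its player-$1$-versus-rest bipartite Schmidt form $\sum_i c_i \ket{e^{(1)}_i} \otimes \ket{f_i}$ (with orthonormal $\ket{f_i}$ on the rest and all $c_i > 0$), linear independence of the $\ket{f_i}$ forces $A\ket{e^{(1)}_i} = \omega^{f(x) - f(x')}\, A'\ket{e^{(1)}_i}$ for every Schmidt index $i$. The operators $A, A'$ and the Schmidt basis vectors depend only on $(x_1, x_1')$, whereas $\omega^{f(x) - f(x')}$ a priori also depends on $(x_2, \dots, x_t)$, so consistency forces this scalar to be independent of $(x_2, \dots, x_t)$; primitivity of $\omega$ then gives that $f(x_1, x_2, \dots, x_t) - f(x_1', x_2, \dots, x_t) \pmod{m}$ depends only on the pair $(x_1, x_1')$.

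The same argument applied symmetrically with each other player in the role of player $1$ shows that every coordinate-wise discrete derivative of $f$ is a function of that coordinate alone. Fixing a basepoint $(x_1^*, \dots, x_t^*)$ and changing one coordinate at a time then expresses $f(x_1, \dots, x_t) - f(x_1^*, \dots, x_t^*) \pmod{m}$ as a sum $\sum_j a_j(x_j)$ of single-variable functions, yielding the desired decomposition after absorbing the constant into $a_1$. The main obstacle is preserving the Schmidt property when converting POVMs to projective measurements, but the per-player Naimark dilation with product ancillas handles this cleanly, and the bipartite Schmidt-rank step then goes through because all Schmidt coefficients are strictly positive.
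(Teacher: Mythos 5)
Your proof is correct and, while it rests on the same underlying mechanism as the paper — using that on a Schmidt state with strictly positive coefficients, a perfect winning probability forces the measurement unitaries into a rigid multiplicative structure — it packages the argument differently. The paper first reduces any perfect Schmidt strategy to an \emph{angle game} (Lemma~\ref{lemma:reducetoangle}), using Cauchy--Schwarz at the level of matrix entries $U^{(j,x_j)}_{ij}$ to extract per-player phases satisfying the angle-game promise, and then proves separately that any angle game with connected inputs — which includes all total games — has classical value~$1$ (Lemma~\ref{lemma:connected}), by shifting to a reference input and propagating along the connection graph. You instead work directly at the operator level: the vector identity $\bigotimes_j M^{(j)}_{x_j,a_j}\ket{\psi}=0$ for losing answer tuples, Fourier-summed, gives the eigenvalue relation $\bigotimes_j A^{(x_j)}_{j,\omega}\ket{\psi}=\omega^{f(x)}\ket{\psi}$, and a bipartite Schmidt reading-off with one player isolated shows that each coordinatewise discrete derivative of $f$ depends only on that coordinate, so $f$ is additively separable modulo $m$. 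Your telescoping from a basepoint plays exactly the role that the connection-graph propagation does in the paper. What your route buys is directness for the total case — you obtain the classical strategy immediately from the separable form of $f$, without the intermediate angle-game abstraction — plus an explicit Naimark step, justifying the ``WLOG projective'' assumption that the paper leaves implicit; what the paper's route buys is modularity, since the angle-game reduction is reused (Claims~\ref{claim:hardestdistribution} and~\ref{claim:uaglowerbound}) to get the quantitative bound in Theorem~\ref{thm:perfectghz}, and its connectivity lemma covers non-total games whose $(t-1)$-connection graph happens to be connected, whereas your coordinate-at-a-time telescoping as written needs all intermediate inputs to be in the support, i.e.\ totality.
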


\subsection{Free XOR games}

In this subsection we identify two types of games, namely \emph{free games} and \emph{line games}, for which either the ratio of the entangled and classical biases is small, or the entangled bias itself is small.  Thus these games 
will not be able to give a positive answer to Question~\ref{quest:motivation}.
Free games are a general and natural class of games in which the players' questions are independently distributed.
Line games appear to be less studied (see below for their definition), but turn out to be relevant in the context of parallel repetition (also see below).
The main idea behind these results is that a large entangled bias implies that the games are in a sense far from random.
This is quantified by the magnitude of certain norms of the game tensors.
The particular norms of interest here are related to norms used in Gowers' celebrated hypergraph- and Fourier-analytic proofs of Szemer\'edi's Theorem.
A crucial fact of these norms is that they are large if and only if there is ``correlation with structure'', the opposite of what one would expect from randomness.
We show that this structure can be turned into good classical strategies, thus establishing a relationship between the entangled and classical biases.


\begin{theorem}[Polynomial bias relation for free XOR games]\label{thm:xor_biases}
For integer~$t\geq 2$ and any free $t$-player XOR game with entangled bias~$\beta$, the classical bias is at least $\beta^{2^t}$.
\end{theorem}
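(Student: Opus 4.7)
The plan is to sandwich both biases between (powers of) a common Gowers-style hypergraph uniformity norm of the sign tensor $T(x) = (-1)^{f(x)}$. With $\pi = \pi_1 \otimes \cdots \otimes \pi_t$ the product input distribution, define the box norm
\[
\|T\|_\square^{2^t} \;:=\; \E_{x, x' \sim \pi} \prod_{\epsilon \in \{0,1\}^t} T(x^\epsilon), \qquad x_i^\epsilon = \begin{cases} x_i, & \epsilon_i = 0, \\ x_i', & \epsilon_i = 1. \end{cases}
\]
The theorem will follow from the two inequalities $\beta^*(G)^{2^t} \leq \|T\|_\square^{2^t}$ and $\|T\|_\square^{2^t} \leq \beta(G)$.

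For the entangled upper bound, I would iterate the Grothendieck--Cauchy--Schwarz trick $t$ times. Starting from $\beta^*(G) = \E_x T(x) \varphi(x)$ with $\varphi(x) = \langle \psi | \bigotimes_i A_i(x_i) | \psi\rangle$, a Cauchy--Schwarz step on a coordinate $x_i$ squares the bias, introduces a primed copy $x_i'$, and produces a residual quantum factor which, after state-doubling via $\langle \psi | X | \psi\rangle \langle \psi | Y | \psi\rangle = \langle \psi^{\otimes 2} | X \otimes Y | \psi^{\otimes 2}\rangle$, takes the form of a single inner product bounded by $1$ in absolute value (since the $A_i$ are $\pm 1$-observables). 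This residual is absorbed via a further layer of Cauchy--Schwarz, exactly as in the two-player Tsirelson--Grothendieck argument (where $\varphi(x_1, x_2) = \langle u_{x_1}, v_{x_2}\rangle$ and the residual $\langle v_{x_2}, v_{x_2'}\rangle$ is absorbed by $\E_{x_2, x_2'} \langle v, v'\rangle^2 \leq 1$). The freeness of $\pi$ makes each coordinatewise CS step valid. After $t$ iterations the quantum contribution is entirely absorbed and the surviving classical term is precisely $\|T\|_\square^{2^t}$.

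For the classical lower bound, I would exploit the Gowers-inverse principle: a large hypergraph norm implies correlation with a product of per-coordinate signs. Averaging over the primed variables shows there is $x' = (x_1', \ldots, x_t')$ for which $\E_x T(x) \prod_{\epsilon \neq \vec 0} T(x^\epsilon) \geq \|T\|_\square^{2^t}$. For $t = 2$ the partial product factors directly as $c \cdot a_1(x_1) a_2(x_2)$ with $c = T(x_1', x_2') \in \{\pm 1\}$, $a_1(x_1) = T(x_1, x_2')$, $a_2(x_2) = T(x_1', x_2)$, and the bound follows immediately (the sign of $c$ is absorbed by flipping one $a_i$). For $t \geq 3$ the partial product contains higher-order ``interaction'' factors $T(x^\epsilon)$ depending on several $x_i$'s simultaneously, and one extracts a per-coordinate product strategy via the Gowers inverse theorem for multilinear forms on product domains, as alluded to in the introduction.

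The main obstacle is the classical lower bound for $t \geq 3$: the interaction terms in the box norm's partial product do not factor by coordinate, so the Gowers inverse theorem (the nontrivial combinatorial engine referenced in the abstract) is needed to convert a large hypergraph norm into a product-sign strategy without losing more than the $2^t$-th power. The entangled-to-Gowers direction, while notationally heavier for $t \geq 3$, is essentially a mechanical iteration of the Grothendieck--CS absorption pattern, relying only on the product form of $\pi$ and the $\pm 1$ structure of the observables.
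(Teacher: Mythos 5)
Your proposal swaps in the box (octahedral) norm $\|T\|_\square$ where the paper uses a \emph{different} hypergraph norm $\|T\|_{H(t)}$, and this substitution creates a real gap on the classical side. The paper's $H(t)$ is obtained by the doubling operation $\db_1(\db_2(\cdots\db_t(e)))$, producing a $2$-regular hypergraph with $|V_i|=2^{t-1}$ and $2^t$ edges; your $\|T\|_\square$ corresponds to the complete $t$-partite hypergraph $K^{(t)}_{2,\ldots,2}$, with $|V_i|=2$ and every vertex in $2^{t-1}$ edges. They agree at $t=2$ but are genuinely different for $t\geq 3$. Your entangled-side bound $\beta^*\leq\|T\|_\square$ appears salvageable (the first Cauchy--Schwarz absorbs $A_2,\ldots,A_t$ into a unit vector, and the residual $\langle A_1(x_1)\psi,A_1(x_1')\psi\rangle$ gets absorbed in the second step, after which the remaining coordinates are doubled by purely scalar CS), though as written the bookkeeping ``CS on $x_i$ introduces a primed copy $x_i'$'' obscures which variables actually get doubled.

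The gap is the classical lower bound $\|T\|_\square^{2^t}\leq\beta(G)$ for $t\geq 3$, which you correctly identify as needing more, but you reach for a Gowers-inverse-type theorem to resolve it. Two problems. First, no inverse theorem of that flavor would give the clean polynomial dependence $\beta^{2^t}$ that the theorem asserts: inverse theorems yield highly nonpolynomial (often tower-type) quantitative losses. Second, and more to the point, the paper never invokes the Gowers inverse theorem for free XOR games -- that tool is used only for line games in Section~5. For free games the paper instead uses $H(t)$ precisely because its combinatorics make the classical extraction \emph{elementary}: $H(t)$ is $2$-regular, and Proposition~4.4(3) guarantees that for a fixed edge $e^*=(v_1^*,\ldots,v_t^*)$, its $t$ neighboring edges $e_1^*,\ldots,e_t^*$ are pairwise disjoint and each $e_i^*$ meets $e^*$ only in $v_i^*$, while every other edge of $H(t)$ avoids all the $v_i^*$. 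Fixing the maps on $V_i\setminus\{v_i^*\}$ and pigeonholing, the surviving product over edges factors exactly as $T(x_1^*,\ldots,x_t^*)\prod_i a_i(x_i^*)$ times a constant, which is a classical strategy. In your box-norm setting that factorization fails: $\prod_{\epsilon\neq\vec 0}T(x^\epsilon)$ contains ``interaction'' terms depending on two or more unprimed coordinates (e.g.\ $T(x_1,x_2,x_3')$ at $t=3$), and there is no edge $e^*$ whose removal leaves a coordinate-factoring product. So the box norm cannot support the pigeonhole argument, and the route you propose through the Gowers inverse is both much heavier and quantitatively insufficient for the stated bound.
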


This result may be considered as an analogue of a well-known result on quantum query algorithms for total functions.
It is shown in~\cite{Beals:2001} that the bounded-error quantum and classical query complexities of total functions are polynomially related.

\subsection{Line games}

\emph{Line games} are not free, but have a  simple geometric structure.
For a finite field~$\F$ of characteristic at least~$t$ and positive integer~$n$, a $t$-player line game is given by a map~$\tau:\F^n\to\bset{}$.
In the game, the referee independently samples two uniformly random points~$x,y\in \F^n$ and sends the point $x + (i-1)y$ to the $i$th player.
The players win the game if and only if the XOR of their answers equals~$\tau(y)$.
In other words, the players' questions correspond to consecutive points (or an arithmetic progression) on a random affine line through~$\F^n$ and the winning criterion depends only on the direction of the line.
Refer to this as a line games \emph{over~$\F^n$}.

A small example of a line game can be obtained from a slight modification of the three-player Magic Square game, which was analyzed in~\cite{Ito:2008}.
The line game is played over the plane~$\F_3^2$ and the predicate is zero only on the horizontal lines (with $y \in \{(1,0), (2,0)\}$.
In the Magic Square game, the referee restricts only to horizontal and vertical lines.\footnote{Though this is not the typical description of the game, it is easily seen to be equivalent.}

\begin{theorem}\label{thm:line_games}
For any $\eps \in (0,1]$, integer~$t\geq 2$ and finite field~$\F$ of characteristic at least~$t$, there exists a~$\delta(\eps,t,\F) \in (0,1]$ such that the following holds. 
For any positive integer~$n$ and any $t$-player line game over~$\F^n$ with entangled bias~$\eps$, the classical bias is at least~$\delta(\eps,t,\F)$.
\end{theorem}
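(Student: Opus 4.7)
The plan is to extract low-degree polynomial structure from the predicate $g(y) := (-1)^{\tau(y)}$ whenever $\beta^*(G)$ is large, and turn it into an explicit classical strategy. The argument proceeds in three stages: (i) bound $\beta^*(G)$ by a Gowers uniformity norm of $g$; (ii) apply the Gowers inverse theorem to get polynomial correlation; (iii) build classical strategies from the polynomial.

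\textbf{Stage (i).} Write $\beta^*(G) = \sup \E_{x,y} g(y) \langle \psi | A^{(1)}_x \otimes A^{(2)}_{x+y} \otimes \cdots \otimes A^{(t)}_{x+(t-1)y} | \psi \rangle$ and apply a cascade of operator-level Cauchy--Schwarz inequalities, modeled on the classical generalized von Neumann inequality. Each step introduces a fresh difference parameter $h_j$, pairs up a player's observable with itself shifted by $(i-1)h_j$, collapses the pair using $(A^{(i)})^2 = I$, and invokes the $x$-translation invariance of line games (the predicate sees only $y$) to absorb the resulting shift into the averaging over $x$. After $t-1$ such steps all operators are eliminated, and the remaining scalar quantity is an average of products of translates of $g$ over $(t-1)$-dimensional parallelepipeds, i.e.\ $\|g\|_{U^{t-1}(\F^n)}^{2^{t-1}}$. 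This should yield a bound of the form $\beta^*(G)^{2^{t-1}} \leq \|g\|_{U^{t-1}(\F^n)}^{2^{t-1}}$ (up to normalization), so $\beta^*(G) \geq \eps$ forces $\|g\|_{U^{t-1}} \geq \eps$.

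\textbf{Stage (ii).} With $\mathrm{char}(\F) \geq t$, the classical-polynomial form of the Gowers inverse theorem applies: assuming $\|g\|_{U^{t-1}(\F^n)} \geq \eps$, there is a polynomial $p : \F^n \to \F$ of degree at most $t-2$ and a non-trivial additive character $\chi$ of $\F$ with
\[
    \bigl| \E_y\, g(y)\, \overline{\chi(p(y))} \bigr| \geq c(\eps, t, \F) > 0.
\]
\textbf{Stage (iii).} Find polynomials $q_1, \ldots, q_t : \F^n \to \F$ of degree at most $t-2$ satisfying $\sum_{i=1}^t q_i(x + (i-1)y) = p(y)$ as an identity in $x, y$. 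Expanding both sides in a monomial basis and matching coefficients reduces this to a linear system whose constraint matrix is the Vandermonde matrix with nodes $0, 1, \ldots, t-1 \in \F$; these are distinct precisely because $\mathrm{char}(\F) \geq t$, so the system is solvable. Setting $a_i(z) := \chi(q_i(z))$ gives (a priori complex-valued) strategies with
\[
    \E_{x,y} g(y) \prod_{i=1}^t a_i\bigl(x + (i-1)y\bigr) = \E_y g(y)\chi(p(y)),
\]
of absolute value at least $c(\eps, t, \F)$. A standard rounding step converts these into $\pm 1$-valued strategies, losing only a factor depending on $|\F|$ and yielding a classical bias $\delta(\eps, t, \F) > 0$.

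The main obstacle is stage (i): the entangled-case Cauchy--Schwarz cascade must juggle operators across different tensor factors, and ensuring each step genuinely eliminates an operator --- rather than producing an unwieldy operator-valued residue --- relies critically on the interplay of the $x$-translation symmetry of line games and the identity $(A^{(i)})^2 = I$. A secondary point is that the Gowers inverse theorem's quantitative dependence on $\eps$ is of tower type for large $t$, so $\delta(\eps, t, \F)$ is small but still positive for fixed parameters, as required.
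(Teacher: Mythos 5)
Your plan matches the paper's proof almost step for step: bound $\beta^*(G)$ by a Gowers uniformity norm of the predicate via an operator-level generalized von Neumann inequality, apply the Gowers inverse theorem to extract a polynomial phase, split the polynomial across the $t$ line points via a Vandermonde system, and round the resulting unimodular strategies to $\pm1$-valued ones. The paper's machinery is Proposition~\ref{prop:GvN} (built on Lemma~\ref{lemma:2ndCSG} and the change-of-variables Proposition~\ref{prop:equivalentCScomplexity}), Theorem~\ref{thm:GowersInverse}, Lemma~\ref{lemma:Shravas-Jop}, and Lemma~\ref{lemma:neatDescriptionOfComplexUnit}.

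There is, however, a genuine off-by-one error running through the whole argument. The line game has $t+1$ affine forms $\psi_0(x,y)=y$ and $\psi_i(x,y)=x+(i-1)y$ ($i=1,\dots,t$), and this system has Cauchy--Schwarz complexity $t-1$, so the correct bound from the generalized von Neumann inequality is $\beta^*(G)\leq \Vert g\Vert_{U^t(\F^n)}$, and the Gowers inverse theorem then produces a polynomial of degree at most $t-1$, not $t-2$. Your claimed bound $\beta^*(G)\leq\Vert g\Vert_{U^{t-1}(\F^n)}$ is simply false: already for $t=2$ it would read $\beta^*(G)\leq \Vert g\Vert_{U^1}=\vert\E g\vert$, but one can take a balanced $g$ (with $\E g=0$) having a large Fourier coefficient, so that the classical (and hence entangled) bias is bounded below while $\Vert g\Vert_{U^1}=0$. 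The bookkeeping mistake is in Stage~(i): even if you apply Cauchy--Schwarz only $t-1$ times (peeling off one player per step, with one final direct norm bound), each squaring inside an absolute value injects an additional cube dimension, so the scalar left over sits on a $t$-dimensional parallelepiped, not a $(t-1)$-dimensional one.

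A second, smaller issue: the described elimination mechanism (``pairs up a player's observable with itself shifted by $(i-1)h_j$ and collapses the pair using $(A^{(i)})^2=I$'') does not actually implement the cascade. In the operator Cauchy--Schwarz step one bounds $\Vert\E_g B(g)B(g)^*\Vert_{\mathrm{op}}\leq 1$ for the operator being peeled off, while the remaining players' observables get conjugated against their own shifts (producing terms like $A(g+h)A(g)^*$ which are \emph{not} the identity). Making this work for line games requires a linear change of variables that renders the $k$-th group of observables independent of the $k$-th fresh coordinate, which is precisely what the paper's Proposition~\ref{prop:equivalentCScomplexity} and Lemma~\ref{lemma:2ndCSG} supply. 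You correctly flag Stage~(i) as the main technical obstacle; the argument as sketched does not yet surmount it. Once the norm index is corrected to $U^t$ and the degree to $t-1$, Stages~(ii) and~(iii) go through exactly as you describe and align with the paper's Lemma~\ref{lemma:Shravas-Jop} and the rounding Lemma~\ref{lemma:neatDescriptionOfComplexUnit}.
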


Note that in the above result, the value of the classical bias is independent of the dimension~$n$ of the vector space determining the players' question sets.

While it is not relevant to Question~\ref{quest:motivation}, the proof techniques used for Theorem~\ref{thm:line_games} allow us to prove a parallel repetition theorem for a class of games that include line games. It is known that the value of free games and so-called anchored games decays exponentially under parallel repetition. Dinur et al.~\cite{Dinur:2016} identified a general criterion of multi-player games to behave like this, encompassing free and anchored games.
They showed that it is sufficient for a certain graph that can be obtained from a game to be expanding, a well-known pseudorandom property that gives a measure of graph connectivity.
Line games do not belong to this class, as their graphs are not even connected.
However, we show that if a map~$\tau:\F^n\to\bset{n}$ is pseudorandom in a different sense, then a line game defined by~$\tau$ has exponential decaying value under parallel repetition.
More generally, we show that this is the case for a family of XOR games over an arbitrary finite abelian group~$\Gamma$. 
These games are given by a positive integer~$m$, a family of affine linear maps~$\psi_0,\dots,\psi_t:\Gamma^m\to \Gamma$ such that no two are multiples of each other, and a ``game map'' $\rho:\Gamma\to\bset{}$.
In the game, the referee samples a uniform random element~$x$ from~$\Gamma^m$ and sends the group element $\psi_i(x)$ to the $i$th player.
The winning criterion is given by~$\rho(\psi_0(x))$.
The relevant notion of pseudoranomness is quantified by the Gowers $t$-uniformity norm of the map~$(-1)^\rho: x\mapsto (-1)^{\rho(x)}$, denoted~$\|(-1)^{\rho}\|_{U^t}$.

\begin{lemma}\label{lemma:parallelRepetition}
Let~$m,t$ be positive integers and let~$\Gamma$ be a finite abelian group.
Let $\psi_0,\dots,\psi_t:\Gamma^m\to \Gamma$ be affine linear maps such that no two are multiples of each other and let $\rho:\Gamma\to\bset{}$.
Let~$G$ be the~$t$-player XOR game given by the system $\{\psi_0,\dots,\psi_t,\rho\}$.
Then, for every positive integer~$k$, 
\begin{align*}
    \omega(G^k) \leq \Big(\frac{1 + \|(-1)^\rho\|_{U^t}}{2}\Big)^k.
\end{align*}
\end{lemma}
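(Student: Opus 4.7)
The plan is to decompose the $k$-fold winning probability as a signed sum of biases indexed by subsets of $[k]$, and to control each bias using the generalized von~Neumann inequality for Gowers uniformity norms.

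Set $x = (x^{(1)},\ldots,x^{(k)}) \in (\Gamma^m)^k$. Without loss of generality I restrict to deterministic strategies: player~$i \in [t]$ uses a function $A_i: \Gamma^k \to \{-1,+1\}^k$ on input $\Xi_i(x) := (\psi_i(x^{(1)}),\ldots,\psi_i(x^{(k)}))$, with $j$-th coordinate~$A_i^{(j)}$. Let $s_j := (-1)^{\rho(\psi_0(x^{(j)}))}\prod_{i=1}^t A_i^{(j)}(\Xi_i(x)) \in \{-1,+1\}$, so that $s_j = +1$ precisely when copy~$j$ is won. Expanding the product $\prod_j (1+s_j)/2$ gives
\begin{align*}
\omega(G^k) \;=\; \E\Big[\prod_{j=1}^k \tfrac{1+s_j}{2}\Big] \;=\; \frac{1}{2^k}\sum_{S \subseteq [k]} T_S, \qquad T_S := \E\Big[\prod_{j \in S} s_j\Big].
\end{align*}

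The goal then reduces to showing $|T_S| \leq \|(-1)^\rho\|_{U^t}^{|S|}$ for every $S \subseteq [k]$. Define $F_S(z) := \prod_{j\in S}(-1)^{\rho(z_j)}$ and $B_i^S(w) := \prod_{j\in S} A_i^{(j)}(w)$, so that $T_S = \E_x[F_S(\Xi_0(x))\prod_{i=1}^t B_i^S(\Xi_i(x))]$. Identifying $\Gamma^{mk}$ with $(\Gamma^k)^m$, each $\Xi_i$ becomes an affine linear map $(\Gamma^k)^m \to \Gamma^k$ with the same linear coefficients as~$\psi_i$; in particular, pairwise non-proportionality of the $\psi_i$ lifts to pairwise non-proportionality of the $\Xi_i$. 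The system $\{\Xi_0,\ldots,\Xi_t\}$ of $t+1$ pairwise non-proportional affine forms therefore has Cauchy--Schwarz complexity at most~$t-1$, and the generalized von~Neumann inequality applied over the ambient group $\Gamma^k$ yields $|T_S| \leq \|F_S\|_{U^t(\Gamma^k)}$. Since $F_S$ is a tensor product of $(-1)^\rho$ on the coordinates in~$S$ with the constant $1$ elsewhere, and since the $U^t$ norm is multiplicative over tensor products (immediate from its definition as an average over affine $t$-cubes), this simplifies to $\|F_S\|_{U^t(\Gamma^k)} = \|(-1)^\rho\|_{U^t}^{|S|}$. Summing the resulting binomial series delivers the claimed bound
\begin{align*}
\omega(G^k) \;\leq\; \frac{1}{2^k}\sum_{S \subseteq [k]} \|(-1)^\rho\|_{U^t}^{|S|} \;=\; \Big(\frac{1+\|(-1)^\rho\|_{U^t}}{2}\Big)^k.
\end{align*}

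The main technical obstacle is the generalized von~Neumann step, which I would establish by $t$ successive Cauchy--Schwarz applications: each round ``erases'' one of the strategies~$B_i^S$ while opening a fresh differencing direction for~$F_S$, and after $t$ rounds one is left with precisely the defining expression for $\|F_S\|_{U^t(\Gamma^k)}^{2^t}$. Non-proportionality of the $\Xi_i$ is exactly what ensures that each Cauchy--Schwarz step opens a genuinely new differencing direction. The argument is standard when $\Gamma = \F_p^n$ (cf.\ Green--Tao) and carries over to arbitrary finite abelian $\Gamma$ without change, since the Gowers norm is translation invariant and only the linear parts of the $\Xi_i$ enter the calculation.
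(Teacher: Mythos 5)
Your proof is correct and follows essentially the same route as the paper: a subset decomposition of $\omega(G^k)$, a Gowers-norm bound on each subset bias via the generalized von Neumann inequality for the lifted linear forms over $\Gamma^k$, and multiplicativity of the $U^t$ norm over tensor products. The only differences are cosmetic and of scope: you derive the decomposition $\omega(G^k)=\tfrac{1}{2^k}\sum_S T_S$ directly by expanding $\prod_j\tfrac{1+s_j}{2}$, whereas the paper invokes a $t$-player analogue of a Cleve--Slofstra identity to reach the same expression; and you restrict to deterministic strategies, so you only bound the classical value $\omega(G^k)$ (which is all the lemma claims), while the paper's proof actually bounds the entangled value $\omega^*(G^k)$ by the same quantity---something you could also get for free since the paper's generalized von Neumann inequality (Proposition~\ref{prop:GvN}) is stated for commuting operator-valued strategies, so nothing forces the restriction to $\pm1$-valued functions.
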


\subsection{Unique games}

We know that the answer to Question~\ref{quest:motivation} is negative in the two-player case, but we can generalize the question by dropping the XOR restriction.
The set of XOR games is part of a larger class of games called unique games for which we investigate the relation between classical and entangled values. A two-player nonlocal game is a unique game if for every pair of questions, for every possible answer of the first player there is exactly one answer of the second player that lets them win, and vice versa. Stated differently, for every question there is a matching between the answers of the two players such that only the matching pairs of answers let the player win.

The Unique Games Conjecture (UGC) of Khot~\cite{Khot:2002} states that for any $\epsilon,\delta>0$, for any $k>k(\epsilon,\delta)$, it is NP-hard to distinguish instances of unique games with winning probability at least $1-\epsilon$ from those with winning probability at most $\delta$, where $k$ is the number of possible answers. This conjecture has important consequences because it implies several hardness of approximation results. For example, for the Max-Cut problem, Khot et al.~\cite{Khot:2007} showed that the UGC implies that obtaining an approximation ratio better than $\approx 0.878$ is NP-hard. Other results include inapproximability for Vertex Cover~\cite{Khot:2008} and graph coloring problems~\cite{Dinur:2009}.

Our results relate the quantum and classical winning probabilities in the regime of near-perfect play and are based on a result in~\cite{Charikar:2006}.

\begin{theorem} \label{thm:uniquegames}
    Let $\epsilon \geq 0$. There is an efficient algorithm that, given any two-player unique game with entangled value $1-\epsilon$, outputs a classical strategy with winning probability at least $1- C \sqrt{\epsilon \log k}$, where $C$ is a constant independent of the game.
\end{theorem}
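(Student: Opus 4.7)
The plan is to reduce to the rounding algorithm of Charikar, Makarychev, and Makarychev~\cite{Charikar:2006}, which shows that any feasible solution to the natural SDP relaxation for unique games of value $1-\epsilon$ can be efficiently rounded to a classical labeling of expected value at least $1-C\sqrt{\epsilon\log k}$. It therefore suffices to show that a quantum strategy with winning probability $1-\epsilon$ produces a feasible solution of value $\geq 1-\epsilon$ to (a relaxation of) this SDP; composing the two steps then gives the theorem.

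First I would recall the SDP used by CMM: for each vertex~$u$ of the constraint graph and each label~$s\in[k]$, introduce a vector~$x_{u,s}$ subject to $\langle x_{u,s},x_{u,t}\rangle = 0$ for $s\neq t$ and $\sum_s\|x_{u,s}\|^2 = 1$, with objective
\begin{equation*}
    \sum_{(u,v)} w_{uv} \sum_{s\in[k]} \langle x_{u,s}, x_{v,\sigma_{uv}(s)}\rangle,
\end{equation*}
where $\sigma_{uv}$ is the permutation of $[k]$ defining the unique constraint on edge~$(u,v)$ and $w_{uv}$ is its probability under the game's question distribution.

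Second, to produce a feasible SDP solution from a quantum strategy, I would first symmetrize the edge distribution so that $(u,v)$ and $(v,u)$ appear with equal weight, which leaves both $\omega(G)$ and $\omega^*(G)$ unchanged. A standard averaging argument (using unitary freedom in the shared state) then lets me assume that the optimal entangled strategy is symmetric in the sense that both players apply the same projective measurement system $\{M_{u,s}\}_s$ at each vertex~$u$ on a shared state~$|\psi\rangle$ invariant under swap of the two halves. Defining $x_{u,s} := (M_{u,s}\otimes I)|\psi\rangle$, orthogonality of the projectors and the completeness relation $\sum_s M_{u,s} = I$ give $\langle x_{u,s}, x_{u,t}\rangle = 0$ for $s\neq t$ and $\sum_s \|x_{u,s}\|^2 = 1$, so the vectors are feasible. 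The SDP objective evaluated on them equals $\sum_{(u,v)} w_{uv}\sum_s \langle\psi| M_{u,s}\otimes M_{v,\sigma_{uv}(s)}|\psi\rangle$, which is exactly the entangled winning probability. Hence the SDP value is at least~$1-\epsilon$.

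Finally, feeding this feasible solution into the CMM rounding algorithm produces, in polynomial time, a classical strategy with expected value at least $1-C\sqrt{\epsilon\log k}$. The main technical hurdle is the symmetrization step that turns a quantum strategy—which naturally produces two vector systems, one per player—into a single vector system per vertex satisfying the SDP constraints; any auxiliary constraints of the CMM SDP beyond orthogonality and normalization need either to be verified for these quantum-derived vectors or one must appeal to the fact that the CMM analysis only uses the constraints actually satisfied by them, so the $\sqrt{\epsilon\log k}$ loss comes entirely from CMM and need not be reproved.
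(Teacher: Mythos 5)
Your overall plan---extract vectors from the optimal entangled strategy, observe that they give a feasible point of an SDP relaxation with value at least $1-\epsilon$, and invoke the Charikar--Makarychev--Makarychev rounding algorithm---is the right skeleton and matches the paper's approach. However, the proposal has two problems, one minor and one serious.

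The minor issue is the symmetrization step. There is no need to force a single measurement system $\{M_{u,s}\}$ shared by both players: the paper simply defines $\ket{u^{(x)}_i}=(\Pi^{(x)}_i\otimes I)\ket{\psi}$ and $\ket{v^{(y)}_j}=(I\otimes\Pi^{(y)}_j)\ket{\psi}$, giving one vector system per question per player, which is exactly the bipartite form CMM's SDP already accommodates. Moreover the averaging argument you appeal to is not a ``standard'' fact: quantum strategies do not form a convex set in the naive sense, so turning an asymmetric strategy into one where both players share a single projective system and a swap-invariant state requires a nontrivial direct-sum construction that you neither carry out nor need.

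The serious gap is your final paragraph. The CMM SDP has an extra constraint (constraint (5) in their paper), namely $0\le\langle u^{(x)}_i\mid v^{(y)}_{\pi_{xy}(i)}\rangle\le\|u^{(x)}_i\|^2$, which the paper explicitly notes is \emph{not} guaranteed for the vectors produced by an arbitrary feasible point of the SDP one should relax to when entanglement is allowed. Crucially, CMM's rounding analysis \emph{does} use this constraint: it is invoked in their Lemmas~4.2 and~4.3. So your assertion that ``the $\sqrt{\epsilon\log k}$ loss comes entirely from CMM and need not be reproved'' is precisely what fails. The entire technical content of the paper's proof of Theorem~\ref{thm:uniquegames} is re-proving Lemmas~4.2 and~4.3 (the latter with a quantitatively weaker bound $\E_r|M_c|\le 4k\sqrt{2\epsilon_{xy}}$ in place of $4k\epsilon_{xy}$, still sufficient) under the weaker constraint set that the quantum-derived vectors actually satisfy, and then verifying that the rest of CMM's Theorem~4.5 goes through unchanged. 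Without that work, composing with the CMM rounding algorithm as a black box is not justified.
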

Note that for $\epsilon=0$ this means a perfect quantum strategy implies a perfect classical strategy.
Furthermore, the above result only beats a trivial strategy when $\epsilon = \mathcal{O}(1/\log k)$.

Work in a similar direction includes~\cite{Kempe:2008}. They show that entangled version of the UGC is false, by providing an efficient algorithm that gives an explicit quantum strategy with winning probability at least $1-6\epsilon$ when the true entangled value is $1-\epsilon$. In the classical case, \cite{Charikar:2006} gives an algorithm that outputs a classical strategy with winning probability $1-\mathcal{O}(\sqrt{\epsilon \log k})$ when the true \emph{classical value} is $1-\epsilon$. We extend this result by showing that this classical strategy also does the job when, not the classical, but the entangled value is $1-\epsilon$.

\section{Techniques} \label{sec:techniques}

This section provides an overview of the proof techniques that we employed. We give sketches of the main ideas which are worked out in full detail in later sections.

\subsection{Reduction to angle games.}
To prove Theorem~\ref{thm:perfectghz} we introduce a new set of $t$-player MOD-$m$ games that we call \emph{angle games}. We define a particular angle game called the \emph{uniform angle game}, denoted by  $\mathrm{UAG}_{t,m}$ and show that it is the hardest of these games. In an angle game, players receive complex phases $e^{i\phi}$ (angles) satisfying a promise, and the winning answer depends only on the product of the inputs $e^{i\phi_1} \cdot e^{i\phi_2} \cdots e^{i\phi_t}$.
We prove the theorem by extracting from any perfect Schmidt strategy a set of complex phases that satisfy such a promise, and thereby reducing any such game to the $\mathrm{UAG}_{t,m}$ game. Let us sketch how this is accomplished. Assume that a perfect Schmidt strategy exists, and let $\{ P^{(j,x_j)}_1, ..., P^{(j,x_j)}_m \}$ be the projective measurement done by player $j$ on input $x_j$ so that $P^{(j,x_j)}_i$ corresponds to output $i$. Now define unitaries $U^{(j,x_j)} = \sum_i \omega_m^i P^{(j,x_j)}_i$, where $\omega_m=e^{2\pi i/m}$ is an $m$-th root of unity. Since the strategy is perfect we have for every input $(x_1,...,x_t)$ that
\begin{align*}
    \omega_m^{M(x_1,...,x_t)} &=  \bra{\psi} U^{(1,x_1)} \otimes U^{(2,x_2)} \otimes ... \otimes U^{(t,x_t)} \ket{\psi} .
\end{align*}
Using the definition of a Schmidt state, we show that this equality implies that these unitaries must be of a simple form and their entries satisfy the promise of an angle game.
We prove Theorem~\ref{thm:perfectghz} and Lemma~\ref{lemma:perfectghztotal} in Section~\ref{sec:perfectghz},
where we also provide classical strategies for the uniform angle game and show that these are tight in the case of $3$-player XOR games.

\subsection{Norming hypergraphs and quasirandomness.}
Our main tool for proving Theorem~\ref{thm:xor_biases} is a relation between the entangled and classical biases and a norm on the set of game tensors.
For $t$-tensors, this norm is given in terms of a certain $t$-partite $t$-uniform hypergraph~$H$. 
Recall that such a hypergraph consists of~$t$ finite and pairwise disjoint vertex sets $V_1,\dots,V_t$ and a collection of $t$-tuples $E(H) \subseteq V_1\times\cdots\times V_t$, referred to as the edge set of~$H$.
For a $t$-tensor $T \in \R^{n_1\times\cdots\times n_t}$, the norm has the following form:
\begin{equation}\label{eq:hypnorm}
\|T\|_H
=
\Big(
\Exp_{\phi_i:V_i\to [n_i]}\Big[ \prod_{(v_1,\dots,v_t)\in E(H)} T\big(\phi_1(v_1),\dots,\phi_t(v_t)\big)\Big]
\Big)^{\frac{1}{|E(H)|}},
\end{equation}
where the expectation taken with respect to the uniform distribution over all $t$-tuples of mappings $\phi_i$ from~$V_i$ to~$[n_i]$.
Expressions such as~\eqref{eq:hypnorm} play an important role in the context of graph homomorphisms~\cite{Borgs:2006}.
If~$T$ is the adjacency matrix of a bipartite graph with left and right node sets~$[n_1]$ and~$[n_2]$ respectively, then each product in~\eqref{eq:hypnorm} is~1 if and only if the maps $\phi_1$ and~$\phi_2$ preserve edges.

Criteria for~$H$ under which~\eqref{eq:hypnorm} defines a norm or a semi-norm were determined by Hatami~\cite{Hatami:2010, HatamiPhD} and Conlon and Lee~\cite{ConlonLee:2017}.
Famous examples of graph norms include the Schatten-$p$ norms for even~$p\geq 4$ (in which case~$H$ is a $p$-cycle) and a well-known family of hypergraph norms are the Gowers octahedral norms.
The latter were introduced for the purpose of quantifying a notion of quasirandomness of hypergraphs as an important part of Gowers' graph-theoretic proof of Szemer\'edi's theorem on arithmetic progressions.
Having large Gowers norm turns out to imply \emph{correlation with structure}, as opposed to quasirandomness.
This is true also for the norm relevant for our setting.
In particular, it turns out that the structure with which a game tensor correlates can be turned into a classical strategy for the game.
As such, a large norm of the game tensor implies a large classical bias of the game itself.
At the same time, we show that the entangled bias is bounded from above by the norm of the game tensor, provided the game is free.
Putting these observations together gives the proof of Theorem~\ref{thm:xor_biases}, which we give in Section~\ref{sec:hypergraphs}.

The particular hypergraph norm relevant in our setting was introduced in~\cite{Conlon:2012} and can be obtained recursively as follows.
Starting with a $t$-partite $t$-uniform hypergraph~$H$ with vertex set $V_1\cup\cdots\cup V_t$, write $\db_i(H)$ for the $t$-partite $t$-uniform hypergraph obtained by making two vertex-disjoint copies of~$H$ and gluing them together so that the vertices in the two copies of~$V_i$ are identified.
We obtain our hypergraph by starting with a single edge $e = (v_1,\dots,v_t)$ (and vertex sets of size~1), and applying this operation to all parts, forming the hypergraph $\db_1(\db_2(\dots \db_t(e)))$ with vertex sets of size~$2^{t-1}$ and~$2^t$ edges.
The fact that this hypergraph defines a norm via~\eqref{eq:hypnorm} was proved in~\cite{ConlonLee:2017}.

\subsection{Line games and Gowers uniformity norms.}
The proof of Theorem~\ref{thm:line_games} is based on two fundamental results from additive combinatorics: the generalized von Neuman inequality and the Gowers Inverse Theorem.
The former easily shows that the classical bias of a line game is bounded from above by the Gowers $t$-uniformity norm of the game map.
We show that in fact the same upper bound holds for the entangled bias as well.
A large entangled bias thus implies a large uniformity norm for the game map.
Analogous to the above-mentioned octahedral norms for tensors, uniformity norms were introduced to quantify a notion of pseudorandomness for bounded maps over abelian groups as an important step in Gowers' other proof of Szemer\'edi's Theorem, based on higher-order Fourier analysis.
The highly non-trivial Gowers Inverse Theorem of Tao and Ziegler~\cite{TaoZiegler:2012} establishes that high uniformity norm again implies correlation with structure.
Although structure in this context means something quite different than for tensors, we show that it still implies a lower bound on the classical bias.
The above observations together prove Theorem~\ref{thm:line_games}, details of which can be found in Section~\ref{sec:linearFormsGame}.

\subsection{Semidefinite programming relaxation.}
The proof of Theorem~\ref{thm:uniquegames} is a small modification of a proof in~\cite{Charikar:2006}. They consider a semidefinite programming (SDP) relaxation of the optimization problem for the classical value and then give two algorithms for rounding the result of the SDP to a classical strategy. In the SDP relaxation the objective is to optimize $\E_{x,y} \sum_{i=1}^k \langle u^{(x)}_i \mid v^{(y)}_{\pi_{xy}(i)} \rangle$ where $u^{(x)}_i, v^{(y)}_j \in \mathbb{R}^d$ are vectors corresponding to questions $x,y$ and answers $i,j$. Furthermore, $\pi_{xy}$ is the matching of correct answers on questions $x,y$. A classical strategy would correspond to the case where the vectors are integers instead, such that for each $x$ exactly one $u^{(x)}_i$ is equal to 1 and all other $u^{(x)}_i$ are equal to zero and similar for the $v^{(y)}_j$. A quantum strategy also gives rise to a set of vectors, but satisfying different constraints~\cite{Kempe:2008}.
One of the constraints of the SDP considered in~\cite{Charikar:2006} is $0 \leq \langle u_i \mid v_{\pi_{xy}(i)} \rangle \leq |u_i|^2$ which is valid for classical strategies, but in general not for quantum strategies. For our proof, we consider the same SDP but with this constraint dropped. In that case it is also a relaxation for the entangled case and with a few changes one of the rounding algorithms in~\cite{Charikar:2006} is also valid when the constraint is dropped.
Note that the result only beats a trivial strategy when $\epsilon = \mathcal{O}(1 / \log k)$ whereas the other rounding algorithm in~\cite{Charikar:2006} is non-trivial for any $\epsilon$. However this other algorithm is more dependent on the extra constraint and it is not clear if it can be dropped there as well.

To get some intuition for the rounding algorithm, we sketch a solution for $\epsilon=0$ here. In this case one can show that for each question pair $x,y$ the set of vectors $\ket{u^{(x)}_i}$ ($i=1,...,k$) known by the first player is the same set of vectors as the set $\ket{v^{(y)}_i}$ ($i=1,...,k$) known to the second player. In particular, the vector $\ket{u^{(x)}_i}$ is the same as the matching vector $\ket{v^{(y)}_{\pi_{xy}(i)}}$ of the other player. Using shared randomness they can sample a random vector $\ket{g}$ and compute the overlaps $\xi^{(x)}_i = \langle g \ket{u^{(x)}_i}$ and $\xi^{(y)}_i = \langle g \ket{v^{(y)}_i}$ respectively. As they have the same vectors, the players will have the same values for answers in the matching: $\xi^{(x)}_i = \xi^{(y)}_{\pi_{xy}(i)}$. Now both players simply output the answer $i$ for which $|\xi^{(x)}_i|$ (and $|\xi^{(y)}_i|$ for the other player) has the largest value. With probability one this will yield correct answers.
For $\epsilon>0$ the sets of vectors will not be exactly equal and therefore the values $\xi^{(x)}_i,\xi^{(y)}_{\pi_{xy}(i)}$ will be close but not exactly equal. The discrepancy in these values will be bigger for vectors $\ket{u^{(x)}_i}$ with a small norm. In Section~\ref{sec:perfectunique} we provide the rounding algorithm in full detail and show how this issue is solved.

\newpage
\section{Perfect quantum Schmidt strategies for MOD games} \label{sec:perfectghz}

We start by defining a set of games that turn out to characterize the games we are interested in.

\begin{definition}[Angle game] \label{def:anglegame}
    Define an \emph{angle game} as a $t$-player MOD-$m$ game where player $j$ gets an angle $e^{i \phi_j}$ as input, with the promise that $e^{i\phi_1}\cdot...\cdot e^{i\phi_t} = \omega_m^{M(\phi_1,...,\phi_t)}$ where $M(\phi_1,...,\phi_t)\in\{0,1,...,m-1\}$ and $\omega_m = e^{i2\pi/m}$. The players win if and only if the sum of their outputs modulo $m$ is equal to $M(\phi_1,...,\phi_t)$.
\end{definition}
Note that an angle game is completely defined by $t$, $m$ and a probability distribution over angle tuples.
Furthermore, $t$-player Boyer games~\cite{Boyer:2004} with parameters $(D,M)$ are angle games where the (discrete) probability distribution is uniform over all angles of the form $e^{i2\pi x/(MD)}$ with $x=0,1,...,D-1$ whose product is an $M$-th root of unity. The promise $\sum x_j \equiv 0 \mod D$ as stated in the Boyer games translates to $\prod e^{i2\pi x_j / (MD)} = \omega_M^{l}$ in the angle game, where $l=\sum x_j / D$.

\begin{lemma}
    Any angle game has entangled value 1 which can be obtained using a shared GHZ state.
\end{lemma}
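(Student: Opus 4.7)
The plan is to exhibit an explicit perfect GHZ-based strategy and verify its winning probability by a direct calculation. Let the shared state be the $m$-dimensional $t$-partite GHZ state
\[
\ket{\text{GHZ}} = \frac{1}{\sqrt{m}} \sum_{k=0}^{m-1} \ket{k}^{\otimes t}.
\]
On input $e^{i\phi_j}$, player $j$ applies the diagonal phase gate $P_j = \sum_{k=0}^{m-1} e^{ik\phi_j}\ketbra{k}{k}$, then measures in the Fourier basis $\{\ket{\phi_a}\}_{a=0}^{m-1}$ where $\ket{\phi_a} = \frac{1}{\sqrt{m}} \sum_k \omega_m^{-ka}\ket{k}$, and outputs the measurement outcome $a_j$.

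The key observation is that the angle-game promise $\prod_{j=1}^t e^{i\phi_j} = \omega_m^{M(\phi_1,\dots,\phi_t)}$ is equivalent to $\sum_j \phi_j \equiv 2\pi M/m \pmod{2\pi}$. Applying the phase gates therefore transforms the GHZ state into
\[
(P_1 \otimes \cdots \otimes P_t)\ket{\text{GHZ}} = \frac{1}{\sqrt{m}} \sum_{k=0}^{m-1} e^{ik\sum_j \phi_j}\ket{k}^{\otimes t} = \frac{1}{\sqrt{m}}\sum_{k=0}^{m-1} \omega_m^{kM} \ket{k}^{\otimes t}.
\]

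Next I would compute the joint outcome amplitude by taking the inner product with $\ket{\phi_{a_1}}\otimes\cdots\otimes\ket{\phi_{a_t}}$; a geometric sum over $k$ of $\omega_m^{k(M - a_1 - \cdots - a_t)}$ yields an amplitude that vanishes unless $a_1 + \cdots + a_t \equiv M \pmod m$, and in that case the probability of the outcome tuple is exactly $m^{-(t-1)}$. Summing over the $m^{t-1}$ winning tuples gives total probability one, so the strategy wins on every input satisfying the promise, establishing entangled value $1$.

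There is no real obstacle: the only content is translating the multiplicative promise on the angles into an additive phase condition and then performing the Fourier-basis calculation. I would simply make sure the Fourier normalization conventions are consistent throughout.
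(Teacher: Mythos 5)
Your proof is correct and uses exactly the same strategy as the paper: share a GHZ state, apply a diagonal phase gate encoding the input angle, then Fourier-transform and measure. The only wrinkle is the sign convention you already flagged: with $\ket{\phi_a}=\frac{1}{\sqrt m}\sum_k\omega_m^{-ka}\ket{k}$ the geometric sum actually involves $\omega_m^{k(M+a_1+\cdots+a_t)}$, so the winning condition comes out as $\sum_j a_j\equiv -M\pmod m$; this is fixed by having each player negate their outcome (or by flipping the sign in the Fourier basis), and does not affect the conclusion.
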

\begin{proof}
    Consider the following quantum strategy using a GHZ state of dimension $m$.
    Every player applies the local diagonal unitary $U_{jj} = e^{i \; j\cdot \phi}$ on input $e^{i\phi}$.
    Then every player applies an inverse Fourier transform $F_{ij}^\dagger = \frac{1}{\sqrt{m}} \omega_m^{-i\cdot j}$ and then measures in the computational basis and outputs the result. With probability 1 the sum of their outputs is equal to $l$.
\end{proof}

\begin{lemma} \label{lemma:reducetoangle}
    Any $t$-player MOD-$m$ game with a perfect Schmidt strategy can be reduced to an angle game.
\end{lemma}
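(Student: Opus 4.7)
The plan is to follow the outline sketched in Section~\ref{sec:techniques}: extract from the perfect Schmidt strategy a complex phase $e^{i\phi_j(x_j)}$ for each player--input pair, show that these phases satisfy the angle-game promise $\prod_j e^{i\phi_j(x_j)} = \omega_m^{M(x_1,\dots,x_t)}$ on the support of $\pi$, and conclude that this defines an angle game $G'$ (with input distribution equal to the pushforward of $\pi$ under $(x_1,\dots,x_t)\mapsto (e^{i\phi_1(x_1)},\dots,e^{i\phi_t(x_t)})$) to which $G$ reduces. The classical reduction is then immediate: given any classical strategy for $G'$, each player $j$ in $G$ simply computes $e^{i\phi_j(x_j)}$ from $x_j$ and plays the $G'$-strategy, yielding the same winning probability.

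\textbf{Extracting the phases.} First I would verify that perfect play, together with the definition $U^{(j,x_j)} = \sum_i \omega_m^i P^{(j,x_j)}_i$, forces
\[
\bigl(U^{(1,x_1)}\otimes\cdots\otimes U^{(t,x_t)}\bigr)\ket{\psi} \;=\; \omega_m^{M(x_1,\dots,x_t)}\, \ket{\psi}
\]
for every input tuple: the overlap $\bra{\psi}U^{(1,x_1)}\otimes\cdots\otimes U^{(t,x_t)}\ket{\psi}$ equals $\omega_m^{M(x_1,\dots,x_t)}$ by perfect play and therefore has modulus one, so the Cauchy--Schwarz equality case applies. Taking the partial trace over all systems but the $j$-th shows that $U^{(j,x_j)}$ commutes with $\rho_j = \sum_i c_i^2 \ket{e_i^{(j)}}\bra{e_i^{(j)}}$, and in the generic case that the Schmidt coefficients $c_i$ are pairwise distinct this forces $U^{(j,x_j)}\ket{e_i^{(j)}} = \lambda_i^{(j,x_j)}\ket{e_i^{(j)}}$ for some $m$-th root of unity $\lambda_i^{(j,x_j)}$ (recall that each $U^{(j,x_j)}$ has spectrum in $\{\omega_m^0,\dots,\omega_m^{m-1}\}$). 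Substituting these diagonal actions into the eigenvalue equation above and using orthonormality of the Schmidt vectors then yields $\prod_j \lambda_i^{(j,x_j)} = \omega_m^{M(x_1,\dots,x_t)}$ for every $i$, and setting $e^{i\phi_j(x_j)} := \lambda_0^{(j,x_j)}$ supplies the required phases.

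\textbf{Main obstacle.} The one subtle point is the case when the Schmidt coefficients $c_i$ have multiplicities, since then commutation with $\rho_j$ only makes $U^{(j,x_j)}$ block-diagonal in the Schmidt basis and no single diagonal phase is directly available on a degenerate block. I would handle this by restricting the eigenvalue equation to each degenerate block $I_s = \{i : c_i = c_s\}$: it expresses that the ``block GHZ'' vector $\sum_{i\in I_s}\bigotimes_j\ket{e_i^{(j)}}$ is an eigenvector of $\bigotimes_j U^{(j,x_j)}|_{V_s^{(j)}}$ with eigenvalue $\omega_m^{M(x_1,\dots,x_t)}$. The resulting constraints, indexed by the non-constant tuples in $I_s^t$ and taken together with varying $x_1,\dots,x_t$, are rigid enough that one can still assign a single $m$-th-root-of-unity phase $e^{i\phi_j(x_j)}$ to each pair $(j,x_j)$ so that $\prod_j e^{i\phi_j(x_j)} = \omega_m^{M(x_1,\dots,x_t)}$ continues to hold throughout the support of $\pi$. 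Verifying this rigidity and pinning down the correct scalar is what I expect to be the main technical hurdle; once it is in place, defining $G'$ and transferring classical strategies is routine.
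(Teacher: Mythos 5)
Your generic case (pairwise distinct Schmidt coefficients $c_i$) is correct and is a genuinely different route from the paper's, but the degenerate case is not a routine hurdle to be patched by a rigidity argument --- it is a real gap, and the mechanism you propose cannot close it in the form you describe. Commuting $U^{(j,x_j)}$ with $\rho_j$ only forces block-diagonality with respect to the multiplicity blocks of $\rho_j$, and on a degenerate block the remaining constraint $(U^{(1,x_1)}\otimes\cdots\otimes U^{(t,x_t)})\ket{\psi}=\omega_m^{M}\ket{\psi}$ does \emph{not} force the $U^{(j,x_j)}$ to act diagonally on the Schmidt basis. Already for the $2$-qubit GHZ state with $m=2$, take each $U^{(j)}$ to be the unitary $\ket{e_0^{(j)}}\mapsto e^{-i\alpha_j}\ket{e_1^{(j)}}$, $\ket{e_1^{(j)}}\mapsto e^{i\alpha_j}\ket{e_0^{(j)}}$ (a $\pm1$-observable) with $\sum_j\alpha_j\equiv 0\pmod{\pi}$: this is a valid perfect strategy with \emph{no} diagonal entries at all, so there are no eigenvalues $\lambda_i^{(j,x_j)}$ to read off. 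The all-equal-$c_i$ GHZ case --- the central example in this part of the paper --- is exactly where $\rho_j$ is maximally mixed and your commutation step says nothing.

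The paper's proof sidesteps diagonality entirely. From $\omega_m^{M}=\sum_{i,j}c_ic_j\prod_k U^{(k,x_k)}_{ij}$ it applies Cauchy--Schwarz over the index pair $(i,j)$, splitting one player from the rest; because the left side has modulus one, the equality case yields an entrywise proportionality $\omega_m^{M}c_i\,\overline{U^{(1,x_1)}_{ij}}=c_j\prod_{k\geq 2}U^{(k,x_k)}_{ij}$ for all $(i,j)$, and singling out each player in turn shows the nonzero pattern of $U^{(k,x_k)}_{ij}$ is the same for every $k$. Taking arguments at any agreed-upon nonzero $(i,j)$ gives $\frac{2\pi}{m}M(x_1,\dots,x_t)=\sum_k\arg U^{(k,x_k)}_{ij}$, so player $k$ sets $\phi_k(x_k)=\arg U^{(k,x_k)}_{ij}$. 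Crucially, this phase need not be an $m$-th root of unity --- in the example above it equals the arbitrary real $\alpha_j$ --- which is precisely the object your eigenvalue-based extraction cannot produce. To repair your proposal you would need to replace ``read off the eigenvalue'' with an entrywise argument of this type that makes no diagonality assumption; the reduced-density-matrix route on its own is a dead end in the degenerate case.
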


\begin{proof}
    Let $\{ P^{(j,x_j)}_1, ..., P^{(j,x_j)}_m \}$ be the projective measurement done by player $j$ on input $x_j$ so that $P^{(j,x_j)}_i$ corresponds to output $i$. This set of projectors is pairwise orthogonal and sums to identity.
    Now define unitaries $U^{(j,x_j)} = \sum_i \omega_m^i P^{(j,x_j)}_i$. Since the strategy is perfect we have for every input $(x_1,...,x_t)$ that
    \begin{align}
        \omega_m^{M(x_1,...,x_t)} &=  \bra{\psi} U^{(1,x_1)} \otimes U^{(2,x_2)} \otimes ... \otimes U^{(t,x_t)} \ket{\psi} \nonumber \\
        &= \sum_{i,j} c_i c_j \bra{e^{(1)}_i} U^{(1,x_1)} \ket{e^{(1)}_j} \; \bra{e^{(2)}_i} U^{(2,x_2)} \ket{e^{(2)}_j} \cdots \bra{e^{(t)}_i} U^{(t,x_t)} \ket{e^{(t)}_j} \nonumber \\
        &= \sum_{i,j} c_i c_j U^{(1,x_1)}_{ij} \; U^{(2,x_2)}_{ij} \cdots U^{(t,x_t)}_{ij} . \label{eq:perfectunitaries}
    \end{align}
    where we entered the definition of a Schmidt state as given in Section~\ref{sec:introperfectghz} and we shortened $U^{(t,x_t)}_{ij} := \bra{e^{(t)}_i}U^{(t,x_t)}\ket{e^{(t)}_j}$.
    Now apply Cauchy-Schwarz to obtain
    \begin{align*}
        \left\vert \sum_{i,j} c_i c_j U^{(1,x_1)}_{ij} \; U^{(2,x_2)}_{ij} \cdots U^{(t,x_t)}_{ij} \right\vert
        &\leq
        \left( \sum_{i,j} c_i^2 \left\vert U^{(1,x_1)}_{ij} \right\vert^2 \right)^{\frac{1}{2}}
        \left( \sum_{i,j} c_j^2 \left\vert U^{(2,x_2)}_{ij} \cdots U^{(t,x_t)}_{ij} \right\vert^2 \right)^{\frac{1}{2}} \\
        &\leq
        1 \; \left( \sum_{i,j} c_j^2 \left\vert U^{(2,x_2)}_{ij} \right\vert^2 \right)^{\frac{1}{2}} = 1 .
    \end{align*}
    Here we used that the $U^{(j,x_j)}$ are unitary and therefore their rows and columns are unit vectors.
    When $|\langle a,b\rangle| = \norm{a} \cdot \norm{b}$ then we have $\ket{a} = \lambda \ket{b}$ for some $\lambda \in \mathbb{C}$. Keeping in mind the complex conjugation in the inner product, there is a $\lambda$ such that
    \begin{align*}
        \lambda c_i \overline{U^{(1,x_1)}_{ij}} = c_j U^{(2,x_2)}_{ij} \cdots U^{(t,x_t)}_{ij}
    \end{align*}
    where $\overline{z}$ denote the complex conjugate of $z$. Plugging this into~\eqref{eq:perfectunitaries} gives $\lambda = \omega_m^{M(x_1,...x_t)}$.
    From the above equation it follows that when $U^{(t,x_t)}_{ij}$ is non-zero for $t=1$ then it is non-zero for every $t$.
    Instead of the first player we could have used any other player in the above derivation, so if any $U^{(t,x_t)}_{ij}$ is non-zero for some $t$ then it is non-zero for all $t$.
    Let $i,j$ be such that $U^{(t,x_t)}_{ij}\neq 0$, then we can take the argument of the above equation to find
    \begin{align*}
        \frac{2\pi}{m} M(x_1,...,x_t) = \arg(U^{(1,x_1)}_{ij}) + \arg(U^{(2,x_2)}_{ij}) + \cdots + \arg(U^{(t,x_t)}_{ij}) .
    \end{align*}
    On any input $(x_1,...,x_t)$, the players simply looks at the first non-zero element of their matrix $U^{(t,x_t)}$ and look at the argument $\phi_t := \arg(U^{(t,x_t)}_{ij})$. These angles have the property that $e^{i\phi_1}\cdot ... \cdot e^{i\phi_t} = \omega_m^{M(x_1,...,x_t)}$. This reduces the game to an angle game.
\end{proof}

\begin{definition}[Connected inputs] \label{def:connected}
    For any game, define a graph where every input (a $t$-tuple) with non-zero probability of being asked is a vertex. Two inputs are connected via an edge if they differ on only one player and agree on the other $t-1$ coordinates.
    We say the game has \emph{connected inputs} if this graph is connected.
\end{definition}

Dinur et al.~\cite{Dinur:2016} consider the same graph which they call the $(t-1)$-connection graph of the game.
Total games and free games have connected inputs.
Games that do \emph{not} have connected imputs typically have a \emph{promise} on the inputs.

\begin{lemma} \label{lemma:connected}
    Any angle game with connected inputs has classical value 1.
\end{lemma}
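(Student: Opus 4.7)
The plan is to construct an explicit deterministic classical strategy achieving winning probability exactly~$1$. I would first fix any reference input $v^* = (\phi_1^*, \ldots, \phi_t^*)$ with nonzero probability, together with some fixed outputs $a_1^*, \ldots, a_t^* \in \{0, 1, \ldots, m-1\}$ satisfying $a_1^* + \cdots + a_t^* \equiv M(\phi_1^*, \ldots, \phi_t^*) \pmod m$ (such an assignment trivially exists). For each player~$j$ and each input $\phi_j$ that player~$j$ can receive, define the deterministic output
\[
a_j(\phi_j) \;:=\; a_j^* + \tfrac{m}{2\pi}(\phi_j - \phi_j^*) \pmod m,
\]
where the right-hand side is to be interpreted as an element of $\{0,\dots,m-1\}$---an interpretation whose validity requires that $\tfrac{m}{2\pi}(\phi_j - \phi_j^*)$ is integral modulo~$m$.

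Correctness of this strategy follows directly from the angle-game promise. For any valid input $v = (\phi_1, \ldots, \phi_t)$, the identities $e^{i\sum_j \phi_j} = \omega_m^{M(v)}$ and $e^{i\sum_j \phi_j^*} = \omega_m^{M(v^*)}$ imply $\tfrac{m}{2\pi}(\sum_j \phi_j - \sum_j \phi_j^*) \equiv M(v) - M(v^*) \pmod m$, so summing outputs gives
\[
\sum_{j=1}^t a_j(\phi_j) \;\equiv\; \sum_j a_j^* + M(v) - M(v^*) \;\equiv\; M(v) \pmod m,
\]
which is precisely the winning condition.

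The main obstacle is the well-definedness step: showing that $\tfrac{m}{2\pi}(\phi_j - \phi_j^*)$ is an integer modulo~$m$ for every possible input $\phi_j$. This is where the connected-inputs hypothesis enters. Given any valid input $v$ having $\phi_j$ as its $j$-th coordinate, I would pick a path $v^* = v_0, v_1, \ldots, v_K = v$ in the connection graph, and let $\psi_k$ denote the $j$-th coordinate of $v_k$. At each step, either the $j$-th coordinate is unchanged ($\psi_k = \psi_{k-1}$, when the edge changes a different player's input) or the edge changes only coordinate~$j$; in the latter case, subtracting the promise equations at $v_{k-1}$ and $v_k$ (which share all other coordinates) yields $\psi_k - \psi_{k-1} \equiv 2\pi(M(v_k) - M(v_{k-1}))/m \pmod{2\pi}$, a multiple of $2\pi/m$. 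Telescoping along the path gives that $\phi_j - \phi_j^* = \psi_K - \psi_0$ is a multiple of $2\pi/m$ modulo $2\pi$, establishing the required integrality and completing the proof. The technical care is entirely concentrated in this telescoping argument; the remaining steps are bookkeeping using the promise.
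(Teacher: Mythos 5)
Your proposal is correct and takes essentially the same approach as the paper: fix a reference input, define each player's output by the phase shift relative to the reference (the paper's $\beta_j(e^{i\phi_j})$ being an $m$-th root of unity is exactly your requirement that $\tfrac{m}{2\pi}(\phi_j-\phi_j^*)$ be an integer), and establish well-definedness by propagating along a path in the connection graph. Your telescoping over a single coordinate is a minor reorganization of the paper's induction, but the substance is identical.
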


\begin{proof}
    Fix an input $(e^{i \alpha_1},...,e^{i \alpha_t})$. Now define $\beta_1(e^{i \phi_1}) = e^{i \phi_1} e^{i\alpha_2} e^{i\alpha_3}\cdots e^{i\alpha_t}$ and $\beta_j(e^{i \phi_j}) = e^{i \phi_j} e^{-i \alpha_j}$ for $j\geq 2$. The product of the angles is left unchanged under these maps, $\beta_1(e^{i\phi_1})\beta_2(e^{i\phi_2})\cdots\beta_t(e^{i\phi_t}) = e^{i\phi_1}\cdots e^{i\phi_t}$. We claim that every input is mapped to an $m$-th root of unity, i.e. $\beta_j(e^{i \phi_j}) = \omega_m^{l_j}$. Therefore player $j$ can output $l_j$ and $\sum_j l_j = M(e^{i\phi_1},...,e^{i\phi_t})$ thus winning the game with probability 1.
    First note that on the fixed input we have $\beta_1(e^{i\alpha_1}) = \omega_m^{M(e^{i\alpha_1},...,e^{i\alpha_t})}$ and $\beta_j(e^{i\alpha_j}) = 1$ for $j\geq 2$. so the claim holds on the fixed input.
    By the strong connectivity we can obtain another input to the game by changing only the input for a single player. We now show that when the claim holds for one input then it also holds when only one player's value is changed. Using these single-player edits we can eventually reach all inputs. 
    Let $(e^{i\phi_1},...,e^{i\phi_j},...,e^{i\phi_t})$ and $(e^{i\phi_1},...,e^{i\phi'_j},...,e^{i\phi_t})$ be two inputs that only differ for player $j$. Now assume that the claim holds for the first input. We then have
    \begin{align*}
        \omega_m^{M(e^{i\phi_1},...,e^{i\phi'_j},...,e^{i\phi_t})}
        &= \beta_1(e^{i\phi_1})\cdots\beta_j(e^{i\phi'_j})\cdots\beta_t(e^{i\phi_t}) \\
        &= \beta_1(e^{i\phi_1})\cdots\beta_j(e^{i\phi_j})\cdots\beta_t(e^{i\phi_t}) \cdot \beta_j(e^{i\phi_j})^{-1} \beta_j(e^{i\phi'_j}) \\
        &= \omega_m^{M(e^{i\phi_1},...,e^{i\phi_j},...,e^{i\phi_t})} \omega_m^{-l_j} \beta_j(e^{i\phi'_j})
    \end{align*}
    from which it follows that $\beta_j(e^{i\phi'_j}) = \omega_m^{l'_j}$ for some $l'_j$.
\end{proof}

Lemma~\ref{lemma:perfectghztotal} follows directly from Lemma~\ref{lemma:reducetoangle} and Lemma~\ref{lemma:connected}.

\subsection{Classical strategies for angle games} \label{sec:uag}

Having characterized our class of games as angle games we proceed by presenting classical strategies for these games.
Our aim is to provide strategies that work for any probability distribution on the set of inputs.
In this section it will be convenient to write the angles as $e^{i\frac{2\pi}{m} \phi}$ so that $\phi$ runs from $0$ to $m$ instead of $0$ to $2\pi$.

\begin{definition} \label{def:uag}
    Define the probability distribution $\pi_{U_t}$ on the set
    \begin{align*}
        U_t = \left\{ (e^{i\frac{2\pi}{m}\phi_1},...,e^{i\frac{2\pi}{m}\phi_t}) \mid \phi_j \in [0,1) \;\;,\;\; e^{i\frac{2\pi}{m}\phi_1}\cdots e^{i\frac{2\pi}{m}\phi_t} = \omega_m^{l} ,\; l\in\{0,1,...,m-1\}  \right\} 
    \end{align*}
    as follows: for $1\leq j \leq t-1$, draw $\phi_j$ independently uniformly at random from $[0,1)$. Then define $\phi_t$ as the unique number in $[0,1)$ that makes the product $e^{i\frac{2\pi}{m}\phi_1}\cdots e^{i\frac{2\pi}{m}\phi_t}$ an $m$-th root of unity.
    We define the \emph{uniform angle game}, denoted $\mathrm{UAG}_{t,m}$, as a $t$-player MOD-$m$ angle game (Definition~\ref{def:anglegame}) where the input distribution is $\pi_{U_t}$.
\end{definition}

As stated before, in a $t$-player Boyer game~\cite{Boyer:2004} with parameters $(D,M)$ the (discrete) probability distrubtion is uniform over all angles of the form $e^{i2\pi x/(MD)}$ with $x=0,1,...,D-1$ whose product is an $M$-th root of unity. This is similar to the $\pi_{U_t}$ distribution but where the angles $\phi_j \in [0,1)$ are now discrete $\phi_j \in \{0, \frac{1}{D}, \frac{2}{D}, ..., \frac{D-1}{D} \}$.

The distribution $\pi_{U_t}$ is the hardest distribution as captured by the following claim.
\begin{claim} \label{claim:hardestdistribution}
    Let $G$ be a $t$-player MOD-$m$ angle game with input distribution $\pi_G$. Then $\omega(G) \geq \omega(\mathrm{UAG}_{t,m})$.
\end{claim}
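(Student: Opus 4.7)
The plan is to prove the claim by giving a randomized reduction from~$G$ to $\mathrm{UAG}_{t,m}$: an optimal classical strategy for the uniform angle game can be turned into a classical strategy for any other angle game of the same $(t,m)$ by having the players ``randomize'' their inputs via shared randomness so that the shifted inputs are distributed exactly according to~$\pi_{U_t}$.

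Concretely, suppose $S$ is a classical strategy for $\mathrm{UAG}_{t,m}$ achieving value $\omega(\mathrm{UAG}_{t,m})$. For~$G$, the players first use shared randomness to sample $(\alpha_1,\dots,\alpha_t) \sim \pi_{U_t}$, and let $l_\alpha \in \{0,1,\dots,m-1\}$ be the integer such that $e^{i\frac{2\pi}{m}\alpha_1}\cdots e^{i\frac{2\pi}{m}\alpha_t} = \omega_m^{l_\alpha}$. On receiving his input $\phi_j \in [0,1)$, player~$j$ computes $\phi'_j := (\phi_j + \alpha_j) \bmod 1$ and runs~$S$ on $\phi'_j$ to obtain an answer $a_j$. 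Player~$1$ outputs $a_1 - l_\alpha \bmod m$ and every other player outputs $a_j$. The sum of outputs mod~$m$ is then $\sum_j a_j - l_\alpha \bmod m$.

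The key observations to verify are the following. First, the shift preserves the angle-game promise: if $(\phi_1,\dots,\phi_t)$ has $\prod_j e^{i\frac{2\pi}{m}\phi_j} = \omega_m^{l_\phi}$, then $(\phi'_1,\dots,\phi'_t)$ lies in $U_t$ with target value $l_{\phi'} \equiv l_\phi + l_\alpha \pmod m$, so if $S$ wins on $\phi'$ then $\sum_j a_j \equiv l_\phi + l_\alpha \pmod m$, which after the subtraction by~$l_\alpha$ yields exactly the target $l_\phi$ for~$G$. Second, the distribution of $(\phi'_1,\dots,\phi'_t)$ conditioned on the fixed input $(\phi_1,\dots,\phi_t)$ is precisely $\pi_{U_t}$: the marginals on the first $t-1$ coordinates are independent uniform on $[0,1)$ because each $\alpha_j$ is uniform on $[0,1)$ and independent, and $\phi'_t$ is then the unique element of $[0,1)$ that makes the product an $m$-th root of unity, which matches Definition~\ref{def:uag} exactly. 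Hence for every input $(\phi_1,\dots,\phi_t)$ of~$G$, the strategy wins with probability at least $\omega(\mathrm{UAG}_{t,m})$ over the shared randomness, and averaging over $\pi_G$ gives the same bound. Replacing the shared randomness by an optimal fixing via the standard averaging argument produces a deterministic strategy for~$G$ with value at least $\omega(\mathrm{UAG}_{t,m})$.

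I do not expect any serious obstacle here; the argument is essentially an invariance computation. The only points that require care are the bookkeeping of the ``lift'' of $\phi_j + \alpha_j$ across the boundary of $[0,1)$ (which changes the integer target but not the angle), and confirming that the derived distribution is \emph{exactly} $\pi_{U_t}$ rather than some equivalent reparametrization. Both are straightforward once one writes down the constraint $\sum_j (\phi_j + \alpha_j) \in \mathbb{Z}$ and uses that $\omega_m^{\sum_j k_j} \cdot \prod_j e^{i\frac{2\pi}{m}\phi'_j} = \omega_m^{l_\phi + l_\alpha}$ where $k_j \in \{0,1\}$ captures the wrap-around.
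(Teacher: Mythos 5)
Your overall plan (randomized reduction to $\mathrm{UAG}_{t,m}$ via a shared-randomness shift) is the same idea used in the paper, but the strategy you actually write down does not win the game: the computation of the shifted target is off, and the proposed outputs do not correct for the wrap-around.

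Concretely, write $k_j = \lfloor \phi_j + \alpha_j\rfloor \in \{0,1\}$, so that $\phi'_j = \phi_j + \alpha_j - k_j$. Then
\begin{equation*}
\prod_j e^{i\frac{2\pi}{m}\phi'_j} \;=\; \omega_m^{\,l_\phi + l_\alpha - \sum_j k_j},
\end{equation*}
so the target for the shifted input is $l_{\phi'} \equiv l_\phi + l_\alpha - \sum_j k_j \pmod m$, \emph{not} $l_\phi + l_\alpha$ as you assert. With your proposed outputs (player $1$ outputs $a_1 - l_\alpha$, the other players output $a_j$) the sum modulo $m$ is $l_\phi - \sum_j k_j$, which differs from $l_\phi$ whenever $\sum_j k_j \not\equiv 0 \pmod m$. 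You do mention the ``lift across the boundary of $[0,1)$'' at the end, and you even record the correct identity $\omega_m^{\sum_j k_j}\prod_j e^{i\frac{2\pi}{m}\phi'_j} = \omega_m^{l_\phi + l_\alpha}$, but you treat it as innocuous bookkeeping and do not realize it changes the strategy. The fix is that each player $j$ must add its own wrap-around term, outputting $a_j + k_j$ (with one player additionally subtracting $l_\alpha$); this is exactly what the paper does in a slightly different parametrization, where it randomizes by a phase uniform on the full circle, decomposes the resulting angle as $\phi'_j = l_j + \varphi_j$ with $l_j \in \{0,\dots,m-1\}$, and has player $j$ output $a_j + l_j$. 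A secondary issue is that you implicitly assume the inputs of $G$ already satisfy $\phi_j \in [0,1)$, which is not given for a general angle game (they can lie anywhere in $[0,m)$); this can be handled by the same integer-offset trick, but it should be said.
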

\begin{proof}
    Assume the players get an input $(e^{i\frac{2\pi}{m}\phi_1},...,e^{i\frac{2\pi}{m}\phi_t})$ with $e^{i\frac{2\pi}{m}\phi_1}\cdots e^{i\frac{2\pi}{m}\phi_t} = \omega_m^a$ from $\pi_G$.
    Using shared randomness, draw $t-1$ independent random angles $e^{i\alpha_1},...,e^{i\alpha_{t-1}}$ where each $\alpha_i$ is uniform on $[0,2\pi)$. Multiply the input $e^{i\frac{2\pi}{m}\phi_j}$ of player $j$ by $e^{i\alpha_j}$ for $j \leq t-1$ and multiply $e^{i\frac{2\pi}{m}\phi_t}$ by $e^{-i(\alpha_1+...+\alpha_{t-1})}$ to preserve the product. The resulting distribution is uniform on the set
\begin{align*}
    \left\{ (e^{i\frac{2\pi}{m}\phi_1},...,e^{i\frac{2\pi}{m}\phi_t}) \mid \phi_j \in [0,m) \;\;,\;\; e^{i\frac{2\pi}{m}\phi_1}\cdots e^{i\frac{2\pi}{m}\phi_t} = \omega_m^a \right\}
\end{align*}
One can always write $\phi_j = l_j + \varphi_j$ with $l_j\in\{0,1,...,m-1\}$ and $0\leq\varphi_j<1$.
    Note that $(e^{i\frac{2\pi}{m}\varphi_1},...,e^{i\frac{2\pi}{m}\varphi_t})$ is distributed according to $\pi_{U_t}$ so the players can play a strategy for $\mathrm{UAG}_{t,m}$ to obtain answers $(a_1,...,a_t)$.
    On input $e^{i \frac{2\pi}{m}(l_j + \varphi_j)}$, player $j$ outputs $a_j+l_j$. They are correct if and only if the answers $a_j$ are correct for $\mathrm{UAG}_{t,m}$ on input $(e^{i\frac{2\pi}{m}\varphi_1},...,e^{i\frac{2\pi}{m}\varphi_t})$. This proves the claim.
\end{proof}

\begin{claim} \label{claim:uaglowerbound}
    The uniform angle game satisfies $\omega(\mathrm{UAG}_{t,m}) \geq \frac{1}{m} + \frac{m-1}{m} t^{1-t}$.
\end{claim}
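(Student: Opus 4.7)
The plan is to exhibit a randomized classical strategy (with shared randomness) whose winning probability equals the right-hand side of the claim exactly. The idea is that the event where players $1,\ldots,t-1$ all receive ``small'' inputs pins down both the structure of $\phi_t$ and the target $l$, so that a very simple deterministic play wins on this event, while independent uniform randomness ensures exactly $1/m$ success on its complement.

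I will consider the event
\begin{equation*}
E := \{\phi_j < 1/t \text{ for all } j = 1,\ldots,t-1\}.
\end{equation*}
Since $\phi_1,\ldots,\phi_{t-1}$ are i.i.d.\ uniform on $[0,1)$ under $\pi_{U_t}$, $P(E) = (1/t)^{t-1} = t^{1-t}$. On $E$ the sum $S := \sum_{j<t}\phi_j$ lies almost surely in $(0,(t-1)/t)$; then, by the definition of $\pi_{U_t}$, the unique $\phi_t \in [0,1)$ making $S+\phi_t$ an integer is $\phi_t = 1 - S \in (1/t,1)$, so $N := \sum_j \phi_j = 1$ and thus the target is $l = N \bmod m = 1$.

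The strategy uses shared randomness to sample $r_1,\ldots,r_t$ i.i.d.\ uniform in $\{0,\ldots,m-1\}$. For $j<t$, player $j$ outputs $a_j = 0$ if $\phi_j < 1/t$ and $a_j = r_j$ otherwise; player $t$ outputs $a_t = 1$ if $\phi_t \geq 1/t$ and $a_t = r_t$ otherwise. On $E$, everyone plays deterministically: $a_1 = \cdots = a_{t-1} = 0$ and $a_t = 1$, giving $\sum_j a_j = 1 = l$, a sure win. On $E^c$, some $j_0 \in \{1,\ldots,t-1\}$ has $\phi_{j_0} \geq 1/t$ and hence $a_{j_0} = r_{j_0}$; writing $\sum_j a_j = r_{j_0} + Y$ with $Y$ a function of the inputs and of $(r_i)_{i \neq j_0}$ only, the independence of $r_{j_0}$ from $Y$ makes $r_{j_0} + Y \pmod m$ uniform on $\{0,\ldots,m-1\}$, so the win probability conditional on $E^c$ is exactly $1/m$.

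Combining the two cases yields
\begin{equation*}
\omega(\mathrm{UAG}_{t,m}) \geq P(E)\cdot 1 + (1-P(E))\cdot\tfrac{1}{m} = \tfrac{1}{m} + \tfrac{m-1}{m}\,t^{1-t}.
\end{equation*}
The only small checks needed are (i) the geometric facts on $E$ (that $\phi_t \in (1/t,1)$ and $N=1$), which follow directly from $\phi_t = 1 - S$, and (ii) the uniformity on $E^c$, which reduces to the one-line observation that adding an independent uniform $\mathbb{Z}_m$-variable to anything gives the uniform distribution modulo $m$. I do not foresee any genuine obstacle.
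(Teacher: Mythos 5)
Your proof is correct and follows essentially the same approach as the paper: both partition $[0,1)$ into $t$ equal arcs, win deterministically on the probability-$t^{1-t}$ event that the first $t-1$ inputs land in a prescribed bucket, and fall back to an independent uniform output otherwise so that the win probability on the complement is exactly $1/m$. The only cosmetic difference is that the paper uses shared randomness to guess the bucket index $x_j=\lfloor t\phi_j\rfloor$ of each of the first $t-1$ players (the last player then computes $l$ from the guesses), while you fix the guess to bucket $0$ for everyone and observe directly that the target must then be $l=1$; since the inputs are uniform, both choices hit the success event with the same probability $t^{1-t}$.
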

\begin{proof}
    The players get inputs from $U_t$ as in Definition~\ref{def:uag}.
    On input $\phi_i$, player $i$ computes $x_i = \lfloor t \; \phi_i \rfloor$, so that $x_i \in \{0,...,t-1\}$. We then have $\frac{1}{t} \sum_{i=1}^t x_i \leq \sum_{i=1}^t \phi_i < 1 + \frac{1}{t}\sum_{i=1}^t x_i$ and since the correct answer $l$ is given by $l = \sum_{i=1}^t \phi_i$, we see that $l$ is uniquely determined by the sum of $x_i$. Now using shared randomness the players sample $t-1$ random numbers from $\{0,...,t-1\}$. With probability $\frac{1}{t^{t-1}}$ these numbers are exactly equal to $x_1,...,x_{t-1}$. The last player assumes that the random numbers are indeed $x_1,...,x_{t-1}$ and outputs the correct $l$. The other players output $0$ if their $x_i$ matches the random sample and output a random number otherwise. This yields a bias of $\frac{1}{t^{t-1}}$, or winning probability of $\frac{1}{m} + \frac{m-1}{m} t^{1-t}$.
\end{proof}

Theorem~\ref{thm:perfectghz} follows from Lemma~\ref{lemma:reducetoangle}, Claim~\ref{claim:hardestdistribution} and Claim~\ref{claim:uaglowerbound}.

We proceed by describing a strategy for $\mathrm{UAG}_{t,m}$ that improves on the bound of Claim~\ref{claim:uaglowerbound}.
Let $(\phi_1,...,\phi_t)$ be drawn from $\pi_{U_t}$. Define $\Phi = \phi_1+\phi_2+...+\phi_{t-1}$ then by definition of $\pi_{U_t}$, $\Phi$ is the sum of $t-1$ independent uniform $[0,1)$ variables. Furthermore, the last input is $\phi_t = \lceil\Phi\rceil - \Phi$ and the correct answer $l$ is defined by $l \equiv \lceil\Phi\rceil \mod m$.
The distribution of $\Phi$ is known as the Irwin-Hall distribution \cite{Irwin:1927,Hall:1927}:
\begin{align*}
    \mathbb{P}(\Phi \leq x) = \frac{1}{(t-1)!} \sum_{j=0}^{\lfloor x \rfloor} (-1)^j \binom{t-1}{j} (x-j)^{t-1} .
\end{align*}
We consider a set of strategies that we call the \emph{semi-trivial strategies}, in which the first $t-1$ players always output 0. The last player then plays optimally when given the input $x$.
We conjecture that this strategy is optimal for the uniform angle game. In the semi-trivial strategy, the last player chooses the $l$ that maximizes
\begin{align} \label{eq:marginalprobability}
    \mathbb{P}(\text{correct answer is }l \mid \text{last input is } x) =
    \mathbb{P}( \lceil \Phi \rceil \equiv l \mod m \mid \phi_t = x ) .
\end{align}
This probability is plotted as a function of $x$ in Figure~\ref{fig:marginalprobability}.
\begin{figure}
    \centering
    \includegraphics[width=\columnwidth]{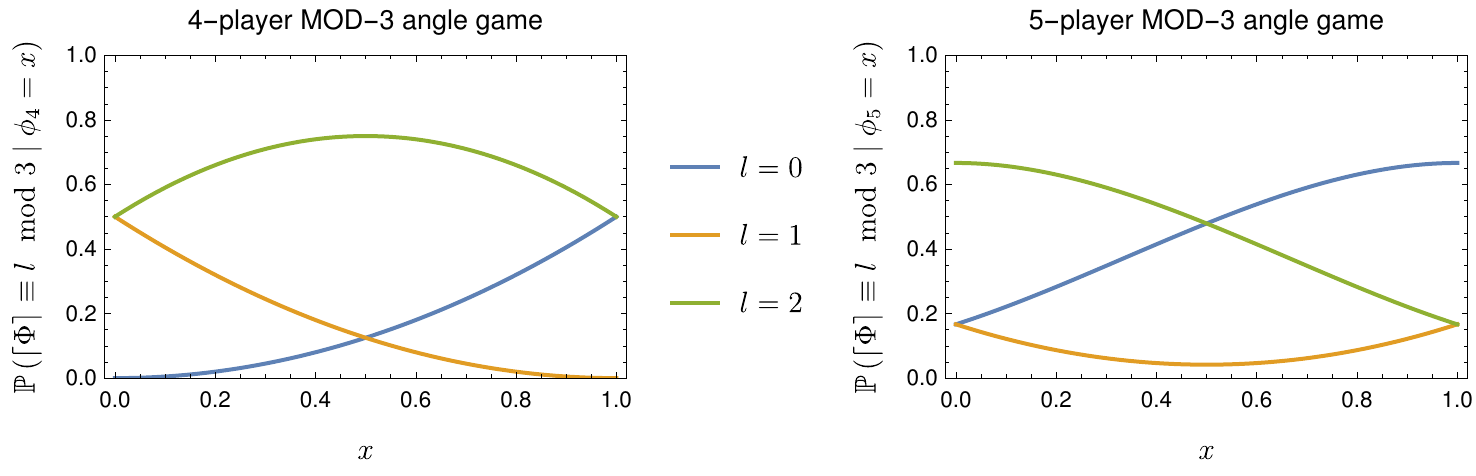}
    \caption{\label{fig:marginalprobability} Probability that the correct answer of the angle game is $l$, conditioned on the last player receiving input $x$, as defined in Equation~\eqref{eq:marginalprobability}.}
\end{figure}
The figure shows that for a 4-player MOD-3 game (left plot) the optimal choice for the last player is to ignore the input $x$ and always output $2$. Interestingly, this is the pattern we observe for any value of $m$ for any even number of players (checked up to $m=10,t=10$): the value of $l$ for which this probability is maximal is independent of $x$. This means that these strategies (for $t$ even) are locally optimal in the sense that changing any single player's strategy will not improve the winning probability.
The right plot of the figure shows that for 5 players the optimal answer depends on whether $x\leq \frac{1}{2}$ or $x > \frac{1}{2}$. This, too, is seems to be a general pattern for any $m$ and any odd number of players (checked up to $m=10,t=11$).
The winning probabilities provided by these semi-trivial strategies are given in Table~\ref{tab:winningprobs} for $m=2$ and $m=3$.
The semi-trivial strategies are lower bounds for the winning probability of \emph{every} $t$-player angle game. We can find upper bounds by finding upper bounds for particular angle games. The table provides some upper bounds obtained from brute-force searching through all strategies for Boyer games.

\begin{table}
    \begin{center}
	\resizebox{\columnwidth}{!}{%
        \begin{tabular}{|l|c|c|c|c|c|c|c|c|}
            \hline
            $t$ (\# players) & 2 & 3 & 4 & 5 & 6 & 7 & 8 & 9 \\
            \hline
            \multicolumn{9}{|l|}{ MOD 2 } \\
            \hline
            lower bound & 1 & 3/4 & 2/3 & 29/48 & 17/30 & 781/1440 & 166/315 & 8341/16128 \\
                        & 1 & 0.75 & 0.6667 & 0.6042 & 0.5667 & 0.5424 & 0.5270 & 0.5172 \\
            \hline
            upper bound & 1 & 3/4  & 43/64  & 155/256 & 583/1024 & 35/64  & 273/512 & 1056/1048\\
                        & 1 & 0.75 & 0.6719 & 0.6055  & 0.5693   & 0.5469 & 0.5332  & 0.5200 \\
            \hline
            \multicolumn{9}{|l|}{ MOD 3 } \\
            \hline
            lower bound & 1 & 3/4 & 2/3 & 115/192 & 11/20 & 785/1536 & 403/840 & 260451/573440 \\
                        & 1 & 0.75 & 0.6667 & 0.5990 & 0.5500 & 0.5111 & 0.4798 & 0.4542 \\

            \hline
            upper bound & 1 & 61/81  & 163/243  & 17/27  & 47/81  & 131/243 & 41/81  & 349/729 \\
                        & 1 & 0.7531 & 0.6708 & 0.6296 & 0.5802 & 0.5391  & 0.5062 & 0.4787 \\
            \hline
        \end{tabular}
    }
    \end{center}
    \caption{\label{tab:winningprobs} Lower and upper bounds for the winning probabilities of the $\mathrm{UAG}_{t,m}$ games (Definition~\ref{def:uag}). The lower bounds are the semi-trivial strategies described in the text. The upper bounds were found by iterating through all strategies for $t$-player Boyer games with values of $D$ up to $D=9$ when computation time allowed it.}
\end{table}

For the case of 3-player XOR games, the upper bound $\omega(\mathrm{UAG}_{3,2}) = 3/4$ is tight. For the 4-player case we have $\omega(\mathrm{UAG}_{4,2}) \geq 2/3$ and it seems that searching through Boyer games gives increasingly better bounds, approaching $2/3$ as the input size $D$ is increased. However, one can show that for any finite $D$ the lower bound of $2/3$ will not be reached, because when $D=2^m$ there is a strategy that achieves a winning probability of $\frac{2}{3}+\frac{4^{-m}}{3}$. The corresponding strategy is that the first $3$ players output $0$ and the fourth player outputs $1$ when their input is $0$ or $1$ and outputs $0$ when their input is $2,...,D-1$. This strategy is optimal for $D=2,4,8$. One can show that for values of $D$ that are not a power of $2$, i.e. $D=D' 2^m$ the game reduces to one with $D=2^m$.

\section{Free XOR games} \label{sec:hypergraphs}
In this section we will define free XOR games and give the definition of hypergraph norms (only for real-valued functions on discrete domains). For more details on hypergraph norms we refer to \cite{HatamiPhD}. We will then relate the hypergraph norm, with respect to a certain hypergraph, of the game tensor to the quantum bias of free XOR games. Our main tool is a Cauchy-Schwarz type of inequality for operators, that is why we will state it here.

\begin{proposition} \label{prop:opcs}
Let $A_i,B_i\in \mathrm{End}(\C^n)$ for $i=1,\dots,k$. Then
\begin{align*}
\Vert \sum_{i\in [k]}A_iB_i\Vert\leq \Vert\sum_{i\in [k]}A_iA_i^*\Vert^{1/2}\;\Vert\sum_{i\in [k]}B_i^*B_i\Vert^{1/2},
\end{align*}
where all the norms are operator norms.
\end{proposition}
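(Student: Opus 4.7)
The plan is to prove this operator Cauchy--Schwarz inequality by reducing it to the scalar Cauchy--Schwarz inequality in a direct-sum Hilbert space. First I would use the variational definition of the operator norm to write
\[
\Big\|\sum_{i\in[k]} A_iB_i\Big\| \;=\; \sup_{\|x\|=\|y\|=1} \Big|\big\langle y,\; \textstyle\sum_i A_iB_i\, x\big\rangle\Big|,
\]
and then rewrite each term as $\langle y, A_iB_ix\rangle = \langle A_i^* y, B_i x\rangle$. The whole expression then becomes a single inner product in the direct sum $(\C^n)^{\oplus k}$ between the vectors $(A_1^*y,\dots,A_k^*y)$ and $(B_1x,\dots,B_kx)$.

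Next I would apply the ordinary Cauchy--Schwarz inequality in that direct sum to obtain the bound $\bigl(\sum_i\|A_i^*y\|^2\bigr)^{1/2}\bigl(\sum_i\|B_ix\|^2\bigr)^{1/2}$. The two sums are exactly the quadratic forms $\langle y,(\sum_i A_iA_i^*)y\rangle$ and $\langle x,(\sum_i B_i^*B_i)x\rangle$ of the positive semidefinite operators appearing in the statement, so for unit vectors they are bounded by the operator norms $\|\sum_i A_iA_i^*\|$ and $\|\sum_i B_i^*B_i\|$ respectively. Taking the supremum over unit $x,y$ then yields the claimed inequality.

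An equivalent ``block matrix'' presentation makes the same argument even cleaner: form $\mathcal{A}=[A_1\;\cdots\;A_k]:(\C^n)^{\oplus k}\to \C^n$ and the block column $\mathcal{B}:\C^n\to (\C^n)^{\oplus k}$ with blocks $B_i$. Then $\mathcal{A}\mathcal{B}=\sum_i A_iB_i$, $\mathcal{A}\mathcal{A}^*=\sum_i A_iA_i^*$, and $\mathcal{B}^*\mathcal{B}=\sum_i B_i^*B_i$, so submultiplicativity of the operator norm together with the $C^*$-identity $\|\mathcal{A}\|^2=\|\mathcal{A}\mathcal{A}^*\|$ (and similarly for $\mathcal{B}$) immediately gives the result. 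The proof has no real obstacle; the only point requiring a little care is keeping track of adjoints and complex conjugates when moving from $\langle y,A_iB_ix\rangle$ to $\langle A_i^*y,B_ix\rangle$. I expect the whole proof to fit in a few lines.
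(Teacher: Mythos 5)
Your proposal is correct, and the block-matrix version is essentially the paper's argument: the paper embeds the same row and column of blocks into an $nk\times nk$ matrix by padding with zeros (so that it can use the identity $\|C\otimes E_{11}\|=\|C\|$), but the content is identical to your $\mathcal{A}\mathcal{B}$, $\mathcal{A}\mathcal{A}^*$, $\mathcal{B}^*\mathcal{B}$ computation plus submultiplicativity and the $C^*$-identity. Your first presentation, via the variational characterization of the operator norm and scalar Cauchy--Schwarz in $(\C^n)^{\oplus k}$, is a modest but genuine variant: it avoids invoking submultiplicativity and the $C^*$-identity as black boxes and instead derives the bound directly from $\langle y, A_iB_ix\rangle=\langle A_i^*y, B_ix\rangle$ together with the positivity of $\sum_i A_iA_i^*$ and $\sum_i B_i^*B_i$. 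Both routes are short and correct; the block-matrix one is slicker once the $C^*$-identity is available, while the inner-product one is more self-contained and makes the ``Cauchy--Schwarz'' name literal.
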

\begin{proof}
Write  
\begin{align*}
A=\begin{bmatrix}
A_1&A_2&\cdots & A_k\\
0&0&\cdots& 0\\
\vdots&\vdots&\vdots&\vdots\\
0&0&\cdots& 0
\end{bmatrix}
\text{and }
B=\begin{bmatrix}
B_1&0&\cdots & 0\\
B_2&0&\cdots& 0\\
\vdots&\vdots&\vdots&\vdots\\
B_k&0&\cdots& 0
\end{bmatrix}.
\end{align*}
Then we use the fact that
\begin{align*}
AB=\begin{bmatrix}
\sum_{i\in[k]}A_iB_i&0&\cdots & 0\\
0&0&\cdots& 0\\
\vdots&\vdots&\vdots&\vdots\\
0&0&\cdots& 0
\end{bmatrix},
\end{align*}
together with the properties
\begin{align*}
&\Vert C\otimes E_{ij}\Vert=\Vert C\Vert,\\
&\Vert C\Vert^2=\Vert C^* C\Vert=\Vert CC^*\Vert,
\end{align*} 
to conclude
\begin{align*}
&\Vert \sum_{i\in[k]}A_iB_i\Vert=\Vert \sum_{i\in[k]}A_iB_i\otimes E_{11}\Vert=\Vert AB\Vert\leq \Vert A\Vert\Vert B\Vert\\
&=\Vert AA^*\Vert^{1/2}\Vert B^*B\Vert^{1/2}=\Vert\sum_{i\in [k]}A_iA_i^*\Vert^{1/2}\Vert\sum_{i\in [k]}B_i^*B_i\Vert^{1/2}.
\end{align*}
\end{proof}
A $t$-player free XOR game $G$ is given by finite non-empty sets $X_1,\dots,X_t$, a product distribution over $X:=X_1\times\cdots\times X_t$ and a game tensor
\begin{equation}
T\colon X\rightarrow \{\pm 1\}. 
\end{equation}
The classical bias of the free XOR game $G$, which we denote by $\beta(G)$ is given by
\begin{align*}
\beta(G):=\max_{a_i\colon X_i\rightarrow \{\pm 1\}}\vert\E_{(x_1,\dots,x_t)\in X}T(x_1,\dots,x_t)\prod_{i=1}^ta_i(x_i)\vert.
\end{align*}
The quantum bias of the free XOR game $G$, which we denote by $\beta^*(G)$ is given by the expression
\begin{equation}
\beta^*(G):=\max_{N\in \N, A_i\colon X_i\rightarrow \mathrm{Obs}^{\pm}(\C^N)}\Vert\E_{(x_1,\dots,x_t)\in X}T(x_1,\dots,x_t)\prod_{i=1}^tA_i(x_i)\Vert_{\text{op}},
\end{equation}
the maximization is taken over $\pm$-observable valued functions $A_i$ such that $[A_i,A_j]=0$ for $i\neq j$, this corresponds to a quantum strategy of the players. The expectation is taken over the given distribution.

Before we go into the detail of the proof of Theorem~\ref{thm:xor_biases} for any number of players, we first sketch the core idea of the proof for \emph{two} players, for which we do not yet need to resort to hypergraphs. For a two-player game $G$ with game tensor $T$, the commuting-operator strategies $A,B$ yield a bias of
\begin{align*}
    \eta=\Vert\E_{(x,y)\in X\times Y}T(x,y)A(x)B(y)\Vert.
\end{align*}
where the norm is the operator norm. Using Proposition~\ref{prop:opcs} we peel off the operator $B(y)$
\begin{align*}
    \eta &= \Big\Vert \E_{y\in Y} \Big( \E_{x\in X}T(x,y)A(x) \Big) B(y) \Big\Vert.\tag{independent questions}\\
         &\leq \Big\Vert \E_{y\in Y} \Big( \E_{x\in X}T(x,y)A(x) \Big) \Big( \E_{x'\in X}T(x',y)A(x') \Big)^* \Big\Vert^{1/2} \;\;\; \Big\Vert \E_{y\in Y} B(y) B(y)^* \Big\Vert^{1/2} \\
         &\leq \Big\Vert \E_{y\in Y} \E_{x,x'\in X} T(x,y) T(x',y) A(x) A(x')^* \Big\Vert^{1/2} \tag{using $\Vert B(y) \Vert \leq 1$}.
\end{align*}
Now we apply the inequality again on the sum over $(x,x')$ to get rid of the $A$ operator.
\begin{align*}
    \eta &\leq \Big\Vert \E_{x,x' \in X} \Big( \E_{y\in Y} T(x,y) T(x',y) \Big) A(x) A(x')^* \Big\Vert^{1/2} \\
         &\leq \Big\vert \E_{x,x'} \big( \E_{y} T(x,y) T(x',y) \big) \big( \E_{y'} T(x,y') T(x',y') \big) \Big\vert^{\frac{1}{4}} \;
         \Big\Vert \E_{x,x'} \big( A(x) A(x')^* \big) \big(A(x) A(x')^* \big)^* \Big\Vert^{\frac{1}{4}} \\
         &\leq \Big\vert \E_{x,x'\in X} \E_{y,y'\in Y} T(x,y) T(x',y) T(x,y') T(x',y') \Big\vert^{1/4} . \tag{using $\Vert A(x) \Vert \leq 1$}
\end{align*}
We proceed by rewriting the last expression
\begin{align*}
    \eta^4 &\leq \Big\vert \E_{(x',y')\in X\times Y} T(x',y') \E_{(x,y)\in X \times Y} T(x,y) T(x',y) T(x,y') \Big\vert \\
           &\leq \E_{(x',y')\in X\times Y} \Big\vert \E_{(x,y)\in X \times Y} T(x,y) T(x',y) T(x,y') \Big\vert . \tag{triangle inequality}
\end{align*}
By the pigeonhole principle there must be choices of $x',y'$ such that
\begin{align*}
    \eta^4 &\leq \Big\vert \E_{(x,y)\in X \times Y} T(x,y) T(x',y) T(x,y') \Big\vert ,
\end{align*}
which is the expression for the bias of the classical strategies $a(x) = T(x,y')$ and $b(y) = T(x',y)$, proving Theorem~\ref{thm:xor_biases} for $t=2$ players.
For $t\geq 3$ we can apply the same idea, peeling off the operators one by one, but the final expression is more involved. We will now develop the techniques to deal with this. In particular, we need the notion of hypergraph norms. For our purposes, we only consider $t$-uniform hypergraphs which are also $t$-partite. 
\begin{definition}\label{def:hypergraph}
For $t\geq 2$, let $V_1,\dots,V_t$ be finite non-empty sets and $V:=V_1\times\cdots\times V_t$. Given a subset $E\subset V$, we say that the pair $H=(V_1\cup\dots\cup V_t,E)$ is a $t$-partite $t$-uniform hypergraph with vertex set $V_1\cup\dots\cup V_t$ and edge set $E$.
\end{definition}
\begin{definition}\label{def:hypergraph_norm}
Let $t\geq 2$ and $X_1,\dots,X_t$ be finite non-empty sets and suppose a product distribution on $X:=X_1\times\cdots\times X_t$ is given to us. Let $T\colon X\rightarrow \R$ be a function and $H=(V_1\cup\dots\cup V_t,E)$ be a $t$-partite $t$-uniform hypergraph. We define a non-negative function $\Vert\cdot \Vert_H$ on the function $T$ by
\begin{equation}
\Vert T\Vert_H:=\left|\E_{\phi_i\colon V_i\rightarrow X_i}\prod_{(v_1,\dots,v_t)\in E}T(\phi_1(v_1),\dots,\phi_t(v_t))\right|^{\frac{1}{\vert E\vert}}.
\end{equation}
The expectation is taken with respect to the following distribution: a particular map $\phi_i\colon V_i\rightarrow X_i$ occurs with probability $\prod_{v\in V_i}p_i(\phi_i(v))$ where $p_i$ is the probability distribution on $X_i$.
\end{definition}
The particular hypergraph which arises naturally when we study the quantum bias of free XOR games is constructed as follows. Starting with a $t$-partite $t$-uniform hypergraph $H$, write $\db_i(H)$ for the $t$-partite $t$-uniform hypergraph obtained by making two vertex-disjoint copies of $H$ and gluing them together so that the vertices in the two copies of $V_i$ are identified. To construct our hypergraph, we start with the hypergraph given by a single edge $e=(v_1,\dots,v_t)$ and vertex sets of size 1 and apply the doubling operation to all parts, i.e. $\db_1(\db_{2}(\dots \db_t(e) ))$. We denote this hypergraph by $H(t)$. A more useful way to define $H(t)$ is as follows. We will do this first for $t=2$ and explain how to do it for any $t$ afterwards. We use 2-bit strings to identify vertices. We start with the hypergraph with a single edge $(x_{00},y_{00})\in V_1\times V_2$. As we will start using the doubling operator, we make copies of the vertex sets. We can use a table to visualize it.
	
\begin{center}
\begin{tabular}{ c|c|c}
\text{ }& $V_1$&$V_2$\\ \hline
\text{starting position} 	& $x_{00}$ & $y_{00}$\\ \hline
\text{$\db_2$} 	& $x_{01}$ & $y_{00}$\\ \hline
\text{$\db_1$} 	& $x_{00}$ & $y_{10}$\\
\text{ } 					& $x_{01}$ & $y_{10}$ \\ 
\end{tabular}
\end{center}
The table may be read as follows; the rows are the edges of the hypergraph and columns are the vertex sets. In this example we have that $V_1=\{x_{00},x_{01}\}$ and $V_2=\{y_{00},y_{10}\}$ and the edge set consists of $\{(x_{00},y_{00}),(x_{01},y_{00}), (x_{00},y_{10}),(x_{01},y_{10})\}$. The algorithm for constructing the table is as follows: we start with the starting position row, which corresponds to the (hyper)graph with a single edge $(x_{00},y_{00})$, and as we apply the doubling operator $\db_2$, we add a new row (which corresponds to making a vertex-disjoint copy) where we increase the 2nd bit in the subscript of $x$ but leave $y$ alone (so we have a new copy of $V_1$ but not of $V_2$). After this first step we have a graph with vertex sets $V_1=\{x_{00},x_{01}\}$ and $V_2=\{y_{00}\}$ and edge set $\{(x_{00},y_{00}),(x_{01},y_{00})\}$. Next we apply $\db_1$ and we get a new copy of $V_2$, but leave $V_1$ alone. \newline
For arbitrary $t\geq 2$; let $v^i_\omega$ be a formal variable with $i\in [t]$ and $\omega$ is a $t$-bit string. We define for $j\in [t]$ an operation on the formal variable by 
\begin{align*}
\Delta_j(v^i_\omega)&:=v^i_{\omega_1,\dots,\omega_j+1,\dots,\omega_t} \text{ for }j\neq i\\
\Delta_i(v^i_\omega)&:=v^i_\omega.
\end{align*} 
where we add modulo 2. The table then looks like
\begin{center}
\begin{tabular}{ c|c|c|c|c}
\text{ }& $V_1$&$V_2$ &\dots & $V_t$ \\ \hline
\text{starting position} 	& $v^1_{0^t}$ & $v^2_{0^t}$ & \dots & $v^t_{0^t}$\\ \hline
\text{$\db_t$} 	& $\Delta_t(v^1_{0^t})$ & $\Delta_t(v^2_{0^t})$ &\dots & $\Delta_t(v^t_{0^t})$\\ \hline
\text{$\db_{t-1}$} 	& $\Delta_{t-1}(v^1_{0^t})$ & $\Delta_{t-1}(v^2_{0^t})$ & \dots & $\Delta_{t-1}(v^t_{0^t})$\\
\text{ } 					& $\Delta_{t-1}(\Delta_t(v^1_{0^t}))$ & $\Delta_{t-1}(\Delta_t(v^2_{0^t}))$  &\dots & $\Delta_{t-1}(\Delta_t(v^t_{0^t}))$\\ \hline
\dots &\dots &\dots&\dots&\dots\\
\end{tabular}
\end{center}
At step $k$, the algorithm takes all the rows of the previous steps together and applies $\Delta_{t-k+1}$ on each of the formal variables in the rows. We also write $\db_i(e)$ for the row where we apply $\Delta_i$ on each variable of the row $e$. We see in this way that, for example, the edge set of $H(t)$ has cardinality $2^t$ and the number of vertices in each $V_i$ is $2^{t-1}$. In the following proposition we list some properties of $H(t)$ which we prove using this description. We will be using the terms row and edge interchangeably as they mean the same in this context.
\begin{proposition}\label{prop:properties_H(T)}
The hypergraph $H(t)$ has the following properties: (1) it is $t$-partite and $t$-uniform, (2) it is 2-regular and (3) for all vertices $v$ the following holds: let $e,e'$ be the unique edges such that $v\in e$, $v\in e'$ and $e\neq e'$. For $w\in e\setminus \{v\}$, denote by $e,e''$ the unique edges such that $w\in e$, $w\in e''$ and $e \neq e''$. Then $e'\cap e''=\emptyset$.
\end{proposition}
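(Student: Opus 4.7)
The plan is to replace the recursive construction of $H(t)$ with an explicit combinatorial description that makes all three properties transparent, rather than tracking the doubling operations one at a time. First I would show by induction on the number of applied doublings that the vertex set $V_i$ can be identified with the bit strings $\omega\in\{0,1\}^{[t]}$ whose $i$-th coordinate is fixed to $0$, and that the edges are in bijection with the subsets $S\subseteq [t]$ via
\[
e_S \;=\; \{v^i_{\omega^{(i,S)}} : i\in[t]\}, \qquad \omega^{(i,S)}_j=[j\in S]\ \text{for}\ j\ne i,\ \ \omega^{(i,S)}_i=0.
\]
The observation driving this identification is that the operators $\Delta_j$ pairwise commute: each $\Delta_j$ acts only on the $j$-th subscript coordinate and leaves $v^j_{\omega}$ untouched, so the row produced by a sequence of doublings depends only on the set of applied indices, not on the order. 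Concretely, applying exactly the doublings indexed by $S$ to the starting row yields $e_S$.

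Once this bijection is in place, properties (1) and (2) fall out immediately. For (1), each $e_S$ contains exactly one vertex from each $V_i$, so $H(t)$ is both $t$-uniform and $t$-partite. For (2), the vertex $v^i_\omega$ lies in $e_S$ iff $\omega=\omega^{(i,S)}$, which fixes the values of $S$ on $[t]\setminus\{i\}$ but leaves the membership of $i$ in $S$ free; this gives exactly two choices of $S$ and hence degree $2$.

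For the structural property (3), let $v=v^i_\omega$; then by (2) its two edges are $e=e_S$ and $e'=e_{S\triangle\{i\}}$. Any $w\in e\setminus\{v\}$ has the form $w=v^k_{\omega^{(k,S)}}$ with $k\ne i$, and (2) again identifies its other edge as $e''=e_{S\triangle\{k\}}$. A vertex $v^j_\eta$ lying in both $e'$ and $e''$ would require $\omega^{(j,S\triangle\{i\})}=\omega^{(j,S\triangle\{k\})}$, but these two labels disagree on precisely the coordinates in $\{i,k\}\setminus\{j\}$. Since $i\ne k$, this set is nonempty regardless of $j$, so no such shared vertex exists and $e'\cap e''=\emptyset$.

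The main obstacle, if any, is the bookkeeping to establish the edge$\leftrightarrow$subset bijection cleanly; after that, all three claims reduce to elementary symmetric-difference manipulations on the bit-string labels, and I expect no further technical difficulty.
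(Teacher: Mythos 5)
Your proof is correct. You replace the paper's table-and-ordering bookkeeping with an explicit bijection between the edges of $H(t)$ and subsets $S\subseteq[t]$, justified by the observation that the $\Delta_j$'s commute (so the row produced depends only on the set of applied doublings, not their order). This is a genuine, if modest, simplification over the paper's argument: the paper proves (2) and especially (3) by casework on whether $\db_i$ has yet been applied and on whether $j>i$ or $j<i$ in the ordering of the table, finishing (3) with ``a moment's thought shows $e'\cap e''=\emptyset$,'' whereas your parametrization reduces (2) to noting that $\omega$ pins down $S$ outside $i$ and leaves $[i\in S]$ free, and reduces (3) to the computation $(S\triangle\{i\})\triangle(S\triangle\{k\})=\{i,k\}\neq\emptyset$. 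Both proofs rest on the same combinatorial content (the bit-string subscripts of the $v^i_\omega$); yours just names the invariant (the subset $S$) up front, which makes the three claims essentially immediate and avoids any dependence on the order in which the doubling operations were performed.
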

\begin{proof}
(1) follows directly from the algorithm described above using the table. We can prove (2) as follows. Suppose in column $V_i$ we have a vertex in some row/edge which we denote by $v^i_{\omega}$, here $\omega$ is a $t$-bit string. First we note that applying $\db_j$ with $j\neq i$ will change $\omega$ as it will flip the $j$-th bit. There are two cases; either we have already applied $\db_i$ in which case $v^i_\omega$ appears in exactly one more row above the current row, or we have not applied $\db_i$ yet in which case there is no $v^i_\omega$ in an earlier row. It will appear exactly once in a later row since applying $\db_i$ will not change $\omega$. For (3), choose again some vertex $v^i_\omega$ in $V_i$ and denote by $e$ the row which appears first in the table containing $v^i_\omega$. The other row/edge which contains $v^i_\omega$ is $e':=\db_i(e)$. Now, let $v^j_\tau$ be a vertex in $V_j$ with $j\neq i$ and $v^j_\tau\in e$, i.e. it is in the same row as $v^i_\omega$. There are two cases; either $j>i$ in which case $e=\db_j(e'')$ where $e''$ is the other (unique) edge containing $v^j_\tau$. Or $j<i$ and the other edge which contains $v^j_\tau$ is $e'':=\db_j(e)$. In any case, a moments thought shows that $e'\cap e''=\emptyset$.
\end{proof}
The next ingredient is the following lemma. 
\begin{lemma}\label{lemma:upperbound_qbias_freeXOR}
For a $t$-player free XOR game $G$ with game tensor $T$, we have that
\begin{align*}
\beta^*(G)\leq \Vert T\Vert_{H(t)}.
\end{align*}
\end{lemma}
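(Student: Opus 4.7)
The plan is to iterate Proposition~\ref{prop:opcs} exactly $t$ times, peeling off the players' observables in the order $i = t, t-1, \ldots, 1$, generalizing the two-player sketch already presented. Freeness of the game (the product structure of the input distribution on $X$) is essential: it lets the expectation over player $i$'s question $x_i$ be cleanly separated from the expectations over the remaining variables at each step, which is what Proposition~\ref{prop:opcs} demands.

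At the first step, I use the commutativity $[A_j(x_j), A_\ell(x_\ell)] = 0$ for $j\neq \ell$ to isolate $A_t(x_t)$ on the right, writing the quantum bias as $\|\E_{x_t} C(x_t) A_t(x_t)\|$ where $C(x_t) = \E_{x_1,\dots,x_{t-1}} T(x_1,\dots,x_t)\prod_{j<t} A_j(x_j)$. Proposition~\ref{prop:opcs} then yields a bound in terms of $\|\E_{x_t} C(x_t) C(x_t)^*\|^{1/2}$, because the ``$B$-factor'' $\|\E_{x_t} A_t(x_t) A_t(x_t)^*\|$ equals $1$ for a $\pm$-observable. Expanding $C(x_t) C(x_t)^*$ introduces a second copy $x_1',\dots,x_{t-1}'$ of the first $t-1$ variables, so $T$ now appears twice and each $A_j$ with $j<t$ appears in the pattern $A_j(x_j) A_j(x_j')^*$. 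At step $k$, the operators of the already-peeled players $t, t-1, \ldots, t-k+2$ appear in a nested pattern of $A$'s and $A^*$'s across $2^{k-1}$ copies of the remaining variables; commutativity lets me bundle the current player's operators $A_{t-k+1}$ at one end, and applying Proposition~\ref{prop:opcs} once more doubles the remaining variables, leaving $T$ with $2^k$ copies.

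Two structural facts keep the induction going. First, the ``$B$-factor'' trivializes at every step: the operator being peeled off is a product $\prod_\omega A_i(x_i^\omega)^{\pm}$ indexed by a bit string $\omega$, and multiplying by its adjoint yields a telescoping product that collapses to $I$ because each $A_i(x_i^\omega)$ is self-adjoint and squares to~$I$. Second, the combinatorics of the variables after $k$ peel-offs matches exactly the iterated doubling $\db_{t-k+1}(\cdots \db_t(e))$ from Section~\ref{sec:hypergraphs}: each vertex of this intermediate hypergraph corresponds to a copy of a variable, and each edge to a copy of $T$. After all $t$ rounds of Cauchy-Schwarz, what remains is a purely scalar expression, and it satisfies
\begin{equation*}
\beta^*(G)^{2^t} \leq \Big| \E_{\phi_i \colon V_i \to X_i} \prod_{(v_1,\dots,v_t) \in E(H(t))} T(\phi_1(v_1),\dots,\phi_t(v_t)) \Big|,
\end{equation*}
which, after taking the $2^t$-th root and comparing with Definition~\ref{def:hypergraph_norm}, is precisely $\beta^*(G) \leq \|T\|_{H(t)}$.

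The main obstacle is bookkeeping: explicitly tracking how the variables acquire their bit-string indices and verifying that the indexing produced by the $k$-th peel-off aligns with the vertex labels $v^i_\omega$ of the intermediate hypergraph from the algorithm described in Section~\ref{sec:hypergraphs}. The cleanest way to handle this is by induction on the number of peel-offs performed, with the inductive hypothesis being a bound of the form $\beta^*(G)^{2^{k}} \leq \|\E [\text{doubled product of }T\text{'s}] \cdot [\text{remaining operators}]\|$ whose hypergraph structure is $\db_{t-k+1}(\cdots \db_t(e))$; the base case is the original bias, and the final case $k=t$ recovers the lemma.
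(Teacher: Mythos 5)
Your proposal is correct and follows essentially the same route as the paper: iterate Proposition~\ref{prop:opcs} to peel off $A_t,\dots,A_1$ one at a time, tracking the resulting doubling of variables via the iterated $\db_i$ construction of $H(t)$, and noting that the adjoint factor at each step contributes nothing more than~$1$. The only cosmetic difference is that you observe the $B$-factor telescopes exactly to $I$ using $A_i(\cdot)^2 = I$, whereas the paper simply uses $\Vert A_i(\cdot)\Vert_{\mathrm{op}} \leq 1$; both suffice.
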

\begin{proof}
For convenience, we write the hypergraph in a slightly different way. Write $\phi_i\colon V_i\rightarrow X_i$ and we define an operation $\Delta_j$ on such maps in the same way as above, i.e.
\begin{align*}
(\Delta_j\phi_i)(v^i_\omega)&=\phi_i(\Delta_jv^i_\omega)\text{ for }j\neq i\\
(\Delta_i\phi_i)(v^i_\omega)&=\phi_i(v^i_\omega).
\end{align*}
Also, using the same symbol, we define on functions $T\colon X_1\times\cdots\times X_t\rightarrow \C$
\begin{align*}
    \Delta_jT(\phi_1(v^1_{\omega^1}),\dots,\phi_t(v^t_{\omega^t})):=T(\phi_1(v^1_{\omega^1}),\dots,\phi_t(v^t_{\omega^t})) T^*((\Delta_j\phi_1)(v^1_{\omega^1}),\dots,(\Delta_j \phi_t)(v^t_{\omega^t})),
\end{align*}
one could think of this operation as a kind of multiplicative derivative. If $T$ were an operator-valued map, we still define it in this way. It is then not hard to see that
\begin{align*}
\Delta_1\dots\Delta_tT(\phi_1(v^1_{0^t}),\dots,\phi_t(v^t_{0^t}))=\prod_{(v^1_{\omega^1},\dots,v^t_{\omega^t})\in E(H(t))}T(\phi_1(\omega^1),\dots,\phi_t(\omega^t)),
\end{align*}
using the table as a description of $H(t)$. So we can write
\begin{align*}
\Vert T\Vert_{H(t)}=\vert\E \Delta_1\dots\Delta_tT(\phi_1(v^1_{0^t}),\dots,\phi_t(v^t_{0^t})) \vert^{1/|E|},
\end{align*}
where the expectation is taken over all maps $\phi_i\colon V_i\rightarrow X_i$ with the particular distribution given in Definition~\ref{def:hypergraph_norm}.\newline
Now let us look at the bias of a two player game $G$ with game tensor $T$ and strategies $A,B$
\begin{align*}
\eta=\Vert\E_{(x,y)\in X\times Y}T(x,y)A(x)B(y)\Vert_{\text{op}}.
\end{align*}
    We will do the example of two players to clarify the idea and will prove it in general afterwards. First we use Proposition~\ref{prop:opcs}, the Cauchy-Schwarz inequality, to peel off strategy $B$
\begin{align*}
\eta &=\Vert\E_{y\in Y}\big(\E_{x\in X}T(x,y)A(x)\big)B(y)\Vert_{\text{op}}\\
&\leq \Vert \E_y\big(\E_x T(x,y)A(x)\big)\big(\E_x T(x,y)A(x)\big)^*\Vert_{\text{op}}^{1/2} \;\; \Vert\E_yB(y)^*B(y)\Vert_{\text{op}}^{1/2}\\
&\leq \Vert \E_{y,x,x'}T(x,y)T(x',y)A(x)A(x')^* \Vert_{\text{op}}^{1/2}.
\end{align*}
In the second inequality we used that operator norm of strategies are smaller or equal to 1. We will use the Cauchy-Schwarz inequality one more time, now to peel off strategy $A$
\begin{align*}
\eta & \leq \Vert \E_{x,x'}(\E_y T(x,y)T(x',y))A(x)A(x')^* \Vert_{\mathrm{op}}^{1/2}\\
&\leq \Vert \E_{x,x'} (\E_y T(x,y)T(x',y))(\E_y T(x,y)T(x',y))^* \Vert_{\mathrm{op}}^{1/4} \;\; \Vert\E_{x,x'}(A(x)A(x')^*)^*A(x)A(x')^*\Vert_{\mathrm{op}}^{1/4}\\
& \leq \vert \E_{x,x',y,y'} T(x,y)T(x',y)T(x,y')T(x',y') \vert^{1/4}.
\end{align*}
    Now, to see that this last expression is equal to $\Vert T\Vert_{H(2)}$, we write the expectation in a a different way. Instead of writing $\E_{x,x'}$ we write $\E_{\phi\colon V\rightarrow X}$ where $V=\{v_0,v_1\}$ is a vertex set, so that $x=\phi(v_0)$ and $x' = \phi(v_1)$. Similarly, instead of $\E_{y,y'}$ we write $\E_{\psi\colon W\rightarrow Y}$ where $W=\{w_0,w_1\}$ and we view $H(2)$ to be on this vertex sets. Then, we evaluate $T$ on the edges of $H(2)$, so
\begin{align*}
\vert \E_{x,x',y,y'} T(x,y)T(x',y)T(x,y')T(x',y') \vert^{1/4}=\vert \E_{\phi\colon V\rightarrow X,\psi\colon W\rightarrow Y} \prod_{(v,w)\in E(H(2))}T(\phi(v),\psi(w)) \vert^{1/4}.
\end{align*}

In general, for $t$ players, the proof is as follows
\begin{align*}
\eta:=& \Vert\E T(\phi_1(v^1_{0^t}),\dots,\phi_t(v^t_{0^t}))\left(\prod_{i\in [t-1]}A_i(\phi_i(v^i_{0^t}))\right)A_t(\phi_t(v^t_{0^t}))\Vert_{\mathrm{op}}\\
&\leq \Vert\E T(\phi_1(v^1_{0^t}),\dots,\phi_t(v^t_{0^t}))T(\phi_1(v^1_{0^{t-1}1}),\dots,\phi_t(v^t_{0^t})) \prod_{i\in [t-1]}A_i(\phi_i(v^i_{0^t}))A_i(\phi_i(v^i_{0^{t-1}1}))^*\Vert_{\mathrm{op}}^{1/2}\\
&=\Vert\E\Delta_tT(\phi_1(v^1_{0^t}),\dots,\phi_t(v^t_{0^t}))\prod_{i\in [t-1]}\Delta_t A_i(\phi_i(v^i_{0^t}))\Vert_{\mathrm{op}}^{1/2}.
\end{align*}
Now assume that we have applied the Cauchy-Schwarz inequality $1<n<t$ times to peel off the last $n$ operators and we have obtained the expression
\begin{align*}
\eta\leq \Vert\E\Delta_{t-n+1}\cdots\Delta_tT(\phi_1(v^1_{0^t}),\dots,\phi_t(v^t_{0^t}))\prod_{i\in [t-n]}\Delta_{t-n+1}\cdots\Delta_t A_i(\phi_i(v^i_{0^t}))\Vert_{\mathrm{op}}^{1/2^n}.
\end{align*}
Now apply Cauchy-Schwarz inequality to remove the operator $\Delta_{t-n+1}\cdots\Delta_t A_{t-n}(\phi_i(v^i_{0^t}))$ so that we obtain
\begin{align*}
\eta\leq \Vert\E\Delta_{t-n}\cdots\Delta_tT(\phi_1(v^1_{0^t}),\dots,\phi_t(v^t_{0^t}))\prod_{i\in [t-n-1]}\Delta_{t-n}\cdots\Delta_t A_i(\phi_i(v^i_{0^t}))\Vert_{\mathrm{op}}^{1/2^{n+1}}.
\end{align*}
This completes the induction. Putting $n=t-1$ we have the inequality
\begin{align*}
\eta\leq \vert\E\Delta_1\cdots\Delta_tT(\phi_1(v^1_{0^t}),\dots,\phi_t(v^t_{0^t}))\vert^{1/2^t}.
\end{align*}
\end{proof}
We are now ready to give a proof of Theorem~\ref{thm:xor_biases}
\begin{proof}[Proof of Theorem~\ref{thm:xor_biases}]
We assume $\beta^*(G)>\eta$. Lemma~\ref{lemma:upperbound_qbias_freeXOR} immediately implies $\Vert T\Vert_{H(t)}>\eta$. To construct a classical strategy, we choose an edge $e^*=(v_1^*,\dots,v_t^*)\in E(H(t))$. Any choice of edge works for our argument. $H(t)$ is 2-regular (by Proposition~\ref{prop:properties_H(T)}), so denote by $e_i^*$ the unique edge different from $e^*$ such that $v_i^*\in e_i^*$. Write $e_i^*=(v_1^{(i)},\dots,v_i^*,\dots,v_t^{(i)})$ and $V_i':=V_i\setminus\{v_i^*\}$. Using Proposition~\ref{prop:properties_H(T)} we see that $v_j^*\notin e_i^*$ whenever $i\neq j$. Then
\begin{align*}
\eta^{2^t} & <\vert\E_{\phi_i\colon V_i\rightarrow X_i}\prod_{(v_1,\dots,v_t)\in E}T(\phi_1(v_1),\dots,\phi_t(v_t))\vert\\
&=\vert \E_{\phi_i\colon V_i'\rightarrow X_i} [ \prod_{(v_1,\dots,v_t)\in E\setminus\{e^*,e_1^*,\dots,e_t^*\}}T(\phi_1(v_1),\dots,\phi_t(v_t))\\
&\E_{\phi_i^*\colon\{v_i^*\}\rightarrow X_i} T(\phi_1^*(v_1^*),\dots,\phi_t(v_t^*)) T(\phi_1^*(v_1^*),\dots,\phi_t(v_t^{(1)}))\cdots T(\phi_1(v_1^{(t)}),\dots,\phi_t^*(v_t^*)) ] \vert\\
&\leq \E_{\phi_i\colon V_i'\rightarrow X_i}\vert \E_{\phi_i^*\colon\{v_i^*\}\rightarrow X_i}T(\phi_1^*(v_1^*),\dots,\phi_t^*(v_t^*))T(\phi_1^*(v_1^*),\dots,\phi_t(v_t^{(1)}))\\
& \cdots T(\phi_1(v_1^{(t)}),\dots,\phi_t^*(v_t^*))\vert.
\end{align*}
Let us explain the second and third line in detail. Write $V_i=V_i'\cup \{v_i^*\}$. Any map $\phi_i\colon V_i\rightarrow X_i$ can be given by two maps $\phi_i'\colon V_i'\rightarrow X_i$ and $\phi_i^*\colon \{v_i^*\}\rightarrow X_i$ by defining $\phi_i(v)$ to be $\phi'_i(v)$ when $v\in V_i'$ and otherwise equal to $\phi_i^*(v)$. It can then be seen that
\begin{align*} 
 \E_{\phi_i\colon V_i\rightarrow X_i}(\text{some expression})=\E_{\phi_i'\colon V_i'\rightarrow X_i}[\E_{\phi_i^*\colon \{v_i^*\}\rightarrow X_i}(\text{some expression})].
 \end{align*}
After this we use the triangle inequality. Using the pigeonhole principle we see that there exists specific choices of maps $\phi_i\colon V_i'\rightarrow X_i$ such that
\begin{align*}
\vert \E_{\phi_i^*\colon\{v_i^*\}\rightarrow X_i}T(\phi_1^*(v_1^*),\dots,\phi_t^*(v_t^*))T(\phi_1^*(v_1^*),\dots,\phi_t(v_t^{(1)}))\cdots T(\phi_1(v_1^{(t)}),\dots,\phi_t^*(v_t^*))\vert>\eta^{2^t}.
\end{align*}
The expectation over $t$-tuples of maps $\phi_i^*\colon \{v_i^*\}\rightarrow X_i$ is the same as the expectation over $t$-tuples $x_i^*\in X_i$ and by defining $$a_i(x_i^*):=T(\phi_1(v_1^{(i)}),\dots,x_i^*,\dots,\phi_t(v_t^{(i)}))$$
we see that 
\begin{align*}
\vert \E_{x_1^*,\dots,x_k^*}T(x_1^*,\dots,x_k^*)\prod_{i=1}^ka_i(x_i^*)\vert>\eta^{2^t},
\end{align*}
in other words, the classical bias is at least $\eta^{2^t}$.
\end{proof}

\section{Linear forms game}\label{sec:linearFormsGame}
Before we will go in to the details of the proof of Theorem~\ref{thm:line_games}, we briefly discuss some concepts from higher order Fourier analysis. The reference for this subsection is \cite{tao2012higher}.
 
\subsection{Preliminaries}
Let $G$ be a finite abelian group and $f\colon G\rightarrow \C$ a complex-valued function on $G$. First we define the multiplicative derivative $\Delta_h$ for any $h\in G$ for such functions
\begin{align*}
\Delta_h f(x):=f(x+h)\overline{f(x)}.
\end{align*}
We define for any $s\geq 1$ the Gowers norm $\Vert\cdot\Vert_{U^{s}(G)}$
\begin{align}\label{def:gowersNorm}
\Vert f\Vert_{U^{s}(G)}:=(\E_{h_1,\dots,h_s,x\in G}\Delta_{h_1}\cdots\Delta_{h_s}f(x))^{1/2^s}.
\end{align}
For $s=1$ we get the absolute value of the mean of the function 
\begin{align*}
    \Vert f\Vert_{U^{1}(G)}:=(\E_{h,x\in G}\Delta_{h}f(x))^{1/2} = \vert\E_{x\in G} f(x) \vert,
\end{align*}
so technically it is not a norm, but for $s>1$ it is indeed a norm. By the recursion
\begin{align*}
\Vert f\Vert_{U^{s+1}(G)}^{2^{s+1}}=\E_{h\in G}\Vert\Delta_h f\Vert_{U^s(G)}^{2^s}
\end{align*}
one sees that the expectation in Equation~\ref{def:gowersNorm} is a non-negative real. \\
Now let $\psi_0,\dots,\psi_t\colon G^d\rightarrow G$ be affine linear forms, i.e. maps of the form $\psi_i(g_1,\dots,g_m)=c_i+\sum_{j=1}^mc_{ij}g_j$ wher $c_i\in G$ and $c_{ij}\in \Z$.
\begin{definition}
    Let $\{\psi_0,\dots,\psi_t\}$ be a system of affine linear forms. We say that the system has Cauchy-Schwarz complexity at most $s$ if for any $0\leq i\leq t$ one can partition $\{\psi_0,\dots,\psi_t\}\setminus \{\psi_i\}$ into $s+1$ classes (empty classes are allowed) such that $\psi_i$ does not lie in the affine linear span (over $\mathbb{Q}$) of the forms in any of these classes. The Cauchy-Schwarz complexity of the system is defined to be the least such $s$ or $\infty$ if no such $s$ exists.
\end{definition}
If $\psi\colon G^d\rightarrow G$ is an affine linear form, we denote by $\dot{\psi}\colon \mathbb{Q}^d\rightarrow \mathbb{Q}$ the map induced by its integer coefficients. The characteristic of $G$ is defined to be least order of all non-identity elements. Here is an equivalent formulation of Cauchy-Schwarz complexity in terms of change of variables.
\begin{proposition}\label{prop:equivalentCScomplexity}
Let $\{\psi_0,\dots,\psi_t\}$ be a system of affine linear forms $G^d\rightarrow G$. Suppose that the characteristic of $G$ is sufficiently large depending on the coefficients of $\psi_0,\dots,\psi_t$. Then the system has Cauchy-Schwarz complexity at most $s$ if and only if for every $0\leq i\leq t$ one can find a linear change of variables $\vec{x}=L_i(y_1,\dots,y_{s+1},z_1,\dots,z_d)$ on $\mathbb{Q}^d$ such that the form $\dot{\psi}_i(L_i(y_1,\dots,y_{s+1},z_1,\dots,z_d))$ has non-zero $y_1,\dots,y_{s+1}$ coefficients, but all other forms $\dot{\psi}_j(L_i(y_1,\dots,y_{s+1},z_1,\dots,z_d))$ with $j\neq i$ have at least one vanishing $y_1,\dots,y_{s+1}$ coefficient.
\end{proposition}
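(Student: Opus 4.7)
The plan is to translate the partition condition defining Cauchy--Schwarz complexity into a dual condition about the existence of test vectors in $\mathbb{Q}^d$, and then to use these test vectors as the columns of the desired change of variables~$L_i$. First I would establish the following dictionary, writing $\psi_j(x) = c_j + \dot{\psi}_j(x)$: the form $\psi_i$ lies in the affine linear span over $\mathbb{Q}$ of $\{\psi_j : j\in C\}$ if and only if $\dot{\psi}_i$ lies in the $\mathbb{Q}$-linear span of $\{\dot{\psi}_j : j\in C\}$. This follows by treating the space of affine linear maps as a $\mathbb{Q}$-vector space and absorbing any constant discrepancies. The large-characteristic hypothesis on $G$ ensures that a $\mathbb{Q}$-linear identity among the $\dot{\psi}_j$ is equivalent to the corresponding identity of maps $G^d\to G$, so we lose nothing by working over $\mathbb{Q}$.

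For the forward direction ($\Rightarrow$), fix $i$ and a partition $C_1,\ldots,C_{s+1}$ with $\psi_i$ outside the affine span of each $C_k$. By the dictionary above together with the standard duality between a subspace $W\subseteq \mathrm{Hom}(\mathbb{Q}^d,\mathbb{Q})$ and its annihilator in $\mathbb{Q}^d$, for each $k$ there exists a vector $v_k\in\mathbb{Q}^d$ satisfying $\dot{\psi}_j(v_k)=0$ for all $\psi_j\in C_k$ and $\dot{\psi}_i(v_k)\neq 0$. I would then set
\[
L_i(y_1,\ldots,y_{s+1},z_1,\ldots,z_d) := y_1 v_1 + \cdots + y_{s+1} v_{s+1} + (z_1,\ldots,z_d).
\]
A direct computation yields $\dot{\psi}_i(L_i(y,z))=\sum_k \dot{\psi}_i(v_k)\,y_k + \dot{\psi}_i(z)$, whose $y_k$-coefficients are all non-zero. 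For any $j\neq i$, picking the index $k$ with $\psi_j\in C_k$, the $y_k$-coefficient of $\dot{\psi}_j(L_i(y,z))$ equals $\dot{\psi}_j(v_k)=0$, exactly as required.

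Conversely ($\Leftarrow$), suppose each $L_i$ is given. For each $j\neq i$, choose an index $k(j)\in\{1,\ldots,s+1\}$ for which the $y_{k(j)}$-coefficient of $\dot{\psi}_j(L_i(y,z))$ vanishes, and set $C_k := \{\psi_j : k(j)=k\}$, allowing empty classes. If $\psi_i$ were in the affine span of some $C_k$, the dictionary would produce rationals $a_j$ with $\dot{\psi}_i = \sum_{j\in C_k} a_j \dot{\psi}_j$; substituting $L_i$ and comparing $y_k$-coefficients yields a non-zero value on the left and $0$ on the right, a contradiction. Hence this partition witnesses that the system has Cauchy--Schwarz complexity at most~$s$.

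The main obstacle is the translation in the first paragraph: one has to verify carefully that $\mathbb{Q}$-linear algebra on the $\dot{\psi}_j$ faithfully captures the affine span condition on the $\psi_j$ viewed as maps into $G$. This is precisely where the large-characteristic hypothesis enters, ruling out both $\mathbb{Q}$-independent forms that become dependent after reduction modulo the order of $G$ and spurious relations arising from torsion. Once this dictionary is in place, both directions reduce to routine applications of annihilator duality in finite-dimensional $\mathbb{Q}$-vector spaces.
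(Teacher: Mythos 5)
The paper does not actually prove Proposition~\ref{prop:equivalentCScomplexity}; it is imported verbatim from Tao's \emph{Higher order Fourier analysis} (the reference \cite{tao2012higher}), so there is no ``paper's proof'' to compare against. That said, your argument is correct and is essentially the standard one. The key observations are all in place: the translation of the affine-span condition on $\psi_i$ to a $\mathbb{Q}$-linear-span condition on $\dot\psi_i$ (the constant terms can always be absorbed, so the two are equivalent); the use of annihilator duality in $\mathbb{Q}^d$ to extract, for each class $C_k$, a vector $v_k$ killing all $\dot\psi_j$ with $j\in C_k$ while $\dot\psi_i(v_k)\neq 0$; the explicit parameterization $L_i(y,z)=\sum_k y_k v_k + z$; and, for the converse, grouping $j\neq i$ by a vanishing $y_k$-coefficient of $\dot\psi_j\circ L_i$ and comparing $y_k$-coefficients to derive a contradiction. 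Your remark about the role of the large-characteristic hypothesis is also accurate: for the $\mathbb{Q}$-linear statement itself the hypothesis is vacuous, and it only matters when one later wants the change of variables over $\mathbb{Q}^d$ to descend to $G^d$ without introducing spurious dependencies modulo the characteristic, which is exactly how it is used in the proof of Proposition~\ref{prop:GvN}. One minor implicit assumption worth flagging is that each $\dot\psi_i$ must be non-zero (otherwise $\dot\psi_i\circ L_i$ cannot have a non-zero $y_k$-coefficient at all), but this is standard in this context.
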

We need also the notion of non-classical polynomials, as this is important in the inverse Gowers theorem.
\begin{definition}
Let $V$ be a finite dimensional vector space over a finite field of characteristic $p$, $G$ an abelian group and let $P\colon V\rightarrow G$ be a map. We denote by $d_h$ the additive derivative, i.e. $d_hP(x):=P(x+h)-P(x)$. We say that $P$ is a non-classical polynomial of degree $\leq s$ if one has 
\begin{align*}
d_{h_1}\dots d_{h_{s+1}}P(x)=0,
\end{align*}
for all $h_1,\dots,h_{s+1}\in V$. We adopt the convention that the zero polynomial has degree $-\infty$. We denote the set of non-classical polynomial of degree $\leq s$ by $\text{Poly}_{\leq s}(V\rightarrow G)$.
\end{definition}
The abelian group $G$ will usually be $\T:=\R/\Z$. In this case we have the following proposition, see \cite{TaoZiegler:2012}.
\begin{proposition}
Let $P\in \text{Poly}_{\leq s}(V\rightarrow \T)$. Then there exists $\alpha\in \T$ such that $P$ takes values in the coset $\alpha+\frac{1}{p^{\lfloor\frac{s-1}{p-1}\rfloor+1}} \Z/ \Z$ of the $(p^{\lfloor\frac{s-1}{p-1}\rfloor+1})^{th}$ roots of unity, where $p$ is the characteristic of the ground field of $V$.
\end{proposition}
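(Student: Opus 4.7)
The plan is to argue by induction on $s$, exploiting the contrast between the characteristic $p$ of the ground field of $V$ and the characteristic zero of the target group $\T$. Write $k(s):=\lfloor(s-1)/(p-1)\rfloor+1$. The base case $s=0$ is immediate: $d_hP\equiv 0$ forces $P$ to be constant, and $\tfrac{1}{p^{0}}\Z/\Z=\{0\}$, so the claim holds with $\alpha=P(0)$.

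For the inductive step, fix $P$ of degree $\leq s$. For every $h\in V$ the derivative $d_hP$ has degree $\leq s-1$, so by the inductive hypothesis it takes values in some coset $\alpha_h+\tfrac{1}{p^{k(s-1)}}\Z/\Z$. Using $ph=0$ in $V$, the telescoping identity
\[
0 \;=\; P(x+ph)-P(x) \;=\; \sum_{i=0}^{p-1} d_hP(x+ih)
\]
shows that $p\alpha_h\in\tfrac{1}{p^{k(s-1)}}\Z/\Z$, hence $\alpha_h\in\tfrac{1}{p^{k(s-1)+1}}\Z/\Z$. Absorbing this shift into the subgroup gives $d_hP(V)\subseteq\tfrac{1}{p^{k(s-1)+1}}\Z/\Z$, and telescoping along any path from $0$ to $x$ yields $P(x)-P(0)\in\tfrac{1}{p^{k(s-1)+1}}\Z/\Z$. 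This proves the proposition in the ``jump'' case $s\equiv 1\pmod{p-1}$, where $k(s)=k(s-1)+1$; in particular this covers every $s$ when $p=2$.

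For residues $s\not\equiv 1\pmod{p-1}$ one has $k(s)=k(s-1)$ and the bound above is one $p$-adic digit too weak. Here the plan is to use the identity $d_h^NP=0$ (valid for any $N>s$) directly. Expanding and grouping by $i\bmod p$ produces a single-variable relation
\[
\sum_{j=0}^{p-1}C_j^{(N)}\,P(x+jh)=0,\qquad C_j^{(N)}:=\sum_{\substack{0\leq i\leq N\\ i\equiv j\bmod p}}(-1)^{N-i}\binom{N}{i},
\]
whose coefficients sum to zero (evaluate at $h=0$ and use $(1-1)^N=0$). A Kummer-theorem analysis of the $p$-adic valuations $v_p(C_j^{(N)})$ should show that for a suitable $N=N(s,p)$ the common valuation equals $k(s)$ exactly, and that after dividing through by $p^{k(s)}$ the resulting integer relation implies $P(x+jh)-P(x)\in\tfrac{1}{p^{k(s)}}\Z/\Z$ for some $j\in\{1,\dots,p-1\}$, which is automatically coprime to $p$. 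Since $jh$ then ranges over all of $V$ as $h$ does, telescoping finishes the proof.

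The main obstacle is this second case. The correct $N$ is not uniformly $s+1$: for $(p,s)=(5,2)$ a direct check shows that $N\in\{3,4\}$ produce coefficient GCDs equal to $1$, and one must take $N=5=p$, whereas for $(p,s)=(3,4)$ the choice $N=5=s+1$ already gives GCD $9=p^{k(s)}$. One must pin down the right $N=N(s,p)$---plausibly the smallest $N>s$ for which the Kummer carry count forces $v_p(C_j^{(N)})\geq k(s)$ in every residue class modulo $p$---and then disentangle the resulting relation when more than two of the reduced coefficients survive, presumably by combining the relation with itself under dilations $h\mapsto ch$ to reduce to a pure two-term difference. The $p$-adic valuation bookkeeping and this disentangling step are the technical heart of the argument, and are essentially what is carried out in the cited work of Tao and Ziegler.
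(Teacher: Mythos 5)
The paper does not prove this proposition; it simply cites Tao and Ziegler, so there is no internal argument to compare against. Your first half is sound: the identity $\sum_{i=0}^{p-1} d_h P(x+ih) = P(x+ph) - P(x) = 0$, combined with the inductive hypothesis applied to $d_h P$, gives $\alpha_h \in \tfrac{1}{p^{k(s-1)+1}}\Z/\Z$ and hence $P(x)-P(0) \in \tfrac{1}{p^{k(s-1)+1}}\Z/\Z$. When $s \equiv 1 \pmod{p-1}$ this equals $\tfrac{1}{p^{k(s)}}\Z/\Z$, so the jump cases, and in particular all of $p=2$, are handled correctly.

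Beyond that, the argument is a sketch with a genuine gap, as you concede. Two things remain undelivered. First, you never exhibit an $N=N(s,p)$ together with a $p$-adic valuation argument showing that every $C^{(N)}_j$ is divisible by $p^{k(s)}$ while the reduced coefficients $C^{(N)}_j/p^{k(s)}$ are not all divisible by $p$; your own example at $(p,s)=(5,2)$ shows $N=s+1$ fails, and no formula for the correct $N$ is stated. Second, and more seriously, producing a relation $\sum_j c_j\,P(x+jh) \in \tfrac{1}{p^{k(s)}}\Z/\Z$ with $\sum_j c_j = 0$ and the $c_j$ not all divisible by $p$ does not by itself give the needed two-term difference $P(y)-P(z)\in\tfrac{1}{p^{k(s)}}\Z/\Z$: at $(p,s)=(5,2)$, for instance, the $N=5$ relation leaves four surviving terms, and collapsing it to a difference requires adding a shifted instance of $d_h^3 P = 0$, not merely dilating $h\mapsto ch$ as you suggest. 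Carrying out this cancellation uniformly in $(s,p)$ is the technical heart of the result, and you explicitly defer it to the cited work, so the proposal stops precisely where the proposition becomes nontrivial.
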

A consequence of the above proposition is that low degree polynomials, in comparison with the characteristic $p$, are polynomials in the classical sense (up to constants), i.e. $\tilde{P}\colon V\rightarrow \F_p$. We will now recall Gowers inverse theorem for vector spaces over finite fields.
\begin{theorem}\label{thm:GowersInverse}
(Gowers inverse) Let $V:=\F_p^n$ be a finite dimensional vector space over $\F_p$. Let $f\colon V\rightarrow\C$ be a function bounded in magnitude by 1 and also let $\varepsilon >0$. If $\Vert f\Vert_{U^{s+1}} \geq \varepsilon$, then there exists a non-classical polynomial $P\colon V\rightarrow \T$ of degree at most $s$ and a constant $\delta(\varepsilon,p,s)>0$ such that 
\begin{align*}
\vert\E_{x\in V} f(x)e(P(x))\vert \geq \delta(\varepsilon,p,s).
\end{align*} 
Here $e\colon \T\rightarrow\C:\theta\mapsto e^{2\pi i \theta}$.
\end{theorem}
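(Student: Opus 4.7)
The plan is to establish this by induction on $s$, following the strategy of Tao and Ziegler. For the base case $s=1$, standard Fourier analysis on $V=\F_p^n$ gives $\Vert f\Vert_{U^2}^4 = \sum_\xi \vert\hat f(\xi)\vert^4$, while Parseval together with $\vert f\vert\leq 1$ gives $\sum_\xi \vert\hat f(\xi)\vert^2 \leq 1$. Combining the two, some $\xi\in V$ has $\vert\hat f(\xi)\vert \geq \varepsilon^2$, and one takes $P(x) := \xi\cdot x$, a classical degree-$1$ polynomial mapping into $\frac{1}{p}\Z/\Z\subset\T$.

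For the inductive step, assume the statement for $s-1$ and suppose $\Vert f\Vert_{U^{s+1}}\geq\varepsilon$. The recursive identity
\begin{align*}
\Vert f\Vert_{U^{s+1}}^{2^{s+1}} = \E_{h\in V}\Vert \Delta_h f\Vert_{U^s}^{2^s}
\end{align*}
forces $\Vert\Delta_h f\Vert_{U^s}\geq \varepsilon'(\varepsilon,s)$ on a positive-density set of $h$. Applying the inductive hypothesis to each such $\Delta_h f$ yields a non-classical polynomial $P_h$ of degree at most $s-1$ with $\vert\E_x \Delta_h f(x) e(P_h(x))\vert\geq \delta'(\varepsilon,p,s-1)$. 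At this stage we have a ``derivative-wise'' correlation: the multiplicative derivatives of $f$ correlate with polynomial phases of one lower degree.

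The central and hardest step is to show that the family $\{P_h\}$ behaves like the additive derivative of a single polynomial of degree $s$, i.e.\ to produce $P\in\mathrm{Poly}_{\leq s}(V\to\T)$ with $P_h(x)\approx d_h P(x)$ for many $h$. The way to do this is to first extract an approximate cocycle: a double Cauchy--Schwarz in $x$ and in a second shift $h'$, applied to the correlation statements for $h$, $h'$ and $h+h'$, produces a relation of the form $P_{h+h'}(x) \approx P_h(x+h') + P_{h'}(x)$ modulo a polynomial error of degree at most $s-2$. A symmetrization argument (Gowers in high characteristic, extended by Tao--Ziegler in general) then straightens this approximate cocycle into an exact one on a large sub-object, which is ``integrated'' (via a cohomological/summation step) to yield a single $P$ whose additive derivatives match the $P_h$.

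The main obstacle is this gluing step. The straightening of the cocycle can fail unless one permits the codomain to be the full torus $\T$ and the polynomial to be non-classical: classical polynomials of degree $s$ over $\F_p$ simply do not form a rich enough space to accommodate derivatives coming from degree-$(s-1)$ polynomials when $s\geq p$, which is precisely the phenomenon recorded in the proposition preceding this theorem. Because the Inverse Theorem is a deep result of independent interest, the plan is to use it as a black box from~\cite{TaoZiegler:2012} in the sequel rather than reproduce the argument here.
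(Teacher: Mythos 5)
The paper does not prove this theorem; it states it in the preliminaries and cites it directly to Tao--Ziegler~\cite{TaoZiegler:2012} as a black box. You correctly recognize this and conclude with the same decision, so in the end your treatment coincides with the paper's. The preliminary sketch you give of the Tao--Ziegler strategy (Fourier base case via Parseval, induction using $\Vert f\Vert_{U^{s+1}}^{2^{s+1}} = \E_h \Vert\Delta_h f\Vert_{U^s}^{2^s}$, approximate-cocycle extraction and symmetrization, and the necessity of non-classical torus-valued polynomials in low characteristic) is a fair high-level summary of why the result is true and hard, but it is not, and does not claim to be, a self-contained proof; the cocycle-straightening and integration steps you flag as ``the main obstacle'' are indeed where the bulk of the Tao--Ziegler argument lives and where a blind reconstruction would founder. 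Since the intended role of this statement in the paper is as an imported tool, citing it as you propose is the right call.
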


\subsection{Results}
We will now continue with line games. Line games, as discussed in the introduction, fall inside a larger class of games which we will describe first. For this, let $\Gamma$ be a finite abelian group, let $m\geq 1$ an integer and we also have $t+1$ affine linear forms $\psi_0,\dots,\psi_t\colon\Gamma^m\rightarrow \Gamma$, i.e. 
\begin{align*}
\psi_i(g_1,\dots,g_m)=c_i+\sum_{j=1}^mc_{ij}g_j
\end{align*}
where $(g_1,\dots,g_m)\in \Gamma^m$, $c_i\in \Gamma$ and $c_{ij}\in \Z$.
\begin{definition}
A $t$-player linear forms game is given by the above data together with a game map $\rho\colon \Gamma\rightarrow \{0,1\}$ as follows. The referee samples a uniform random point $g$ from $\Gamma^m$ and sends $\psi_i(g)$ to player $i$ (players are numbered from 1 to $t$). The winning criterion is given by $\rho(\psi_0(g))$.
\end{definition}
Let $G$ be such a game. The classical bias is given by
\begin{align*}
\beta(G)=\max_{a_i\colon \Gamma\rightarrow \{\pm 1\}}\vert\E_{g\in \Gamma^m}(-1)^{\rho(\psi_0(g))}\prod_{i=1}^ta_i(\psi_i(g)) \vert.
\end{align*}
The quantum bias is
\begin{align*}
\beta^*(G)=\max_{N\geq 1, A_i\colon \Gamma\rightarrow \mathrm{Obs}^{\pm 1}(\C^N)}\Vert\E_{g\in \Gamma^m}(-1)^{\rho(\psi_0(g))}\prod_{i=1}^tA_i(\psi_i(g)) \Vert_{\text{op}}.
\end{align*}
Here $\mathrm{Obs}^{\pm 1}(\C^N)$ is the set of $\pm 1$-valued observables on $\C^N$.
\begin{remark}
It seems ad hoc that we consider such games, but these types of expressions are well studied in the context of counting linear patterns in finite abelian groups. We refer to \cite{tao2012higher}.
\end{remark}
The main technical theorem of this section is the following, from which we will deduce all other results.
\begin{theorem}
Let $G$ be a game as above. If the Cauchy-Schwarz complexity of $\{\psi_0,\dots,\psi_t\}$ is at most $s$, we then have the inequality
\begin{equation}
\beta^*(G)\leq \Vert (-1)^\rho\Vert_{U^{s+1}(\Gamma)}.
\end{equation}
\end{theorem}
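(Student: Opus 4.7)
The plan is to mirror the proof of Lemma~\ref{lemma:upperbound_qbias_freeXOR}: apply the operator Cauchy--Schwarz inequality (Proposition~\ref{prop:opcs}) $s+1$ times to peel off the players' operators stage by stage, and recognize the final expression as the $2^{s+1}$-th power of $\|(-1)^\rho\|_{U^{s+1}(\Gamma)}$. In place of the hypergraph doubling that governed the bookkeeping in Lemma~\ref{lemma:upperbound_qbias_freeXOR}, here the doubling structure is dictated by the change of variables supplied by Proposition~\ref{prop:equivalentCScomplexity}, and commutativity of operators from distinct players is used repeatedly to rearrange factors before each Cauchy--Schwarz application.

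To set up, I would apply Proposition~\ref{prop:equivalentCScomplexity} to $\psi_0$: there is a $\mathbb{Q}$-linear change of variables $L\colon(y_1,\dots,y_{s+1},\vec z)\mapsto g$ under which $\dot\psi_0\circ L$ has nonzero coefficients $c_1,\dots,c_{s+1}$ in each $y_k$, while for every $j\in\{1,\dots,t\}$ some fixed $k(j)\in\{1,\dots,s+1\}$ has the property that $\psi_j\circ L$ does not depend on $y_{k(j)}$. The large-characteristic hypothesis makes $L$ bijective on $\Gamma^m$ and the $c_k$ invertible in $\Gamma$. Then I would iterate the following \emph{peeling step} for $k=s+1,s,\dots,1$. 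The operators $A_j$ with $k(j)=k$ are independent of $y_k$, so by players' commutativity they can be grouped and placed outside the inner expectation over $y_k$; applying Proposition~\ref{prop:opcs} to the remaining $\E_{y_k}$ splits it into two factors, one a product of $A_j$-type operators (with operator norm at most $1$ since each $A_i$ is a $\pm 1$ observable, hence $A_i A_i^* = I$) and the other a duplicated expression in which $y_k$ is replaced by a fresh variable $y_k'$ in a second copy of $(-1)^\rho$ and of the remaining operators. Dropping the norm-$\le 1$ factor and repeating this step $s+1$ times eliminates every $A_j$, yielding
\begin{align*}
\beta^*(G)^{2^{s+1}} \le \Big|\E_{\vec z,\vec y,\vec y'}\prod_{\omega\in\{0,1\}^{s+1}}(-1)^{\rho\bigl(\psi_0\circ L(y_1^{\omega_1},\dots,y_{s+1}^{\omega_{s+1}},\vec z)\bigr)}\Big|,
\end{align*}
where $y_k^0=y_k$ and $y_k^1=y_k'$.

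To finish, I would substitute $h_k:=c_k^{-1}(y_k'-y_k)$, which is a bijection on $\Gamma^{s+1}$ by invertibility of the $c_k$. Then $\psi_0\circ L(y_1^{\omega_1},\dots,y_{s+1}^{\omega_{s+1}},\vec z)=x+\sum_{k=1}^{s+1}\omega_k c_k h_k$ where $x:=\psi_0\circ L(\vec y,\vec z)$, and the product over $\omega$ equals $\Delta_{c_1 h_1}\cdots\Delta_{c_{s+1}h_{s+1}}(-1)^\rho(x)$. Averaging over $\vec y$ and $\vec z$ makes $x$ uniform on $\Gamma$, and a further rescaling $h_k\mapsto c_k^{-1}h_k$ identifies the right-hand side with $\|(-1)^\rho\|_{U^{s+1}(\Gamma)}^{2^{s+1}}$; taking $2^{s+1}$-th roots yields the theorem. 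The main obstacle is the combinatorial bookkeeping in the peeling step: one must verify at each stage that the operators peeled off have operator-norm product at most $1$ and, more delicately, that the accumulated $2^{s+1}$-fold duplication of $(-1)^\rho$ assembles into precisely the multiplicative derivative $\Delta_{h_1}\cdots\Delta_{h_{s+1}}$ defining the Gowers norm. The players' commutativity is used critically and repeatedly to reorder factors into the bipartite form required by Proposition~\ref{prop:opcs}.
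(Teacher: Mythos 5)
Your proposal is correct and follows essentially the same route as the paper: the paper deduces the theorem from a Generalized von Neumann inequality (Proposition~\ref{prop:GvN}), which in turn combines the change of variables of Proposition~\ref{prop:equivalentCScomplexity} with an inductive Cauchy--Schwarz--Gowers inequality for commuting operators (Lemma~\ref{lemma:2ndCSG}); you have simply inlined that lemma's induction into a single $s{+}1$-fold peeling. One small slip: with $h_k := c_k^{-1}(y_k'-y_k)$ you would get $\psi_0\circ L(\ldots)=x+\sum_k\omega_kc_k^2h_k$, not $x+\sum_k\omega_kc_kh_k$; taking $h_k:=y_k'-y_k$ (or simply absorbing the invertible $c_k$ into the uniform measure on $\Gamma$) fixes this without affecting the identification with $\|(-1)^\rho\|_{U^{s+1}(\Gamma)}$.
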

To prove this theorem, we need the following lemma.
\begin{lemma}[Second Cauchy-Schwarz-Gowers inequality for operators]\label{lemma:2ndCSG}
Let $f\colon \Gamma\rightarrow \C$ be a function, $A_i\colon \Gamma^m\rightarrow \mathrm{End}(\C^N)$ for $i\in [m]$ such that $\Vert A_i(g) \Vert_{\mathrm{op}}\leq 1$ for any $g\in \Gamma^m$, $A_i$ is independent of the $i$-th coordinate of $g$ and $[A_i(g),A_j(h)]=0$, $[A_i(g)^*,A_j(h)]=0$ for all $i\neq j$ and $g,h\in\Gamma^m$. Then we have
\begin{align*}
\Vert \E_{(g_1,\dots,g_m)\in\Gamma^m} f(a_1g_1+\dots+a_mg_m)\prod_{i=1}^m A_i(g_1,\dots,g_m)\Vert_{\mathrm{op}}\leq \Vert f\Vert_{U^m(\Gamma)},
\end{align*}
where $a_i$ are non-zero integers such that the characteristic of $\Gamma$ exceeds all of them.
\end{lemma}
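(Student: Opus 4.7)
The plan is to prove the lemma by induction on $m$, using the operator Cauchy-Schwarz inequality of Proposition~\ref{prop:opcs} in place of the scalar Cauchy-Schwarz step in the standard derivation of the Gowers-Cauchy-Schwarz inequality. Each application of Proposition~\ref{prop:opcs} will peel off one of the operators $A_i$ at the cost of passing from $f$ to one of its multiplicative derivatives $\Delta_h f$, exactly matching the recursive identity $\|f\|_{U^{s+1}}^{2^{s+1}} = \E_h \|\Delta_h f\|_{U^s}^{2^s}$ that defines the Gowers norm.

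The base case $m=1$ is essentially trivial: the hypothesis that $A_1$ is independent of $g_1$ forces $A_1$ to be a constant operator of norm at most one, so $\|\E_{g_1} f(a_1 g_1) A_1\|_{\mathrm{op}} \le |\E_{g_1} f(a_1 g_1)|$, and this equals $\|f\|_{U^1(\Gamma)}$ once one uses that $g_1 \mapsto a_1 g_1$ is a bijection of $\Gamma$ (this is where the characteristic hypothesis enters). For the inductive step, write $\psi(g) := \sum_i a_i g_i$, set $g' = (g_1,\dots,g_{m-1})$, and single out the variable $g_m$. Since $A_m$ does not depend on $g_m$ and commutes with every other $A_i$, the expectation factors as
\begin{equation*}
E := \E_{g \in \Gamma^m} f(\psi(g)) \prod_i A_i(g) = \E_{g'} A_m(g') B(g'),\qquad B(g') := \E_{g_m} f(\psi(g)) \prod_{i<m} A_i(g).
\end{equation*}
Applying Proposition~\ref{prop:opcs} with $A_m(g')$ as the left factor and $B(g')$ as the right factor, together with $\|A_m\|_{\mathrm{op}}\le 1$, yields $\|E\|^2 \le \|\E_{g'} B(g')^* B(g')\|_{\mathrm{op}}$.

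Expanding $B^*B$ doubles the variable $g_m$; writing $g_m' = g_m + h_m$ and using $[A_i^*,A_j]=0$ for $i\neq j$ to permute $\prod_{i<m} A_i(\cdot,g_m')^*$ past $\prod_{i<m} A_i(\cdot,g_m)$, one obtains
\begin{equation*}
\|E\|^2 \le \Big\|\E_{h_m,g_m,g'} \overline{\Delta_{a_m h_m} f}(\psi(g)) \prod_{i<m} \tilde A_i^{(h_m)}(g) \Big\|_{\mathrm{op}},
\end{equation*}
where $\tilde A_i^{(h_m)}(g) := A_i(g_1,\dots,g_{m-1},g_m+h_m)^* A_i(g_1,\dots,g_{m-1},g_m)$. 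Each $\tilde A_i^{(h_m)}$ still has operator norm at most one, is independent of $g_i$, and the family is pairwise commuting. Pulling the norm through $\E_{h_m}$ by the triangle inequality and, for each fixed $(h_m,g_m)$, applying the inductive hypothesis in dimension $m-1$ to the remaining expression with scalar factor $y\mapsto \overline{\Delta_{a_m h_m} f}(a_m g_m + y)$ and the $(m-1)$-variable form $\sum_{i<m} a_i g_i$ (using translation- and conjugation-invariance of the Gowers norm), I get
\begin{equation*}
\|E\|^2 \le \E_{h_m} \|\Delta_{a_m h_m} f\|_{U^{m-1}(\Gamma)}.
\end{equation*}
Finally, Hölder's inequality with exponent $2^{m-1}$, bijectivity of $h_m\mapsto a_m h_m$, and the recursive identity for the Gowers norm upper bound the right-hand side by $\|f\|_{U^m(\Gamma)}^2$, closing the induction.

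The main technical nuisance will be the bookkeeping around the expansion of $B^*B$: one needs both $[A_i,A_j]=0$ and $[A_i^*,A_j]=0$ to rearrange $\prod_{i<m} A_i(\cdot,g_m')^* \prod_{i<m} A_i(\cdot,g_m)$ into the term-by-term product $\prod_{i<m} A_i(\cdot,g_m')^* A_i(\cdot,g_m)$, and one must verify that the derived operators $\tilde A_i^{(h_m)}$ inherit the norm bound, the independence of $g_i$, and the mutual commutation so that the inductive hypothesis legitimately applies after fixing $h_m$ and $g_m$. Once these structural facts are checked, the rest of the argument is purely mechanical.
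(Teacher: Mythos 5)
Your proof is correct and follows essentially the same route as the paper's: induction on $m$, using Proposition~\ref{prop:opcs} to peel off one operator per step while passing from $f$ to a multiplicative derivative $\Delta f$, then invoking the inductive hypothesis and closing with Jensen/H\"older and the recursive identity for the Gowers norm. The only differences from the paper's proof are cosmetic (you single out the last variable $g_m$ rather than the first $g_1$, and you are somewhat more explicit about the bijection $h\mapsto a_m h$ and the use of $[A_i^*,A_j]=0$ in reordering $B^*B$).
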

\begin{proof}
We will prove this by induction. For $m=1$ we have
\begin{align*}
\Vert\E_{g\in\Gamma}f(ag)A(g)\Vert_{\mathrm{op}} \leq \vert \E_{g\in\Gamma}f(ag)\vert=\vert \E_{g\in\Gamma}f(g)\vert=\Vert f\Vert_{U^1(\Gamma)}.
\end{align*}
Here we used that $A$ is independent of $g$. Assume we have proven the statement up to some integer $m\geq 1$. Then
\begin{align*}
\eta &:=  \Vert\E_{(g_1,\dots,g_{m+1})\in\Gamma^{m+1}}f(a_1g_1+\dots +a_{m+1}g_{m+1})\prod_{i=1}^{m+1}A_i(g_1,\dots,g_{m+1})\Vert_{\mathrm{op}}\\
&=\Vert \E_{(g_2,\dots,g_{m+1})\in\Gamma^{m}}A_1(g_2,\dots,g_{m+1})\E_{g_1\in\Gamma}f(a_1g_1+\dots +a_{m+1}g_{m+1})\prod_{i=2}^{m+1}A_i(g_1,\dots,g_{m+1})\Vert_{\mathrm{op}},
\end{align*}
we have done nothing, just rearranged and used the fact that $A_1$ is independent of $g_1$. Now write $F(g_2,\dots,g_{m+1}):=\E_{g_1\in\Gamma}f(a_1g_1+\dots +a_{m+1}g_{m+1})\prod_{i=2}^{m+1}A_i(g_1,\dots,g_{m+1})$ so that
\begin{align*}
\eta &= \Vert \E_{(g_2,\dots,g_{m+1})\in\Gamma^{m}}A_1(g_2,\dots,g_{m+1})F(g_2,\dots,g_{m+1})\Vert_{\mathrm{op}}\\
&\leq \Vert\E_{(g_2,\dots,g_{m+1})\in\Gamma^{m}}F(g_2,\dots,g_{m+1})F(g_2,\dots,g_{m+1})^*\Vert_{\mathrm{op}}^{1/2}.
\end{align*}
Here we used Proposition~\ref{prop:opcs} and we used the fact that $\Vert A_1(g)\Vert_{\mathrm{op}}\leq 1$ for any $g\in\Gamma^{m+1}$. Then
\begin{align*}
\eta &\leq \Vert \E_{g_1,g_1', g_2,\dots, g_{m+1}} f(a_1g_1+\dots +a_{m+1}g_{m+1})f(a_1g_1'+\dots +a_{m+1}g_{m+1})^*\\
&\qquad \times \prod_{i=2}^{m+1}A_i(g_1,\dots,g_{m+1})A_i(g_1',\dots,g_{m+1})^* \Vert_{\mathrm{op}}^{1/2}\\
&\leq (\E_{g_1,h_1}\Vert\E_{(g_2,\dots, g_{m+1})\in\Gamma^{m}} \Delta_{h_1} f(a_1g_1+\dots +a_{m+1}g_{m+1})\\
& \qquad \times \prod_{i=2}^{m+1}A_i(g_1+h_1,\dots,g_{m+1})A_i(g_1,\dots,g_{m+1})^*\Vert_{\mathrm{op}})^{1/2}\\
&\leq (\E_{g_1,h_1}(\E_{h_2,\dots,h_{m+1},z\in \Gamma}\Delta_{h_{m+1}}\dots\Delta_{h_1}f(a_1g_1+z))^{1/2^m})^{1/2} \\
& \leq  (\E_{g_1,h_1}(\E_{h_2,\dots,h_{m+1},z\in \Gamma}\Delta_{h_{m+1}}\dots\Delta_{h_1}f(a_1g_1+z)))^{1/2^{m+1}}\\
&= (\E_{h_1,h_2,\dots,h_{m+1},z\in \Gamma}\Delta_{h_{m+1}}\dots\Delta_{h_1}f(z))^{1/2^{m+1}}=\Vert f\Vert_{U^{m+1}(\Gamma)}.
\end{align*}
In the third line we used triangle inequality to get the expectation in $g_1,h_1$ outside the norm. In the fifth line we used the induction hypothesis to upper bound the expression in the previous line with the Gowers norm. We then use in the sixth line Jensens inequality. 
\end{proof}
\begin{proposition}[Generalized von Neumann inequality] \label{prop:GvN} Let $f\colon \Gamma\rightarrow\C$ be a function, $\{\psi_0,\dots,\psi_t\}$ a system of affine linear forms of Cauchy-Schwarz complexity $s$, $A_i\colon \Gamma^m\rightarrow \mathrm{End}(\C^N)$ for $i\in [t]$ such that $\Vert A_i(g) \Vert_{\mathrm{op}}\leq 1$ for any $g\in \Gamma^m$ and $[A_i(g),A_j(h)]=0$, $[A_i(g)^*,A_j(h)]=0$ for all $i\neq j$ and $g,h\in\Gamma^m$. Also assume the characteristic of $\Gamma$ is sufficiently large depending on the coefficients of the affine linear forms. Then we have the inequality
\begin{align*}
\Vert \E_{g\in\Gamma^m}f(\psi_0(g))\prod_{i=1}^tA_i(\psi_i(g))\Vert_{\mathrm{op}}\leq \Vert f\Vert_{U^{s+1}(\Gamma)}.
\end{align*}
\end{proposition}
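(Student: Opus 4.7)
The plan is to reduce the claimed bound to an application of Lemma \ref{lemma:2ndCSG} via the change-of-variables characterization of Cauchy--Schwarz complexity given in Proposition \ref{prop:equivalentCScomplexity}.

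First I would apply Proposition \ref{prop:equivalentCScomplexity} with $i=0$ to obtain a linear change of coordinates $L_0$ on $\Gamma^m$ (a bijection, since the characteristic of $\Gamma$ is large compared to the coefficients of the forms) such that, writing $g=L_0(y_1,\dots,y_{s+1},z)$ with $z$ collecting the remaining coordinates, one has
\begin{align*}
\psi_0(L_0(y,z))=a_1 y_1+\cdots+a_{s+1}y_{s+1}+c(z),\quad a_k\neq 0,
\end{align*}
while each $\psi_j\circ L_0$ for $j\neq 0$ is independent of some $y_{k(j)}$. Since $L_0$ preserves the uniform distribution, the expectation over $g\in\Gamma^m$ may be replaced by the expectation over $(y,z)$.

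Next I would partition $\{1,\dots,t\}$ into sets $I_1,\dots,I_{s+1}$ via $j\mapsto k(j)$, and form the regrouped operators $B_k(y,z):=\prod_{j\in I_k}A_j(\psi_j(L_0(y,z)))$. Using the hypothesis that distinct players' observables pairwise commute (including with each other's adjoints), one obtains $\prod_{i=1}^t A_i(\psi_i(g))=\prod_{k=1}^{s+1}B_k(y,z)$ as an operator identity, with each $B_k$ independent of $y_k$, of operator norm at most $1$, and with distinct $B_k,B_{k'}$ still commuting with each other and with each other's adjoints. Fixing $z$ and setting $f_z(w):=f(w+c(z))$, translation invariance of the Gowers norm gives $\|f_z\|_{U^{s+1}(\Gamma)}=\|f\|_{U^{s+1}(\Gamma)}$, so Lemma \ref{lemma:2ndCSG} applied in the $s+1$ variables $y_1,\dots,y_{s+1}$ yields
\begin{align*}
\Big\|\E_{y\in\Gamma^{s+1}}f_z(a_1 y_1+\cdots+a_{s+1}y_{s+1})\prod_{k=1}^{s+1}B_k(y,z)\Big\|_{\mathrm{op}}\leq \|f\|_{U^{s+1}(\Gamma)}
\end{align*}
uniformly in $z$. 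Pulling the expectation over $z$ outside the operator norm via the triangle inequality then closes the argument.

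The step I expect to need the most care is verifying that the rearrangement into the $B_k$'s is a true operator identity (not merely valid in expectation) and that it preserves both the $y_k$-independence and the commutativity hypotheses required by Lemma \ref{lemma:2ndCSG}. The pairwise commutativity of the $B_k$'s follows from the pairwise commutativity of the $A_j$'s across different $j$ together with the disjointness of the $I_k$'s; the $y_k$-independence is precisely what Proposition \ref{prop:equivalentCScomplexity} supplies; and the ``characteristic sufficiently large'' hypothesis is exactly what is needed both for $L_0$ to descend to a bijection of $\Gamma^m$ and for the integer coefficients $a_k$ to be nonzero in $\Gamma$, as also required by Lemma \ref{lemma:2ndCSG}.
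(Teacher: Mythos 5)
Your proof is correct and takes essentially the same route as the paper: apply the change-of-variables characterization of Cauchy--Schwarz complexity (Proposition~\ref{prop:equivalentCScomplexity}) with $i=0$, partition the forms $\psi_1,\dots,\psi_t$ into $s+1$ classes according to which $y_k$-coordinate they are independent of, regroup the commuting observables into $s+1$ composite operators (your $B_k$ are the paper's $\tilde A_k$), and then invoke Lemma~\ref{lemma:2ndCSG} for each fixed value of the $z$-variables, finishing with the triangle inequality. The one point you make explicit that the paper leaves implicit is the translation invariance $\|f(\cdot + c(z))\|_{U^{s+1}(\Gamma)} = \|f\|_{U^{s+1}(\Gamma)}$ needed to handle the constant coming from the $z$-coordinates, which is a nice clarification but not a different argument.
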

\begin{proof}
The system of affine linear forms has Cauchy-Schwarz complexity $s$ so we can partition the forms $\{\psi_1,\dots,\psi_t\}$ into $s+1$ classes $\mathcal{A}_1,\dots,\mathcal{A}_{s+1}$ such that $\psi_0$ is not an affine linear combination of any forms in any class $\mathcal{A}_i$ for any $i$ (over $\mathbb{Q}$). So one can find a linear change of variables using Proposition~\ref{prop:equivalentCScomplexity}
\begin{align*}
(g_1,\dots,g_m)\mapsto (h_1,\dots,h_m)+y_1v_1+\dots +y_{s+1}v_{s+1}
\end{align*}
with the property that $\psi_0(y_jv_j)=a_jy_j$ where $a_j$ is a non-zero integer and $v_j\in \Z^m$, but if $\psi_i\in\mathcal{A}_j$, then $\psi_i(y_jv_j)=0$, this is where we need the large characteristic hypothesis. Now we define
\begin{align*}
\tilde{A}_k(g_1,\dots,g_m):=\prod_{j\in \mathcal{A}_k}A_j(\psi_j(g_1,\dots,g_m)).
\end{align*}
Note that $\tilde{A}_i$ is independent of its $i$-th coordinate and has operator norm smaller than 1. We then have
\begin{align*}
\Vert \E_{g\in\Gamma^m} & f(\psi_0(g_1,\dots,g_m)) \prod_{i=1}^{t}A_i(\psi_i(g_1,\dots,g_m))\Vert_{\mathrm{op}}\\
& = \Vert \E_{g\in\Gamma^m}f(\psi_0(g_1,\dots,g_m))\prod_{i=1}^{s+1}\tilde{A}_i(g_1,\dots,g_m)\Vert_{\mathrm{op}}\\
&= \Vert \E_{h\in \Gamma^m}\E_{y_1,\dots,y_{s+1}\in \Gamma}f(\psi_0(h)+a_1y_1+\dots +a_{s+1}y_{s+1})\prod_{i=1}^{s+1}\tilde{A}_i(h,y_1,\dots,y_{s+1})\Vert_{\mathrm{op}}\\
&\leq\E_{h\in \Gamma^m}\Vert f\Vert_{U^{s+1}(\Gamma)}=\Vert f\Vert_{U^{s+1}(\Gamma)}.
\end{align*}
In the third line we used the linear change of variables just described. Then we used the triangle inequality together with Lemma~\ref{lemma:2ndCSG} where we need the large characteristic hypothesis.
\end{proof}
\begin{remark}
If $f$ takes values in $\{\pm 1\}$ and $A_i$ are $\pm 1$-valued observables, then the inequality says that the quantum bias of such games is upperbounded by the Gowers norm of the game tensor $f$.
\end{remark}
Proposition~\ref{prop:GvN} is in full generality, i.e. for any abelian group the inequality holds. However, we will now restrict ourselves to the case where $\Gamma=\F_p^n$, where $p$ is prime and $n\geq 1$ as it will make many things easier. We can use Gowers inverse theorem for vector spaces over finite fields which we recalled in the preliminaries section. We will also assume that $p$ is sufficiently large, so that the set of non-classical polynomials coincide with the set of (classical) polynomials.
Let us start giving the proof of Theorem~\ref{thm:line_games}. A $t$-player line game is given by a map $\tau\colon \F_p^n\rightarrow \{0,1\}$ which stands for the predicate together with a system of linear forms $\psi_0,\psi_i\colon (\F_p^n)^2\rightarrow \F_p^n$ which are given by
\begin{align*}
\psi_0(x,y)=y \text{ and } \psi_i(x,y)=x+(i-1)y\text{ for }i=1,\dots,t.
\end{align*}
Note that the Cauchy-Schwarz complexity of this system is at most $t-1$.  For the bias of the game, it is more convenient to look at $f:=(-1)^\tau$. We also need the following lemma, provided kindly to us by Shravas Rao.
\begin{lemma}\label{lemma:Shravas-Jop}
Let $P\colon \F_p^n\rightarrow\F_p$ be a (classical) polynomial of degree $d-1$ and $p\geq d$. Then there exists $d$ polynomials $P_i\colon \F_p^n\rightarrow\F_p$, $i=0,\dots,d-1$, such that
\begin{align*}
P(y)=\sum_{i=0}^{d-1}P_i(x+iy).
\end{align*}
\end{lemma}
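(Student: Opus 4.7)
The plan is to reduce to a single monomial and then solve a Vandermonde system in $\F_p$. By linearity of the conclusion in $P$, it suffices to prove the lemma for each monomial $y^\alpha$ separately, where $\alpha\in\N^n$ with $|\alpha|\leq d-1$: if for every such $\alpha$ we can produce scalars $c_i^{(\alpha)}\in\F_p$ satisfying
\begin{align*}
y^\alpha = \sum_{i=0}^{d-1} c_i^{(\alpha)}\,(x+iy)^\alpha,
\end{align*}
then writing $P=\sum_\alpha a_\alpha y^\alpha$ and setting $P_i(z):=\sum_\alpha a_\alpha c_i^{(\alpha)} z^\alpha$ yields the desired decomposition, with each $P_i$ of degree at most $d-1$.

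Fix such an $\alpha$. The next step is to expand by the multivariate binomial theorem to obtain
\begin{align*}
(x+iy)^\alpha \;=\; \sum_{\ell\leq \alpha}\binom{\alpha}{\ell}\,i^{|\ell|}\,x^{\alpha-\ell}y^\ell,
\end{align*}
in which every binomial coefficient is nonzero in $\F_p$ because $|\alpha|\leq d-1<p$. Comparing coefficients of each monomial $x^{\alpha-\ell}y^\ell$ in the identity $\sum_i c_i(x+iy)^\alpha=y^\alpha$ as a polynomial in the $2n$ variables $(x,y)$, and observing that all $\ell\leq\alpha$ with a common value $m=|\ell|$ yield the same equation (and that $\ell=\alpha$ is the unique such $\ell$ with $|\ell|=|\alpha|$), the task reduces to solving
\begin{align*}
\sum_{i=0}^{d-1} c_i\, i^m \;=\; \delta_{m,|\alpha|}, \qquad m = 0,1,\dots,|\alpha|.
\end{align*}

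This is $|\alpha|+1\leq d$ linear equations in the $d$ unknowns $c_0,\dots,c_{d-1}$ whose coefficient matrix is a truncated Vandermonde in the nodes $0,1,\dots,d-1$. Since $p\geq d$ these nodes are pairwise distinct in $\F_p$, so for example the square sub-Vandermonde on the columns $i=0,1,\dots,|\alpha|$ is invertible; solving it (and setting the remaining $c_i$ to zero) gives the scalars we need. The main point to verify carefully is the combinatorial collapsing of the many coefficient conditions into the single family indexed by $m=|\ell|$; once that is in place, the Vandermonde inversion and the hypothesis $p\geq d$ do all the remaining work, and there is no serious obstacle beyond this bookkeeping.
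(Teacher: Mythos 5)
Your proof is correct and takes essentially the same route as the paper's: both reduce to inverting the same Vandermonde system over $\F_p$ with nodes $0,1,\dots$ up to the relevant degree, using $p\geq d$ to ensure the nodes are distinct. The only cosmetic difference is that the paper decomposes $P$ via polarization into symmetric multilinear forms $T_i(x,\dots,x)$, whereas you decompose directly into monomials and observe that the coefficient constraints depend only on $|\ell|$, which collapses to the identical linear system.
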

\begin{proof}
The polynomial $P$ can be represented as
\begin{align*}
P(x_1,\dots,x_n)=\sum_{i=0}^{d-1} T_i(x,\dots, x),\quad x=(x_1,\dots,x_n),
\end{align*}
where each $T_i\colon (\F_p^n)^{i}\rightarrow \F_p$ is an $i$-linear form. We will show that for each linear form $T_i$ we can find $\alpha_0, \dots, \alpha_{d-1}$ such that
\begin{align}\label{eqn:conditionEquality}
T_i(y,\dots,y)=\sum_{j=0}^{d-1}\
\alpha_jT_i(x+jy,\dots,x+jy)
\end{align}
and this will be enough to construct $P_0,\dots,P_{d-1}$. By linearity, we can rewrite the right hand side as follows,
\begin{align*}
\sum_{j=0}^{d-1} & \sum_{s\in\{0,1\}^i}\alpha_jT_i((1-s_1)x+s_1jy,\dots, (1-s_i)x+s_ijy)\\
& = \sum_{j=0}^{d-1}\sum_{s\in\{0,1\}^i}\alpha_jj^{|s|}T_i((1-s_1)x+s_1y,\dots, (1-s_i)x+s_iy),
\end{align*}
where $|s|$ denotes the Hamming weight of $s$. Then \ref{eqn:conditionEquality} holds, if for $0\leq k< i$
\begin{align*}
\sum_{j=0}^{d-1}\alpha_jj^k=0\text{ and }\sum_{j=0}^{d-1}\alpha_jj^i=1.
\end{align*}
As $d\leq p$ the $d\times d$ Vandermonde matrix associated with the sequence $1,j,\dots, j^i$ is invertible, hence there exist unique $\alpha_0,\dots,\alpha_{d-1}$ satisfying the above equations which concludes the proof.
\end{proof}

The following lemma will help us later in converting complex strategies into $\pm 1$-strategies.
\begin{lemma}\label{lemma:neatDescriptionOfComplexUnit} For any $z\in \C$ 
\begin{align*}
z=\frac{\pi}{2}\E_{\vert w \vert =1}[\sign (z\overline{w}) \; \vert z\vert \; w],
\end{align*}
where $w\in \{z\in \C\colon \vert z\vert =1\}$ is taken uniformly at random.
\end{lemma}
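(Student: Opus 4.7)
The proof is a direct trigonometric computation, so the plan is mostly to spell out the right change of variable and do one integral. First I would note that the sign function appearing on the right-hand side must be the $\{\pm 1\}$-valued sign of the real part (consistent with the stated goal of producing $\pm 1$-valued observables; the phase interpretation is incompatible with the factor $\pi/2$). I would write $z = r e^{i\theta}$ with $r = |z|$ and parametrize $w = e^{i\phi}$ with $\phi$ uniform on $[0, 2\pi)$. Then $z \overline{w} = r e^{i(\theta-\phi)}$, so $\sgn(z\overline w) = \sgn(\cos(\theta - \phi))$, and the right-hand side becomes
\begin{equation*}
\frac{\pi}{2} \cdot r \cdot \frac{1}{2\pi}\int_0^{2\pi} \sgn(\cos(\theta - \phi)) \, e^{i\phi} \, d\phi.
\end{equation*}

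Next I would substitute $\psi = \phi - \theta$ and use $2\pi$-periodicity to pull $e^{i\theta}$ out of the integral, reducing the problem to evaluating the single constant
\begin{equation*}
I := \int_0^{2\pi} \sgn(\cos\psi) \, e^{i\psi} \, d\psi.
\end{equation*}
The imaginary part $\int_0^{2\pi} \sgn(\cos\psi)\sin\psi\, d\psi$ vanishes by the symmetry $\psi \mapsto -\psi$ (since $\sgn\circ\cos$ is even and $\sin$ is odd). The real part equals $\int_0^{2\pi} |\cos\psi|\, d\psi = 4$. Thus $I = 4$, and the right-hand side above equals $\frac{\pi}{2} \cdot r \cdot \frac{e^{i\theta}}{2\pi} \cdot 4 = r e^{i\theta} = z$, which is the desired identity. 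The case $z = 0$ is trivial.

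There is no real obstacle here: the only thing to be careful about is fixing the interpretation of $\sgn$ on $\C$, after which the identity reduces to the elementary fact $\frac{1}{2\pi}\int_0^{2\pi} |\cos\psi|\, d\psi = \frac{2}{\pi}$, i.e.\ the $L^1$ average of $|\cos|$. Everything else is a change of variables and a parity argument.
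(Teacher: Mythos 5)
Your proof is correct and follows essentially the same route as the paper: write $z = re^{i\theta}$, parametrize $w = e^{i\phi}$, shift the integration variable by $\theta$ to pull out a factor of $z$, and then evaluate $\int_0^{2\pi}\sgn(\cos\psi)\,e^{i\psi}\,d\psi = 4$. You actually fill in two small details the paper leaves implicit — the interpretation of $\sgn$ on $\C$ as the sign of the real part, and the parity argument killing the imaginary part of the integral — but the structure of the argument is the same.
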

\begin{proof}
Write $z=re^{i\psi}$. Then
\begin{align*}
\E_{\vert w \vert =1}[\sign (z\overline{w})\vert z\vert w]&=\frac{r}{2\pi}\int_{0}^{2\pi}\sign (e^{-i(\phi-\psi)})e^{i\phi} d \phi\\
&=\frac{r}{2\pi}\int_0^{2\pi}\sign (e^{-i\chi})e^{i\chi+i\psi} d \chi\\
&=\frac{z}{2\pi}\int_0^{2\pi}\sign (e^{-i\chi})e^{i\chi} d \chi\\
&=\frac{2z}{\pi}.
\end{align*}
\end{proof} 

\begin{proof}[Proof of Theorem~\ref{thm:line_games}] By Proposition~\ref{prop:GvN} and the hypothesis that the game has entangled value $\varepsilon>0$ implies that $\Vert f \Vert_{U^t}>\varepsilon$. Then by Gowers inverse Theorem~\ref{thm:GowersInverse} and assumption that $p>t$ there exists a constant $\delta = \delta(\varepsilon,p,t)>0$ and a (classical) polynomial of degree at most $t-1$ such that
\begin{align*}
\vert \E_{x\in\F_p^n}f(x)e(P(x))\vert > \delta.
\end{align*}
We now want to convert this presence of structure into a classical strategy. First by Lemma~\ref{lemma:Shravas-Jop} we can find $t$ polynomials $P_i$ for $i=1,\dots,t$ such that
\begin{align*}
P(y)=\sum_{i=1}^{t}P_i(x+(i-1)y).
\end{align*}
This implies
\begin{align*}
\vert \E_{x,y\in \F_p^n}f(\psi_0(x,y))e(P(\psi_0(x,y)))\vert = \vert \E_{x,y\in \F_p^n}f(\psi_0(x,y))\prod_{i=1}^{t} e(P_i(\psi_i(x,y)))\vert>\delta.
\end{align*}
The polynomials are not classical strategies yet, we can turn it into $\pm 1$-strategy using Lemma~\ref{lemma:neatDescriptionOfComplexUnit} at a loss of a factor $2^t/\pi^t$.
\end{proof}

\subsection{Parallel repetition}
Let $f\colon \Gamma\rightarrow\{1,-1\}$ be a function, representing the predicate. We want to consider $k$-fold XOR parallel repetition. The predicate for this is $f^k\colon\Gamma^k\rightarrow \{1,-1\}$ defined by
\begin{align*}
f^k(g_1,\dots,g_k):=\prod_{i=1}^kf(g_i).
\end{align*}
\begin{lemma}\label{lemma:GowersNormProductFunction}
We have that
\begin{align*}
\Vert f^k\Vert_{U^{s+1}(\Gamma^k)}=\Vert f\Vert_{U^{s+1}(\Gamma)}^k.
\end{align*}
\end{lemma}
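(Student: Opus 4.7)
The plan is to unwind the definition of the Gowers norm and exploit the tensor product structure of $f^k$. The key observation is that the multiplicative derivative of a function that factors as a product over independent coordinates itself factors as a product of coordinate-wise derivatives.

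First I would fix shift vectors $h^{(1)},\dots,h^{(s+1)}\in\Gamma^k$ and a base point $x\in\Gamma^k$, writing $h^{(j)}=(h^{(j)}_1,\dots,h^{(j)}_k)$ and $x=(x_1,\dots,x_k)$. Using the explicit expansion
\begin{align*}
\Delta_{h^{(1)}}\cdots\Delta_{h^{(s+1)}} f^k(x)
= \prod_{\omega\in\{0,1\}^{s+1}} \mathcal{C}^{|\omega|}\, f^k\!\left(x+\sum_{j=1}^{s+1}\omega_j h^{(j)}\right),
\end{align*}
where $\mathcal{C}$ denotes complex conjugation, I would substitute the definition of $f^k$ to pull the product over coordinates $i\in[k]$ outside the product over $\omega$. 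Regrouping yields
\begin{align*}
\Delta_{h^{(1)}}\cdots\Delta_{h^{(s+1)}} f^k(x)
= \prod_{i=1}^{k} \Delta_{h^{(1)}_i}\cdots\Delta_{h^{(s+1)}_i} f(x_i).
\end{align*}

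Next I would take the expectation. Since the coordinates $(h^{(j)}_i, x_i)$ for different $i$ are mutually independent under the uniform distribution on $\Gamma^k$, the expectation factorises as a product of $k$ copies of the single-coordinate expectation, giving
\begin{align*}
\E_{h^{(j)},\,x}\, \Delta_{h^{(1)}}\cdots\Delta_{h^{(s+1)}} f^k(x)
= \Big(\E_{h_1,\dots,h_{s+1},\,y\in\Gamma}\, \Delta_{h_1}\cdots\Delta_{h_{s+1}} f(y)\Big)^{\!k}
= \Vert f\Vert_{U^{s+1}(\Gamma)}^{\,k\cdot 2^{s+1}}.
\end{align*}
Raising both sides to the power $1/2^{s+1}$ (which is well-defined as both sides are non-negative reals, as noted after \eqref{def:gowersNorm}) yields the claimed identity.

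I do not foresee a serious obstacle here: the argument is essentially a formal manipulation that relies only on the tensor product structure of $f^k$ and the independence of the coordinates in the uniform measure on $\Gamma^k$. The one place to be careful is tracking the conjugations correctly when factoring the derivative across the $k$ coordinates, but since the conjugation pattern $\mathcal{C}^{|\omega|}$ depends only on $\omega$ and not on $i$, it commutes with the product over $i$ and no issue arises.
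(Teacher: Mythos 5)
Your proof is correct and is precisely the computation that the paper's one-line proof ("This follows immediately from the definition of Gowers norm") is implicitly invoking: the derivative factors over the $k$ independent coordinates, and the expectation then factors as a product. Same approach, just spelled out.
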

\begin{proof}
This follows immediately from the definition of Gowers norm.
\end{proof}
Let $\{\psi_0,\dots,\psi_t\}$ be linear forms $\Gamma^m\rightarrow \Gamma$ which together with $f$ define the game $G$. The linear forms corresponding with $k$-fold XOR parallel repetition are denoted by $\{\psi_0^k,\dots,\psi_t^k\}$ which are maps $(\Gamma^m)^k\rightarrow\Gamma^k$ and are given by
\begin{align*}
\psi_i^k(g^1,\dots,g^k):=(\psi_i(g^1),\dots,\psi_i(g^k)),\text{ where }g^i\in \Gamma^m.
\end{align*}
Note that if $\{\psi_0,\dots,\psi_t\}$ has Cauchy-Schwarz complexity at most $s$, then the Cauchy-Schwarz complexity of $\{\psi_0^k,\dots,\psi_t^k\}$ is also at most $s$. Denote by $G^{\oplus k}$ the $k$-fold XOR parallel repetition, then we have as an immediate consequence of Proposition~\ref{prop:GvN} together with Lemma~\ref{lemma:GowersNormProductFunction} the following upper bound 
\begin{align*}
\beta^*(G^{\oplus k})\leq \Vert f\Vert^k_{U^{s+1}(\Gamma)}. 
\end{align*}
If $G$ is an XOR game, denote by $G^k$ the $k$-fold parallel repetition. If $S$ is a strategy (classical or quantum) for a game $G$, denote by $\omega(G,S)$ the winning probability using strategy $S$. Also denote by $\varepsilon(G,S):=2\omega(G,S)-1$ the bias of this strategy. To prove Lemma~\ref{lemma:parallelRepetition}, we use the following lemma, which is a straightforward generalization of the 2-player version in \cite{cleve2008perfect} (lemma 8 in that paper) to any number of players. 
\begin{lemma}\label{lemma:lemmaCleveSlofstra}
Let $G$ be an XOR game assume. Let $S$ be any strategy for $G^k$. For each $M\subset [k]$, we denote by $S_M$ the following strategy for the XOR parallel repetition $\oplus_{i\in M}G$ : (1) Run strategy $S$, yielding answers $a_i^1,\dots,a_i^k$ for player $i=1,\dots, t$. (2) Player $i$ outputs $\sum_{j\in M}a_i^j\mod 2$. We then have
\begin{align*}
\omega(G^k,S)=\frac{1}{2^k}\sum_{M\subset [k]}\varepsilon(\oplus_{i\in M}G,S_M).
\end{align*}
\begin{proof}[Proof of Lemma~\ref{lemma:parallelRepetition}.] Let $S$ be the quantum strategy that achieves the maximum winning probability of the game $G^k$. We then use Lemma~\ref{lemma:lemmaCleveSlofstra},
\begin{align*}
\omega^*(G^k)&=\omega(G^k,S)=\frac{1}{2^k}\sum_{M\subset [k]}\varepsilon(\oplus_{i\in M}G,S_M)\\
&\leq \frac{1}{2^k}\sum_{M\subset [k]}\beta^*(G^{\oplus \vert M\vert})=\frac{1}{2^k}\sum_{l=0}^k\beta^*(G^{\oplus l}) {k \choose l}\\
&\leq \frac{1}{2^k}\sum_{l=0}^k {k \choose l} \Vert f\Vert_{U^{s+1}(\Gamma)} =\left(\frac{1+\Vert f\Vert_{U^{s+1}(\Gamma)}}{2}\right)^k.
\end{align*}
\end{proof}
\end{lemma}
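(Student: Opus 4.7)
The plan is to prove the identity by a Fourier-type (Hadamard) expansion of the ``win every copy'' indicator on $\{0,1\}^k$. Fix the strategy $S$ and, on a draw of the $k$ independent question tuples for $G^k$, let $a_i^j\in\{0,1\}$ denote the output of player $i$ on copy $j$ under $S$, and let $w_j\in\{0,1\}$ denote the target bit (predicate value) for copy $j$. Set
\[
\epsilon_j \;:=\; (-1)^{w_j + \sum_{i=1}^t a_i^j} \;\in\; \{\pm 1\},
\]
so copy $j$ is won exactly when $\epsilon_j=+1$. The indicator that $S$ wins every copy therefore factorizes as
\[
\prod_{j=1}^{k} \frac{1+\epsilon_j}{2} \;=\; \frac{1}{2^k}\sum_{M\subset[k]}\prod_{j\in M}\epsilon_j.
\]

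Next I would take expectation over the questions of $G^k$ (together with any shared randomness in $S$). The left-hand side becomes $\omega(G^k,S)$. For a fixed subset $M\subset[k]$, the term $\prod_{j\in M}\epsilon_j$ equals $(-1)^{\sum_{j\in M}w_j\,+\,\sum_{i=1}^{t}\sum_{j\in M}a_i^j}$, which is exactly the $\pm 1$-valued win-indicator of the XOR game $\oplus_{j\in M}G$ played with the derived strategy $S_M$: by construction player $i$ in $S_M$ outputs $\sum_{j\in M}a_i^j \bmod 2$, and the predicate value of $\oplus_{j\in M}G$ on independent inputs is $\sum_{j\in M}w_j \bmod 2$. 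Hence
\[
\mathbb{E}\Bigl[\prod_{j\in M}\epsilon_j\Bigr] \;=\; \varepsilon(\oplus_{j\in M}G,\, S_M),
\]
with the natural convention that for $M=\emptyset$ the empty product is $1$ and the empty XOR game has bias $1$. Summing over subsets $M$ yields the claimed identity.

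I do not expect a real obstacle; the argument is essentially bookkeeping around a Hadamard expansion. The one point that deserves a moment of care is that the marginal distribution of the questions of $G^k$ restricted to the coordinates in $M$ must agree with the input distribution of $\oplus_{j\in M}G$, but this is immediate because $G^k$ is played on $k$ independent copies, so any sub-collection of $|M|$ coordinates already has the correct product distribution. Once this identity is in hand, Lemma~\ref{lemma:parallelRepetition} itself follows by bounding each $\varepsilon(\oplus_{j\in M}G, S_M)$ using Proposition~\ref{prop:GvN} together with Lemma~\ref{lemma:GowersNormProductFunction} (noting that $\{\psi_0^{|M|},\dots,\psi_t^{|M|}\}$ inherits Cauchy--Schwarz complexity at most $t-1$), and then collapsing the resulting binomial sum via $\sum_{l=0}^k \binom{k}{l}x^l = (1+x)^k$.
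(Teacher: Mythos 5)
Your Hadamard/Fourier expansion of the ``win every copy'' indicator, $\prod_{j=1}^k\frac{1+\epsilon_j}{2}=\frac{1}{2^k}\sum_{M\subset[k]}\prod_{j\in M}\epsilon_j$, is correct and is the standard argument. The paper does not actually prove this lemma itself; it only refers to Lemma~8 of \cite{cleve2008perfect} and asserts a straightforward generalization to $t$ players, so your write-up supplies the omitted argument rather than paralleling a proof on the page.

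One small point you half-address but should make explicit: the quantity $\prod_{j\in M}\epsilon_j$ depends, through the $a_i^j$, on the inputs $x^{j'}$ for \emph{all} coordinates $j'\in[k]$, not just $j'\in M$. For $S_M$ to be a bona fide strategy for $\oplus_{j\in M}G$, the players must therefore sample the missing questions $(x^{j})_{j\notin M}$ from shared randomness, distributed as fresh i.i.d.\ copies of the $G$-input distribution, and then run $S$. Because the inputs of $G^k$ are i.i.d.\ across coordinates, replacing the referee's coordinates outside $M$ with this privately sampled copy reproduces exactly the same joint distribution, and only then does $\E\bigl[\prod_{j\in M}\epsilon_j\bigr]=\varepsilon(\oplus_{j\in M}G,S_M)$ hold. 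Your remark about marginal distributions covers the easier direction (the marginal on $M$ is right); the point above is the companion observation that makes the map from $S$ to a legal strategy $S_M$ well-defined. With that said, the identity and the subsequent derivation of Lemma~\ref{lemma:parallelRepetition} are exactly as you describe.
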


\section{Near-perfect strategies for 2-player unique games}
\label{sec:perfectunique}

In this section we prove Theorem~\ref{thm:uniquegames}.
Consider a unique game where $\pi_{xy}$ is the matching between the players' answers on inputs $x,y$, so that when the first player answers $i$ they win if the second player answers $j=\pi_{xy}(i)$.
Let us start by writing down an expression for the entangled winning probability when the players use a shared state $\ket{\psi}$ and projectors $\Pi^{(x)}_i, \Pi^{(y)}_j$ for inputs $x,y$ and outputs $i,j$. For finite-dimensional systems we can always assume that a strategy is of such a form. The winning probability is given by
\begin{align*}
    \E_{x,y} \sum_{i=1}^k \Pr(\text{answer }\; i,\pi_{xy}(i) \mid \text{input }x,y) = \E_{x,y} \sum_{i=1}^k \bra{\psi} \Pi^{(x)}_i \otimes \Pi^{(y)}_{\pi_{xy}(i)} \ket{\psi}.
\end{align*}
Now define vectors $\ket{u^{(x)}_i} = (\Pi^{(x)}_i \otimes \mathrm{Id}) \ket{\psi}$ and $\ket{v^{(y)}_j} = (\mathrm{Id} \otimes \Pi^{(y)}_j )\ket{\psi}$, then we can write the winning probability as
\begin{align} \label{eq:winprob1}
    \E_{x,y} \sum_{i=1}^k \langle u^{(x)}_i \vert v^{(y)}_{\pi_{xy}(i)} \rangle \geq 1 - \epsilon
\end{align}
where we use the assumption that there is a strategy with entangled value at least $1-\epsilon$.
The vectors have the following properties:
\begin{align*}
    \forall x,y , \forall i \neq j && \langle u^{(x)}_i \vert u^{(x)}_j \rangle = \langle v^{(y)}_i \vert v^{(y)}_j \rangle = 0 \tag{orthogonal projectors}\\
    \forall x,y && \sum_{i=1}^k \norm{u^{(x)}_i}^2 = \sum_{i=1}^k \norm{v^{(y)}_i}^2 = 1 \tag{projectors sum to identity}\\
    \forall x,y , \forall i, j && \langle u^{(x)}_i \vert v^{(y)}_j \rangle \geq 0 \tag{projectors are Hermitian}
\end{align*}
By using $\norm{u-v}^2 = \norm{u}^2 + \norm{v}^2 - 2\braket{u}{v}$ (for real-valued inner products) we can write~\eqref{eq:winprob1} as
\begin{align} \label{eq:winprob2}
    \frac{1}{2} \E_{x,y} \sum_{i=1}^k \norm{ u^{(x)}_i - v^{(y)}_{\pi_{xy}(i)} }^2 \leq \epsilon .
\end{align}
It is possible to maximize expression \eqref{eq:winprob1} (or equivalently minimize \eqref{eq:winprob2}) over vectors with the given properties. This optimization problem is an SDP and can be solved in polynomial time but will generally not yield a quantum strategy as not all such vectors can be attained by quantum strategies. Our goal will be to extract from the vectors a classical strategy, something known as rounding, such that its winning probability is high.

As stated in the introduction, one can get some intuition by considering the $\epsilon=0$ case. There~\eqref{eq:winprob2} yields $\ket{u^{(x)}_i} = \ket{v^{(y)}_{\pi_{xy}(i)}}$ for each $x,y$ and $i$. Using shared randomness the players sample a random vector $\ket{g}$ and compute the overlaps $\xi^{(x)}_i = \langle g \ket{u^{(x)}_i}$ and $\xi^{(y)}_i = \langle g \ket{v^{(y)}_i}$ respectively. The players will have the same values $\xi^{(x)}_i = \xi^{(y)}_{\pi_{xy}(i)}$ so both players can output the answer $i$ for which their overlap has the largest value.

For $\epsilon>0$ the sets of vectors will not be exactly equal and therefore the values $\xi^{(x)}_i,\xi^{(y)}_{\pi_{xy}(i)}$ will be close but not exactly equal. The discrepancy in these values will be bigger for vectors $\ket{u^{(x)}_i}$ with a small norm. In Section 2 of~\cite{Charikar:2006} a rounding algorithm is provided that solves these issues.
Note that we write $u^{(x)}_i, v^{(y)}_j$ for the vectors belonging to questions $x,y$ and answers $i,j$ whereas in~\cite{Charikar:2006} these vectors are instead denoted by $u_i,v_j$ where $u,v$ are the questions and $i,j$ the answers. The only difference between their SDP and the above one is that they have an additional constraint $0 \leq \langle u^{(x)}_i \vert v^{(y)}_{\pi_{xy}(i)} \rangle \leq |u^{(x)}_i|^2$ (constraint (5) in their paper). This constraint does not necessarily hold in the quantum setting so we will drop it and adapt their proofs to work without this constraint. 

The following is \textbf{Rounding Algorithm 2} from Section 4 of~\cite{Charikar:2006}, adapted to our notation.\\
\textbf{Rounding algorithm}\\
\textbf{Input:} A solution of the SDP with objective value $1-\epsilon$.\\
\textbf{Output:} A classical strategy: $a(x)$ and $b(y)$\\
Define $[x]_r$ as the function that rounds $x$ up or down depending on whether the fractional part of $x$ is greater or less than $r$. If $r$ is uniform random on $[0,1]$ then the expected value of $[x]_r$ is $x$.
\begin{enumerate}
    \item \label{step:definetilde} Define $\ket{\tilde{u}^{(x)}_i} = \ket{u^{(x)}_i} / \norm{u^{(x)}_i}$ if $\norm{u^{(x)}_i} \neq 0$, otherwise $\ket{\tilde{u}^{(x)}_i} = 0$.
    \item \label{step:pickr} Pick $r \in [0,1]$ uniformly at random.
    \item Pick random independent Gaussian vectors $\ket{g_1},...,\ket{g_{2k}}$ with independent components distributed as $\mathcal{N}(0,1)$.
    \item For each question $x$:
        \begin{enumerate}
            \item Set $s^{(x)}_i = \left[2k\cdot \norm{u^{(x)}_i}^2 \right]_r$. \label{step:picks}
            \item For each $i$ project $s^{(x)}_i$ vectors $\ket{g_1},...,\ket{g_{s^{(x)}_i}}$ to $\ket{\tilde{u}^{(x)}_i}$:
                \begin{align*}
                    \xi^{(x)}_{i,s} = \langle g_s \vert \tilde{u}^{(x)}_i \rangle , \quad s = 1,2,..., s^{(x)}_i
                \end{align*}
            \item Select the $\xi^{(x)}_{i,s}$ with the largest absolute value. Assign $a(x) = i$.
        \end{enumerate}
    \item Repeat the previous step for each question $y$ but with the vectors $\ket{v^{(y)}_j}$ to obtain $b(y)$.
\end{enumerate}

The intuition behind the algorithm is as follows. Similar to the $\epsilon=0$ case, the values $\xi^{(x)}_{i,s}$ and $\xi^{(y)}_{\pi_{xy}(i),s}$ will be close.
Vectors $\ket{u^{(x)}_i}$ and $\ket{u^{(x)}_j}$ for different answers $i\neq j$ are orthogonal and their corresponding values $\xi^{(x)}_{i,s}$ and $\xi^{(x)}_{j,s}$ are therefore independent.
For vectors with small norm, the values $\xi^{(x)}_{i,s}$ and the matching $\xi^{(x)}_{\pi_{xy}(i),s}$ will be less correlated. Therefore we sample more Gaussian vectors for answers corresponding to a high norm (step~\ref{step:picks}).

To prove Theorem~\ref{thm:uniquegames} we have to show that the result of the above rounding algorithm is a strategy with winning probability $1-\mathcal{O}(\sqrt{\epsilon \log k})$. This is exactly the result of Theorem 4.5 of~\cite{Charikar:2006} with the exception of the additional constraint mentioned before. This modification requires a different proof of Lemma 4.2 and 4.3 in~\cite{Charikar:2006} but leaves the remaining part of their proof unchanged. We therefore only prove these Lemma's and refer the reader to Section 4 of~\cite{Charikar:2006} for the remainder of the proof.

We adopt their definitions
\begin{align*}
    \epsilon_{xy}   &= \frac{1}{2}\sum_{i=1}^k \norm{ u^{(x)}_i - v^{(y)}_{\pi_{xy}(i)} }^2 ,\\
    \epsilon_{xy}^i &= \frac{1}{2}\norm{ \tilde{u}^{(x)}_i - \tilde{v}^{(y)}_{\pi_{xy}(i)} }^2 ,
\end{align*}
where $\tilde{u}^{(x)}_i$ and $\tilde{v}^{(y)}_i$ were defined in step~\ref{step:definetilde} of the rounding algorithm. Note that $\E_{x,y} \epsilon_{xy} \leq \epsilon$. 

\begin{lemma}[Originally Lemma 4.2] \label{lem:CharikarReproven}
    The probability that the rounding algorithm gives a correct assignment to the questions $x,y$ is $1-\mathcal{O}(\sqrt{\epsilon_{xy} \log k})$.
\end{lemma}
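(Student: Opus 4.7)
The plan is to adapt the proof of Lemma~4.2 in~\cite{Charikar:2006} so that it works without the dropped SDP constraint $\langle u^{(x)}_i \vert v^{(y)}_{\pi_{xy}(i)}\rangle \leq \norm{u^{(x)}_i}^2$, relying only on the properties listed before~\eqref{eq:winprob2}: orthogonality within each player, the sum-to-one of squared norms, and nonnegativity of cross inner products. The key structural observation is that, since both players use the same Gaussian vectors $g_1,\dots,g_{2k}$, the scalars $\xi^{(x)}_{i,s}$ and $\xi^{(y)}_{\pi_{xy}(i),s}$ are jointly Gaussian with
\begin{equation*}
\mathrm{Var}\bigl(\xi^{(x)}_{i,s}-\xi^{(y)}_{\pi_{xy}(i),s}\bigr) = \norm{\tilde{u}^{(x)}_i-\tilde{v}^{(y)}_{\pi_{xy}(i)}}^2 = 2\epsilon_{xy}^i,
\end{equation*}
so the two players' randomness is tightly coupled.

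Next I would decompose the disagreement event $\{b(y)\neq \pi_{xy}(a(x))\}$ into two contributions. Source~(a) is that the rounded counts differ, $s^{(x)}_i \neq s^{(y)}_{\pi_{xy}(i)}$, an event that happens with probability at most $2k\,\lvert \norm{u^{(x)}_i}^2 - \norm{v^{(y)}_{\pi_{xy}(i)}}^2\rvert$ over the uniform $r$ in step~\ref{step:pickr}. Source~(b) is that, conditional on matching counts, the arg-max flips because of the coupled discrepancies above. Since the maximum absolute value of $\Theta(k)$ standard Gaussians concentrates around $\sqrt{2\log(2k)}$, Gaussian anti-concentration applied to the gap between the top and runner-up shows the winner flips with probability $O\bigl(\sqrt{\epsilon_{xy}^i \log k}\bigr)$ per index. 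Summing both contributions over~$i$ and applying Cauchy--Schwarz in $i$ will yield the claimed $O(\sqrt{\epsilon_{xy}\log k})$ bound, provided that
\begin{equation*}
\sum_i \bigl\lvert \norm{u^{(x)}_i}^2-\norm{v^{(y)}_{\pi_{xy}(i)}}^2\bigr\rvert \quad\text{and}\quad \sum_i \norm{u^{(x)}_i}\,\norm{v^{(y)}_{\pi_{xy}(i)}}\,\sqrt{\epsilon_{xy}^i}
\end{equation*}
are both $O(\sqrt{\epsilon_{xy}})$.

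The main obstacle is precisely this last step, since~\cite{Charikar:2006} uses the dropped constraint to bound $\lvert\norm{u}^2-\norm{v}^2\rvert$ directly by $\norm{u-v}^2$. To circumvent it I would use the elementary identity $\norm{u-v}^2 = (\norm{u}-\norm{v})^2 + 2\norm{u}\norm{v}\,\epsilon^i_{xy}$ which, summed over $i$, gives
\begin{equation*}
2\epsilon_{xy} = \sum_i (\norm{u^{(x)}_i}-\norm{v^{(y)}_{\pi_{xy}(i)}})^2 + 2\sum_i \norm{u^{(x)}_i}\,\norm{v^{(y)}_{\pi_{xy}(i)}}\,\epsilon^i_{xy}.
\end{equation*}
Combined with $\sum_i \norm{u^{(x)}_i}^2 = \sum_i \norm{v^{(y)}_i}^2 = 1$ and two applications of Cauchy--Schwarz, this delivers both required bounds up to an absolute constant. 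I expect the trickiest technical point to be handling indices $i$ for which one of $\norm{u^{(x)}_i},\norm{v^{(y)}_{\pi_{xy}(i)}}$ is tiny while the other is not, because the normalization in $\tilde u,\tilde v$ can amplify the angular discrepancy $\epsilon_{xy}^i$; the factor $\norm{u^{(x)}_i}\norm{v^{(y)}_{\pi_{xy}(i)}}$ multiplying $\epsilon^i_{xy}$ in the identity is exactly what absorbs this blow-up into a weighted average that the identity controls directly.
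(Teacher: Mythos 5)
Your key algebraic identity,
\begin{equation*}
\norm{u^{(x)}_i - v^{(y)}_{\pi_{xy}(i)}}^2 = \bigl(\norm{u^{(x)}_i} - \norm{v^{(y)}_{\pi_{xy}(i)}}\bigr)^2 + 2\norm{u^{(x)}_i}\,\norm{v^{(y)}_{\pi_{xy}(i)}}\,\epsilon^i_{xy},
\end{equation*}
is correct, and the two estimates you deduce from it are both valid. In fact your estimate on $\sum_i\bigl|\norm{u^{(x)}_i}^2-\norm{v^{(y)}_{\pi_{xy}(i)}}^2\bigr|$ is, after unwinding, exactly what the paper proves in Lemma~\ref{lemma:sizeofM} (its modified version of Charikar--Makarychev--Makarychev's Lemma~4.3): one factors $\bigl|\norm{u}^2-\norm{v}^2\bigr| = \bigl|\norm{u}-\norm{v}\bigr|(\norm{u}+\norm{v})$, bounds $\bigl|\norm{u}-\norm{v}\bigr| \le \norm{u-v}$ (reverse triangle inequality, equivalent to dropping the cross-term in your identity), and applies Cauchy--Schwarz over $i$, getting $2\sqrt{2\epsilon_{xy}}$. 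So on that piece you and the paper coincide.

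Where the proposal is incomplete is in how these two estimates are supposed to yield the lemma. The paper does not rebuild the Gaussian argument from scratch: it keeps the $M$/$M_c$ decomposition of index--copy pairs $(i,s)$ and invokes Charikar et al.'s Theorem~4.1 and Lemma~4.4 as black boxes. The $M$/$M_c$ split is not an optional packaging choice; it is what makes the arg-max comparison well-posed, since the two players hold sequences of \emph{different lengths} $s^{(x)}_i$ and $s^{(y)}_{\pi_{xy}(i)}$ for each $i$. Your decomposition into ``(a) counts differ'' and ``(b) conditional on matching counts'' glosses over the fact that for a single question pair $(x,y)$ some indices $i$ may have matching counts while others do not; ``conditional on matching counts'' is not a single event you can condition on. The set $M$ in the paper captures precisely the portion of the sequence where both players' values are defined, and $M_c$ captures the overflow; your estimate on $\sum_i\bigl|\norm{u_i}^2-\norm{v_i}^2\bigr|$ is exactly what bounds $\E_r|M_c|$.

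The second, deeper gap is the line ``Gaussian anti-concentration applied to the gap between the top and runner-up shows the winner flips with probability $O\bigl(\sqrt{\epsilon^i_{xy}\log k}\bigr)$ per index.'' This is essentially an informal restatement of Charikar et al.'s Theorem~4.1, which is a nontrivial result about arg-max agreement for coupled Gaussian sequences. Crucially, that theorem gives an \emph{aggregate} bound involving $\frac{1}{|M|}\sum_{(i,s)\in M}\epsilon^i_{xy}$, not a per-index bound one then sums; a naive per-index bound followed by a union over indices would lose a factor that your weighted estimate $\sum_i\norm{u_i}\norm{v_i}\sqrt{\epsilon^i_{xy}}=O(\sqrt{\epsilon_{xy}})$ cannot recover on its own, because it is not a priori clear that each index contributes a flip probability weighted by $\norm{u_i}\norm{v_i}$. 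In the paper, that weighting emerges from $s^{(x)}_i \approx 2k\norm{u^{(x)}_i}^2$ through the averaging in Lemma~4.4 (Lemma~\ref{lemma:averageepsilon}), which the paper reuses unchanged because it does not depend on the dropped constraint. Your estimate (ii) is a nice parallel observation but is not plugged into a proof skeleton that would deliver the conclusion. To make your route rigorous you would either have to import Theorem~4.1 and Lemma~4.4 (at which point you are following the paper), or supply a complete, quantitative replacement for Theorem~4.1, which the sketch does not do.
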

\begin{proof}
    If $\epsilon_{xy} \geq 1/128$ then the statement follows trivially since this can be hidden in the big-O. Therefore assume $\epsilon_{xy} \leq 1/128$. Define
    \begin{align*}
        M   &= \left\{ (i,s) \;:\; i\in [k] \;,\; s \leq \min( s^{(x)}_i , s^{(y)}_{\pi_{xy}(i)} ) \right\} ,\\
        M_c &= \left\{ (i,s) \;:\; i\in [k] \;,\; \min( s^{(x)}_i , s^{(y)}_{\pi_{xy}(i)} ) < s \leq \max( s^{(x)}_i , s^{(y)}_{\pi_{xy}(i)} ) \right\} .
    \end{align*}
    The set $M$ contains the pairs $(i,s)$ for which both $\xi^{(x)}_{i,s}$ and $\xi^{(y)}_{\pi_{xy}(i),s}$ are defined and the set $M_c$ contains the pairs for which only one of these is defined.

    We need the following two lemmas to continue.
    \begin{lemma} \label{lemma:sizeofM}
        When $\epsilon_{xy} \leq 1/128$ then $\E_r[ |M_c| ] \leq 4k \sqrt{2\epsilon_{xy}}$ and $|M| \geq k/2$.
    \end{lemma}
    This was originally Lemma 4.3 and it stated $E_r[ |M_c| ] \leq 4k \epsilon_{xy}$.
    \begin{proof}
        The expected value of $|s^{(x)}_i - s^{(y)}_{\pi_{xy}(i)}|$ is given by
        \begin{align*}
            \E_r \left\vert \left[2k\cdot \norm{u^{(x)}_i}^2\right]_r - \left[2k\cdot \norm{v^{(y)}_{\pi_{xy}(i)}}^2\right]_r \right\vert = 2k \left\vert \norm{u^{(x)}_i}^2 - \norm{v^{(y)}_{\pi_{xy}(i)}}^2 \right\vert .
        \end{align*}
        By the triangle inequality
        \begin{align*}
            \left\vert \norm{u^{(x)}_i}^2 - \norm{v^{(y)}_{\pi_{xy}(i)}}^2 \right\vert
            &=
            \left\vert \norm{u^{(x)}_i} - \norm{v^{(y)}_{\pi_{xy}(i)}} \right\vert
            \left( \norm{u^{(x)}_i} + \norm{v^{(y)}_{\pi_{xy}(i)}} \right) \\
            &\leq
            \norm{u^{(x)}_i - v^{(y)}_{\pi_{xy}(i)}}
            \left( \norm{u^{(x)}_i} + \norm{v^{(y)}_{\pi_{xy}(i)}} \right) ,
        \end{align*}
        and by using Cauchy-Schwarz twice we have
        \begin{align*}
            \sum_{i=1}^k \norm{u^{(x)}_i - v^{(y)}_{\pi_{xy}(i)}} \left( \norm{u^{(x)}_i} + \norm{v^{(y)}_{\pi_{xy}(i)}} \right)
            &\leq \sqrt{\sum_{i=1}^k \norm{u^{(x)}_i - v^{(y)}_{\pi_{xy}(i)}}^2 } \sqrt{\sum_{i=1}^k \left( \norm{u^{(x)}_i} + \norm{v^{(y)}_{\pi_{xy}(i)}} \right)^2 } \\
            &= \sqrt{2\epsilon_{xy}} \sqrt{\sum_{i=1}^k \left( \norm{u^{(x)}_i}^2 + \norm{v^{(y)}_{\pi_{xy}(i)}}^2 + 2 \norm{u^{(x)}_i}\norm{v^{(y)}_{\pi_{xy}(i)}} \right) } \\
            &\leq \sqrt{2\epsilon_{xy}} \sqrt{1 + 1 + 2} = 2\sqrt{2\epsilon_{xy}} .
        \end{align*}
        The proof follows from $|M_c| = \sum_{i=1}^k |s^{(x)}_i - s^{(y)}_{\pi_{xy}(i)}|$.
        For the second part of the lemma, observe that
        \begin{align*}
            \min(s^{(x)}_i, s^{(y)}_{\pi_{xy}(i)})
            &\geq 2k \; \min(\norm{u^{(x)}_i}^2 , \norm{v^{(y)}_{\pi_{xy}(i)}}^2 ) - 1 \\
            &\geq 2k\left( \norm{u^{(x)}_i}^2 - \left\vert \norm{u^{(x)}_i}^2 - \norm{v^{(y)}_{\pi_{xy}(i)}}^2 \right\vert \right) - 1.
        \end{align*}
        Therefore we have
        \begin{align*}
            |M| = \sum_{i=1}^k \min(s^{(x)}_i, s^{(y)}_{\pi_{xy}(i)})
            &\geq \sum_{i=1}^k \left( 2k \;\norm{u^{(x)}_i}^2 - 2k\;\left\vert \norm{u^{(x)}_i}^2 - \norm{v^{(y)}_{\pi_{xy}(i)}}^2 \right\vert  - 1 \right) \\
            &\geq 2k - 4k\sqrt{2\epsilon_{xy}} - k \geq k/2,
        \end{align*}
        where we used $\epsilon_{xy} \leq 1/128$.
    \end{proof}
    \begin{lemma}\label{lemma:averageepsilon}
        The following inequality holds
        \begin{align*}
            \E_r \left[ \frac{1}{|M|} \sum_{(i,s)\in M} \epsilon_{xy}^i \right] \leq 4 \epsilon_{xy}
        \end{align*}
    \end{lemma}
    \begin{proof}
        This is Lemma 4.4 in~\cite{Charikar:2006}.
    \end{proof}
    We now continue the proof of Lemma~\ref{lem:CharikarReproven}.
    First consider a fixed value of $r$ (picked in step~\ref{step:pickr} of the rounding algorithm.
    Consider the sequences $\xi^{(x)}_{i,s}$ and $\xi^{(y)}_{\pi_{xy}(i),s}$ where the indices $(i,s)$ run over all $(i,s)\in M$. We apply Theorem 4.1 of~\cite{Charikar:2006} to these sequences and get that the probability that the largest absolute value in the first sequence has the same index as the largest absolute value in the second sequence is
    \begin{align*}
        1 - \mathcal{O}\left(\sqrt{ \log|M| \cdot \frac{1}{|M|} \sum_{(i,s)\in M} \epsilon^{i}_{xy} }\right) .
    \end{align*}
    By Jensen's inequality we have
    \begin{align*}
        \E_r \left[ 1 - \mathcal{O}\left(\sqrt{ \log|M| \cdot \frac{1}{|M|} \sum_{(i,s)\in M} \epsilon^{i}_{xy} }\right) \right]
        &\geq 1-\mathcal{O}\left( \sqrt{ \E_r\left[ \log|M| \cdot \frac{1}{|M|} \sum_{(i,s)\in M} \epsilon^{i}_{xy} \right] } \right) \\
        &\geq 1-\mathcal{O}\left( \sqrt{ \epsilon_{xy} \log k } \right)
    \end{align*}
    where the second inequality follows from $|M| \leq 3k$ and Lemma~\ref{lemma:averageepsilon}.
    In the rounding algorithm, the largest $\xi^{(x)}_{i,s}$ is picked not only among the $(i,s)\in M$ but also $(i,s)\in M_c$. However, the probability that the index for the largest value is in $M_c$ is at most
    \begin{align*}
        \E_r \left[ \frac{|M_c|}{|M|} \right] \leq \frac{4k\sqrt{2\epsilon_{xy}}}{k/2} = 8\sqrt{2\epsilon_{xy}} ,
    \end{align*}
    by Lemma~\ref{lemma:sizeofM}. Therefore by the union bound, the probability that the answers match is at least
    \begin{align*}
        1-\mathcal{O}(\sqrt{\epsilon_{xy}\log k}) - 8 \sqrt{2\epsilon_{xy}} = 1-\mathcal{O}(\sqrt{\epsilon_{xy}\log k}) ,
    \end{align*}
    which finishes the proof.
\end{proof}
With these modified lemmas, Theorem 4.5 of~\cite{Charikar:2006} shows that the rounding algorithm gives a classical strategy that wins the game with probability $1-\mathcal{O}(\sqrt{\epsilon\log k})$.

\section*{Acknowledgements}

We thank Peter H\o yer, Serge Massar, and Henry Yuen for useful discussions. We thank Shravas Rao for providing a proof of one of the lemmas.

\bibliographystyle{alpha}
\bibliography{main}

\newcommand{\etalchar}[1]{$^{#1}$}
\begin{thebibliography}{BOGKW88}

\bibitem[BBC{\etalchar{+}}01]{Beals:2001}
Robert Beals, Harry Buhrman, Richard Cleve, Michele Mosca, and Ronald~{de}
  Wolf.
\newblock Quantum lower bounds by polynomials.
\newblock {\em Journal of the ACM}, 48(4):778--797, 2001.
\newblock Earlier version in FOCS'98.

\bibitem[BBLV13]{Briet:2013}
Jop Bri{\"e}t, Harry Buhrman, Troy Lee, and Thomas Vidick.
\newblock {Multipartite entanglement in {XOR} games}.
\newblock {\em Quantum Information \& Computation}, 13(3-4):334--360, 2013.

\bibitem[BCL{\etalchar{+}}06]{Borgs:2006}
Christian Borgs, Jennifer Chayes, L{\'a}szl{\'o} Lov{\'a}sz, Vera~T S{\'o}s,
  and Katalin Vesztergombi.
\newblock Counting graph homomorphisms.
\newblock In {\em Topics in discrete mathematics}, pages 315--371. Springer,
  2006.

\bibitem[Bel64]{Bell:1964}
J.S. Bell.
\newblock On the {E}instein {P}odolsky {R}osen paradox.
\newblock {\em Physics}, 1(3):195--200, 1964.

\bibitem[BMMN11]{BMMN:2011}
Mark Braverman, Konstantin Makarychev, Yury Makarychev, and Assaf Naor.
\newblock The grothendieck constant is strictly smaller than {K}rivine’s
  bound.
\newblock In {\em IEEE 52nd Annual Symposium on Foundations of Computer Science
  (FOCS '11)}, pages 453--462, 2011.

\bibitem[BOGKW88]{BGKW:1988}
Michael Ben-Or, Shafi Goldwasser, Joe Kilian, and Avi Wigderson.
\newblock Multi-prover interactive proofs: how to remove intractability
  assumptions.
\newblock In {\em 20th Annual ACM Symposium on Theory of Computing (STOC '88)},
  pages 113--131, 1988.

\bibitem[Boy04]{Boyer:2004}
Michel Boyer.
\newblock Extended {GHZ} n-player games with classical probability of winning
  tending to 0.
\newblock {\em arXiv preprint arXiv:0408090}, 09 2004.

\bibitem[BV13]{BrietVidick:2013}
Jop Bri{\"e}t and Thomas Vidick.
\newblock Explicit lower and upper bounds on the entangled value of multiplayer
  {XOR} games.
\newblock {\em Communications in Mathematical Physics}, 321(1):181--207, 2013.

\bibitem[CHPS12]{Conlon:2012}
David Conlon, Hi{\^e}p H{\`a}n, Yury Person, and Mathias Schacht.
\newblock Weak quasi-randomness for uniform hypergraphs.
\newblock {\em Random Structures \& Algorithms}, 40(1):1--38, 2012.

\bibitem[CHTW04]{CHTW:2004}
Richard Cleve, Peter H{\o}yer, Benjamin Toner, and John Watrous.
\newblock Consequences and limits of nonlocal strategies.
\newblock In {\em 19th Annual {IEEE} Conference on Computational Complexity
  {(CCC} 2004), 21-24 June 2004, Amherst, MA, {USA}}, pages 236--249, 2004.

\bibitem[CL17]{ConlonLee:2017}
David Conlon and Joonkyung Lee.
\newblock Finite reflection groups and graph norms.
\newblock {\em Advances in Mathematics}, 315:130--165, 2017.

\bibitem[CMM06]{Charikar:2006}
Moses Charikar, Konstantin Makarychev, and Yury Makarychev.
\newblock Near-optimal algorithms for unique games.
\newblock In {\em Proceedings of the Thirty-eighth Annual ACM Symposium on
  Theory of Computing}, STOC '06, pages 205--214, New York, NY, USA, 2006. ACM.

\bibitem[CSUU08]{cleve2008perfect}
Richard Cleve, William Slofstra, Falk Unger, and Sarvagya Upadhyay.
\newblock Perfect parallel repetition theorem for quantum xor proof systems.
\newblock {\em Computational Complexity}, 17(2):282--299, 2008.

\bibitem[DHVY16]{Dinur:2016}
Irit Dinur, Prahladh Harsha, Rakesh Venkat, and Henry Yuen.
\newblock Multiplayer parallel repetition for expander games.
\newblock {\em arXiv preprint arXiv:1610.08349}, 2016.

\bibitem[DMR09]{Dinur:2009}
Irit Dinur, Elchanan Mossel, and Oded Regev.
\newblock Conditional hardness for approximate coloring.
\newblock {\em {SIAM} J. Comput.}, 39(3):843--873, 2009.

\bibitem[Gro53]{Grothendieck:1953}
Alexander Grothendieck.
\newblock {R\'esum\'e de la th\'eorie m\'etrique des produits tensoriels
  topologiques (French)}.
\newblock {\em Bol.\ Soc.\ Mat.\ S\~{a}o Paulo}, 8:1--79, 1953.

\bibitem[Hal27]{Hall:1927}
Philip Hall.
\newblock The distribution of means for samples of size {N} drawn from a
  population in which the variate takes values between 0 and 1, all such values
  being equally probable.
\newblock {\em Biometrika}, 19(3/4):240--245, 1927.

\bibitem[Hat09]{HatamiPhD}
Hamed Hatami.
\newblock {\em On generalizations of Gowers norms}.
\newblock PhD thesis, University of Toronto, 2009.

\bibitem[Hat10]{Hatami:2010}
Hamed Hatami.
\newblock Graph norms and {Sidorenko's} conjecture.
\newblock {\em Israel Journal of Mathematics}, 175(1):125--150, 2010.

\bibitem[IKP{\etalchar{+}}08]{Ito:2008}
Tsuyoshi Ito, Hirotada Kobayashi, Daniel Preda, Xiaoming Sun, and Andrew C.~C.
  Yao.
\newblock Generalized {T}sirelson inequalities, commuting-operator provers, and
  multi-prover interactive proof systems.
\newblock In {\em Proceedings of the 2008 IEEE 23rd Annual Conference on
  Computational Complexity}, CCC '08, pages 187--198. IEEE Computer Society,
  2008.

\bibitem[Irw27]{Irwin:1927}
J.~O. Irwin.
\newblock On the frequency distribution of the means of samples from a
  population having any law of frequency with finite moments, with special
  reference to pearson's type ii.
\newblock {\em Biometrika}, 19(3/4):225--239, 1927.

\bibitem[Kho02]{Khot:2002}
Subhash Khot.
\newblock On the power of unique 2-prover 1-round games.
\newblock In {\em Proceedings of the Thiry-fourth Annual ACM Symposium on
  Theory of Computing (STOC '02)}, STOC '02, pages 767--775, New York, NY, USA,
  2002. ACM.

\bibitem[KKMO07]{Khot:2007}
Subhash Khot, Guy Kindler, Elchanan Mossel, and Ryan O'Donnell.
\newblock Optimal inapproximability results for {MAX-CUT} and other 2-variable
  csps?
\newblock {\em {SIAM} J. Comput.}, 37(1):319--357, 2007.

\bibitem[KLL{\etalchar{+}}15]{KLLRX:2015}
Iordanis Kerenidis, Sophie Laplante, Virginie Lerays, J{\'e}r{\'e}mie Roland,
  and David Xiao.
\newblock Lower bounds on information complexity via zero-communication
  protocols and applications.
\newblock {\em SIAM J. Comp.}, 44(5), 2015.

\bibitem[KR08]{Khot:2008}
Subhash Khot and Oded Regev.
\newblock Vertex cover might be hard to approximate to within 2-epsilon.
\newblock {\em J. Comput. Syst. Sci.}, 74(3):335--349, 2008.

\bibitem[Kri77]{Krivine:1977}
Jean-Louis Krivine.
\newblock Sur la constante de grothendieck.
\newblock {\em C. R. Acad. Sci. Paris S{\'e}r. A-B}, 284(8):A445–A446, 1977.

\bibitem[KRT08]{Kempe:2008}
Julia Kempe, Oded Regev, and Ben Toner.
\newblock Unique games with entangled provers are easy.
\newblock In {\em 49th Annual {IEEE} Symposium on Foundations of Computer
  Science, {FOCS} 2008, October 25-28, 2008, Philadelphia, PA, {USA}}, pages
  457--466, 2008.

\bibitem[LS09]{LinalShraibman:2009}
Nati Linial and Adi Shraibman.
\newblock Lower bounds in communication complexity based on factorization
  norms.
\newblock {\em Random Structures and Algorithms}, 34:368--394, 2009.

\bibitem[PGWP{\etalchar{+}}08]{Perez-Garcia:2008}
David P\'{e}rez-Garc\'{i}a, Michael Wolf, Carlos Palazuelos, Ignacio
  Villanueva, and Marius Junge.
\newblock {Unbounded violation of tripartite Bell inequalities}.
\newblock {\em Communications in Mathematical Physics}, 279:455, 2008.

\bibitem[Raz99]{Raz:1999}
Ran Raz.
\newblock Exponential separation of quantum and classical communication
  complexity.
\newblock In {\em 31st annual ACM symposium on theory of computing}, pages
  358--367, 1999.

\bibitem[SZ08]{ShiZhu:2008}
Yaoyun Shi and Yufan Zhu.
\newblock Tensor norms and the classical communication complexity of nonlocal
  quantum measurement.
\newblock {\em SIAM J. Comput.}, 38(3):753--766, 2008.

\bibitem[Tao12]{tao2012higher}
Terence Tao.
\newblock {\em Higher order Fourier analysis}, volume 142.
\newblock American Mathematical Soc., 2012.

\bibitem[Tsi87]{Tsirelson:85b}
Boris~S. Tsirelson.
\newblock Quantum analogues of the \uppercase{B}ell inequalities.
  \uppercase{T}he case of two spatially separated domains.
\newblock {\em J. Soviet Math.}, 36:557--570, 1987.

\bibitem[TZ12]{TaoZiegler:2012}
Terence Tao and Tamar Ziegler.
\newblock The inverse conjecture for the {G}owers norm over finite fields in
  low characteristic.
\newblock {\em Annals of Combinatorics}, 16(1):121--188, 2012.

\bibitem[WHKN18]{Watts:2018}
Adam~Bene Watts, Aram~W Harrow, Gurtej Kanwar, and Anand Natarajan.
\newblock Algorithms, bounds, and strategies for entangled {XOR} games.
\newblock {\em arXiv preprint arXiv:1801.00821}, 2018.

\bibitem[Zuk93]{Zuk:1993}
M.~Zukowski.
\newblock Bell theorem involving all settings of measuring apparatus.
\newblock {\em Phys. Lett. A}, 177(290), 1993.

\end{thebibliography}

\end{document}